\pgfplotsset{compat=1.15}
\definecolor{pastelyellow}{rgb}{0.99, 0.99, 0.59}
\definecolor{aqua}{rgb}{0.0, 1.0, 1.0} 
\definecolor{aquamarine}{rgb}{0.5, 1.0, 0.83} 
\definecolor{bananayellow}{rgb}{1.0, 0.88, 0.21}
\definecolor{burgundy}{rgb}{0.5, 0.0, 0.13}
\definecolor{ao(english)}{rgb}{0.0, 0.5, 0.0}
\definecolor{Gray}{gray}{0.9}
\newtheorem{theorem}{Theorem}[section]
\newtheorem{proposition}[theorem]{Proposition}
\newtheorem{lemma}[theorem]{Lemma}
\newtheorem{corollary}[theorem]{Corollary}
\theoremstyle{definition}
\newtheorem{definition}[theorem]{Definition}
\newtheorem{example}[theorem]{Example}
\newtheorem{remark}[theorem]{Remark}
\newcommand{\ba}{\mathbf{a}}
\newcommand{\bb}{\mathbf{b}}
\newcommand{\bc}{\mathbf{c}}
\newcommand{\bd}{\mathbf{d}}
\newcommand{\bD}{\mathbf{D}}
\newcommand{\be}{\mathbf{e}}
\newcommand{\bv}{\mathbf{v}}
\newcommand{\bo}{\mathbf{0}}
\newcommand{\cC}{\mathcal{C}}
\newcommand{\cD}{\mathcal{D}}
\newcommand{\cF}{\mathcal{F}}
\newcommand{\cG}{\mathcal{G}}
\newcommand{\cU}{\mathcal{U}}
\newcommand{\cV}{\mathcal{V}}
\newcommand{\cW}{\mathcal{W}}
\newcommand{\type}{\mathbf{t}}
\newcommand{\bbZ}{{\mathbb Z}}
\newcommand{\bbF}{{\mathbb F}} 
\renewcommand{\geq}{\geqslant}
\renewcommand{\leq}{\leqslant}
\begin{document}

\renewcommand{\headrulewidth}{0pt}

\rhead{ }
\chead{\scriptsize {Flag Codes: Distance Vectors and Cardinality Bounds}}
\lhead{ }

\title{Flag Codes: Distance Vectors and \\ Cardinality Bounds
}

\author{\renewcommand\thefootnote{\arabic{footnote}}
Clementa Alonso-Gonz\'alez\footnotemark[1],\,  Miguel \'Angel Navarro-P\'erez\footnotemark[1], \\ 
\renewcommand\thefootnote{\arabic{footnote}} 
 Xaro Soler-Escriv\`a\footnotemark[1]}

\footnotetext[1]{Dpt.\ de Matem\`atiques, Universitat d'Alacant, 
Sant Vicent del Raspeig, Ap.\ Correus 99, E -- 03080 Alacant.
\\E-mail adresses: \href{mailto:clementa.alonso@ua.es}{clementa.alonso@ua.es}, \href{mailto:miguelangel.np@ua.es}{miguelangel.np@ua.es}, \href{mailto:xaro.soler@ua.es}{xaro.soler@ua.es}.}

{\small \date{\today}} 

\maketitle
%{ \footnotesize 
%\tableofcontents
%}

\begin{abstract}
Given $\bbF_q$ the finite field with $q$ elements and an integer $n\geq 2$, a \emph{flag} is a sequence of nested subspaces of $\bbF_q^n$ and  a \emph{flag code} is a nonempty set of flags. In this context, the distance between flags is the sum of the corresponding subspace distances. Hence, a given flag distance value might be obtained by many different combinations. To capture such a variability, in the paper at hand, we introduce the notion of \emph{distance vector} as an algebraic object intrinsically associated to a flag code that encloses much more information than the distance parameter itself. Our study of the flag distance by using this new tool allows us to provide a fine description of the structure of flag codes as well as to derive bounds for their maximum possible size once the minimum distance and dimensions are fixed.
\end{abstract}

\textbf{Keywords:} Network coding, flag codes, flag distance, bounds.

%%%%%%%%%%%%%%%%%%%%
%                  %
%   Introduction   %
%                  %
%%%%%%%%%%%%%%%%%%%%

\section{Introduction}\label{sec: Introduction}

\emph{Network coding} was introduced in \cite{AhlsCai00} as a new method for sending information within networks modelled as acyclic multigraphs with possibly several senders and receivers, where intermediate nodes are allowed to send linear combinations of the received vectors, instead of simply routing them. In \cite{KoetKschi08}, the reader can find the first algebraic approach to network coding through non-coherent networks, i.e., those which their topology does not need to be known. In the same paper, K\"otter and Kschischang present \emph{subspace codes} as the most appropriate codes to this situation. To be precise, if $\bbF_q$ is the finite field of $q$ elements (with $q$ a prime power) and we consider a positive integer $n\geq 2$, a \emph{subspace code} is a nonempty collection of $\bbF_q$-vector subspaces of $\bbF_q^n$. When every codeword has the same dimension, say $1\leq k< n$, we speak about \emph{constant dimension codes}. In this context, we use the \emph{subspace distance}, denoted by $d_S$ (see \cite{KoetKschi08}). Constant dimension codes have been widely studied in the last decade. See, for instance, \cite{TrautRosen18} and the references therein.

Size and minimum distance are the most important parameters associated to an error-correcting code. The first one gives us the number of different messages that can be encoded. The second one is related with the error-correction capability of the code. According to this, there are two central problems when working with constant dimension codes. On the one hand, the study and construction of codes having the maximum possible distance for their dimension (see, for instance, \cite{GoManRo2012,GoRa2014,HorTraut2016,ManGorRos2008}). On the other hand, determining (or giving bounds for) the value $A_q(n, d, k)$, that is, the maximum possible size of a constant dimension code in $\cG_q(k, n)$ with minimum distance equal to $d$, is an interesting question that has led to many research works (see \cite{TablesSubspaceCodes,KohnKurz08,Kurz17,XiaFu09}, for instance).

In \cite{LiebNebeVaz2018}, the authors propose the use of \emph{flag codes} in network coding for the first time. This class of codes generalizes constant dimension codes and represents a possible alternative to obtain codes with good parameters in case that neither $n$ nor $q$ could be increased.  Flags are objects coming from classic linear algebra defined as follows. Given integers $1\leq t_1 < \dots < t_r < n$, a \emph{flag} of type $\type=(t_1, \dots, t_r)$ on $\bbF_q^n$ is a  sequence $\cF=(\cF_1,\dots, \cF_r)$ of nested subspaces $\cF_i$ of $\bbF_q^n$ such that $\dim(\cF_i)=t_i$, for every $1\leq i\leq r$. In this setting, codewords are flags of the prescribed type vector $\type$. As for constant dimension codes, describing the family of flag codes attaining the maximum possible distance for their type (\emph{optimum distance flag codes}) is a central question that has been addressed in \cite{CasoPlanarOrb, CasoPlanar, CasoGeneral, CasoGeneralOrb}. On the other hand, obtaining bounds for the value $A^f_q(n, d, \type)$, i.e., the maximum possible size of flag codes of type $\type$ on $\bbF_q^n$ and minimum distance equal to $d$, is also an important problem that, up to now, has only been addressed in the work \cite{Kurz20}, where the author focuses on the \emph{full type vector} $(1, \dots, n-1)$. The current paper represents a contribution in this direction. 

Given two flags $\cF, \cF'$ of type $\type$ on $\bbF_q^n$, their \emph{flag distance} is defined as $d_f(\cF, \cF')=\sum_{i=1}^r d_S(\cF_i, \cF'_i)$. This definition implies that a flag distance value might be obtained as different combinations of subspace distances and it suggests that the way we obtain the flag distance is relevant information to take into account beyond the proper numerical value. We deal with this question by introducing  the concept of \emph{distance vector} $\bd(\cF, \cF)=(d_S(\cF_1, \cF'_1), \dots, d_S(\cF_r, \cF'_r))$ associated to the pair of flags $\cF$ and $\cF'.$ This is an algebraic object strongly related not only to the value of the distance $d_f(\cF, \cF')$ but also to the nested linear structure of these flags. With its help, we develop techniques that allow us to study both the cardinality and the minimum distance of flag codes. First, the study of distance vectors will allow us to determine how the flag distance fluctuates when we consider flags sharing a given number of subspaces. Hence, we investigate the structure of flag codes in which different flags do not share simultaneously their subspaces of a prescribed set of dimensions. This approach leads us both to derive some structural properties of the flag code and to obtain upper bounds for the value $A_q^f(n,d,\type)$, strongly based on the maximum number of subspaces that different flags in a flag code of type $\type$ and distance $d$ can share. 

The paper is organized as follows. In Section \ref{sec: prelim}, we recall some definitions and known facts related with constant dimension codes and flag codes. In Section \ref{sec:  flag distance}, the notion of distance vector and a characterization of them are presented.  Section \ref{sec: remarkable values of the distance} is devoted to study the flag distance between flags of the same type that share certain subspaces. In Section \ref{sec: disjointness}, we generalize the notion of \emph{disjointness} introduced in \cite{CasoPlanar} and use the results obtained in the previous section in order to deduce structural properties of a code by simply looking at its minimum distance. In Section \ref{sec: bounds}, we apply the concepts and results in Sections \ref{sec: remarkable values of the distance} and \ref{sec: disjointness} to extract bounds for the values $A_q^f(n, d, \type)$. Last, Section \ref{sec: example} is dedicated to developing a very complete example that illustrates in detail the techniques previously discussed.

%%%%%%%%%%%%%%%%%%%%
%                  %
% Flag codes       %
%                  %
%%%%%%%%%%%%%%%%%%%%

\section{Preliminaries}\label{sec: prelim}
In this section we recall some known facts on subspace and flag codes that we need in this paper. We start fixing some notation. Let $q$ be a prime power and consider the finite field $\bbF_q$ with $q$ elements. For every positive integer $n\geq 2$, we write $\bbF_q^n$ to denote the $n$-dimensional vector space over the field $\bbF_q$. Given a positive integer $k\leq n$, the \emph{Grassmann variety}, or simply the Grassmannian, of dimension $k$ is the set $\cG_q(k, n)$ of $k$-dimensional vector subspaces of $\bbF_q^n$. It is well known (see \cite{KoetKschi08}) that
\begin{equation}\label{eq: card grassmannian}
|\cG_q(k, n)|= \begin{bmatrix} n \\ k \end{bmatrix}_q :=   \frac{(q^n-1)\dots (q^{n-k+1}-1)}{(q^k-1)\dots (q-1)}. 
\end{equation}
The Grassmannian can be seen as a metric space endowed with the \emph{subspace distance} defined as
\begin{equation}
d_S(\cU, \cV)= \dim(\cU+\cV)-\dim(\cU\cap\cV)= 2(k - \dim(\cU\cap\cV)).
\end{equation}
for all $\cU, \cV\in\cG_q(k, n)$. A \emph{constant dimension code} $\cC$ in $\cG_q(k, n)$ is a nonempty collection of $k$-dimensional vector subspaces of $\bbF_q^n$. These codes were introduced in \cite{KoetKschi08} and studied many papers (see \cite{TrautRosen18} and references therein for further information). The \emph{minimum distance} of $\cC$ is the value
$$
d_S(\cC)=\min\{ d_S(\cU, \cV) \ | \ \cU, \cV\in\cC, \ \cU\neq \cV \},
$$
whenever $|\cC|\geq 2$. If $|\cC|= 1$, we put $d_S(\cC)=0$. In any case, the subspace distance is an even integer such that
\begin{equation}\label{eq: bound subspace distance}
0\leq d_S(\cC)\leq \left\lbrace
\begin{array}{lll}
2k     & \text{if} & 2k\leq n,\\
2(n-k) & \text{if} & 2k\geq n.
\end{array}
\right.
\end{equation}
The study and construction of constant dimension codes attaining this upper bound for the distance has been addressed in several papers (see \cite{GoRa2014,ManGorRos2008}, for instance). Another important problem is the one of determining (or giving bounds for) the value $A_q(n, d, k)$, which denotes the maximum possible size for constant dimension codes in $\cG_q(k, n)$ having prescribed minimum distance $d$. The reader can find constructions of constant dimension codes as well as lower and upper bounds for $A_q(n, d, k)$ in \cite{EtzVar2011,TablesSubspaceCodes,HorTraut2016,KohnKurz08,Kurz17,SilEtz2011,SilTraut2013,TrautRosdn2010,TrautRosen18,XiaFu09}.
As a generalization of constant dimension codes, in \cite{LiebNebeVaz2018}, the authors introduced the use of flag codes in network Coding. Let us recall some basic definitions in this matter.

Given integers $1\leq t_1 < \dots < t_r < n$, a \emph{flag} of type $\type=(t_1, \dots, t_r)$ on $\bbF_q^n$ is a sequence $\cF=(\cF_1, \dots, \cF_r)$ of nested subspaces $\cF_i$ of $\bbF_q^n$ such that $\dim(\cF_i)=t_i$, for every $1\leq i\leq r$. The vector $(1, \dots, n-1)$ is called  the \emph{full type vector} and flags of this type are known as \emph{full flags}.

Throughout the rest of the paper, we will write $\type$ to denote an arbitrary but fixed type vector $\type=(t_1, \dots, t_r).$ The \emph{flag variety}  $\cF_q(\type, n)$ is the set of all the flags of type $\type$ on $\bbF_q^n$. This variety contains exactly
\begin{equation}\label{eq: card flags}
|\cF_q(\type, n)| = \begin{bmatrix} n \\ t_1 \end{bmatrix}_q \begin{bmatrix} n -t_1 \\ t_2-t_1 \end{bmatrix}_q \cdots \begin{bmatrix} n-t_{r-1} \\ n-t_r \end{bmatrix}_q 
\end{equation}
elements (see \cite{Kurz20}) and it can be equipped with the \emph{flag distance}, computed as
\begin{equation}\label{def: flag dist}
d_f(\cF, \cF')= \sum_{i=1}^r d_S(\cF_i, \cF'_i),
\end{equation}
for every pair of flags $\cF, \cF'\in\cF_q(\type, n).$  

A \emph{flag code} $\cC$ of type $\type$ on $\bbF_q^n$ is a nonempty subset of $\cF_q(\type, n)$. We can naturally associate to it a family of $r$ constant dimension codes by projection. For every $1\leq i\leq r$, consider the map
\begin{equation}\label{def: projection}
p_i : \cF_q(\type, n) \longrightarrow \cG_q(t_i, n)
\end{equation}
defined as $p_i((\cF_1, \dots, \cF_r))=\cF_i,$ for every $(\cF_1,\dots, \cF_r)\in \cF_q(\type, n)$. With this notation, the \emph{$i$-th projected code} $\cC_i$ of the flag code $\cC$ is the constant dimension code $\cC_i=p_i(\cC)\subseteq \cG_q(t_i, n),$ consisting of all the $i$-th subspaces of flags in $\cC$.

If $\cC\subseteq\cF_q(\type, n)$ is a flag code with $|\cC|\geq 2$,  its \emph{minimum distance} is defined as
$$
d_f(\cC)=\min\{d_f(\cF, \cF') \ | \ \cF, \cF'\in\cC, \ \cF\neq \cF'\}
$$
and, if $|\cC|=1$, we put $d_f(\cC)=0$. Notice that, by means of (\ref{eq: bound subspace distance}), we can easily deduce that $d_f(\cC)$ is an even integer such that
\begin{equation}\label{eq: max dist general}
0 \leq d_f(\cC) \leq  2\left(\sum_{t_i \leq \left\lfloor \frac{n}{2}\right\rfloor} t_i + \sum_{t_i > \left\lfloor \frac{n}{2}\right\rfloor} (n-t_i)\right).
\end{equation}
When working with full flag codes, the previous bound becomes
\begin{equation}\label{eq: max distance full}
0 \leq d_f(\cC) \leq \left\lbrace
\begin{array}{cccl}
\frac{n^2}{2}   & \text{if} & n &  \text{is even}, \\
 & & & \\ [-1em]
\frac{n^2-1}{2} & \text{if} & n &  \text{is odd}.
\end{array}
\right.
\end{equation}

In the flag codes setting, we write $A_q^f(n, d, \type)$ to denote the maximum attainable size for a flag code in $\cF_q(\type, n)$ with minimum distance equal to $d$. In case of working with full flags, we drop the type vector and simply write $A_q^f(n, d)$. This notation was recently introduced by Kurz in \cite{Kurz20}. In that work, the author provided techniques to upper and lower bound these values in the full type case. Moreover, an exhaustive list of exact values of $A_q^f(n, d)$ is also given for small values of $n$.

\section{Flag distance versus distance vectors}\label{sec:  flag distance}
As seen in Section \ref{sec: prelim}, the flag distance extends, in some sense, the subspace distance. However, since it is defined as a sum, a particular flag distance value might be attained by adding different combinations of subspace distances. This makes that the minimum distance of a flag code will have associated some of the possible combinations (maybe all of them). In order to clarify this fact, in this section, we introduce the concept of \emph{distance vector} to better represent how the distance between different flags is distributed among their subspaces. 

\begin{definition}\label{def: distance vector associated to a pair of flags}
Given two different flags $\cF, \cF'$ of type $\type$ on $\bbF_q^n$, their associated \emph{distance vector} is
$$
\bd(\cF, \cF')=(d_S(\cF_1, \cF_1'), \dots, d_S(\cF_r, \cF_r')) \in 2\bbZ^r.
$$
\end{definition}
Notice that the sum of the components of $\bd(\cF, \cF')$ is the flag distance $d_f(\cF, \cF')$ defined in (\ref{def: flag dist}). Given a positive integer $n\geq 2$ and a type vector $\type$, we denote by $D^{(\type,n)}$ the maximum possible value of the flag distance in $\cF_q(\type,n)$ that, as a consequence of (\ref{eq: max dist general}), is
\begin{equation}\label{eq: D^(type,n)}
D^{(\type,n)}=2\left(\sum_{t_i \leq \left\lfloor \frac{n}{2}\right\rfloor} t_i + \sum_{t_i > \left\lfloor \frac{n}{2}\right\rfloor} (n-t_i)\right).
\end{equation}
In particular, when working with the full type vector, we simply write
\begin{equation}\label{eq: D^n}
D^n=\left\lbrace
\begin{array}{cccl}
\frac{n^2}{2}   & \text{if} & n &  \text{is even}, \\
 & & & \\ [-1em]
\frac{n^2-1}{2} & \text{if} & n &  \text{is odd}
\end{array}
\right.
\end{equation} 
to denote the maximum possible distance between full flags on $\bbF_q^n$ (see (\ref{eq: max distance full})). For technical reasons, even if we work with $n \geq 2$, we extend this definition to the case $n=1$ and put $D^1=0$.

From now on, we write $d$ to denote an even integer such that $0\leq d\leq D^{(\type,n)}$. Observe that, under these conditions, we can always find flags $\cF, \cF'\in\cF_q(\type,n)$ such that $d_f(\cF, \cF')=d$. Hence, such a value $d$ represents the possible values for the flag distance in $\cF_q(\type,n)$. Let us study in which ways this distance value $d$ can be obtained.

\begin{definition}
Let $d$ be an even integer such that $0\leq d\leq D^{(\type,n)}$. We define \emph{the set of distance vectors associated to $d$ for the flag variety $\cF_q(\type, n)$} as 
$$
\cD(d, \type, n) = \{ \bd(\cF, \cF') \ | \ \cF, \cF'\in\cF_q(\type, n), \ d_f(\cF, \cF')= d  \} \subseteq 2\bbZ^r.
$$
On the other hand, the set of \emph{distance vectors for the flag variety $\cF_q(\type, n)$} is 
$$
\cD(\type, n) = \{ \bd(\cF, \cF') \ | \ \cF, \cF'\in\cF_q(\type, n)\} \subseteq 2\bbZ^r.
$$
and it holds
$$
\cD(\type, n) = \bigcup_{d} \cD(d, \type, n),
$$
where $d$ takes all the even integers between $0\leq d\leq D^{(\type, n)}$. When working with the full flag variety, we drop the type vector and simply write  $\cD(d, n)$ and $\cD(n),$ respectively.
\end{definition}

The next result reflects that, for every choice of the type vector, the set $\cD(\type, n)$ can be obtained from $\cD(n)$ by using the projection
\begin{equation}\label{eq: map pi type}
\begin{array}{rcccc}
\pi_{\type} & : &  \mathbb{Z}^{n-1} & \longrightarrow & \mathbb{Z}^r\\
			&   &  (v_1, \dots, v_{n-1}) &\longmapsto & (v_{t_1}, \dots, v_{t_r}).
\end{array}
\end{equation}

\begin{proposition}\label{prop: projection dist vect}
Consider a type vector $\type$ and the projection map $\pi_{\type}$ defined in (\ref{eq: map pi type}). It holds
$$
\pi_{\type}(\cD(n))= \cD(\type, n),
$$
i.e., distance vectors for an arbitrary flag variety can be obtained by projection from (possibly several) distance vectors for the full flag variety.  
\end{proposition}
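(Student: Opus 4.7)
The plan is to establish the claimed equality by double inclusion, both directions following essentially from the definitions. The underlying idea is that truncating a full flag to the dimensions appearing in $\type$, and extending a flag of type $\type$ to a full flag, both leave untouched the subspace distances at the positions $t_1, \ldots, t_r$.

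For the inclusion $\pi_{\type}(\cD(n)) \subseteq \cD(\type, n)$, I would start with an arbitrary pair of full flags $\cF = (\cF_1, \ldots, \cF_{n-1})$ and $\cF' = (\cF'_1, \ldots, \cF'_{n-1})$ on $\bbF_q^n$, and consider their truncations $\cG = (\cF_{t_1}, \ldots, \cF_{t_r})$ and $\cG' = (\cF'_{t_1}, \ldots, \cF'_{t_r})$, which both lie in $\cF_q(\type, n)$. Directly from the definition of the distance vector,
$$
\bd(\cG, \cG') = (d_S(\cF_{t_1}, \cF'_{t_1}), \ldots, d_S(\cF_{t_r}, \cF'_{t_r})) = \pi_{\type}(\bd(\cF, \cF')),
$$
so every element of $\pi_{\type}(\cD(n))$ belongs to $\cD(\type, n)$.

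For the reverse inclusion $\cD(\type, n) \subseteq \pi_{\type}(\cD(n))$, given any pair $\cF, \cF' \in \cF_q(\type, n)$, I would refine each to a full flag $\widetilde{\cF}, \widetilde{\cF}'$ by inserting, for every consecutive pair $t_i < t_{i+1}$ (with the conventions $t_0 = 0$, $\cF_0 = \{0\}$ and $t_{r+1} = n$, $\cF_{r+1} = \bbF_q^n$), a complete chain of intermediate subspaces at every dimension strictly between $t_i$ and $t_{i+1}$, and similarly for $\cF'$. Such refinements always exist by the standard basis-extension argument in $\bbF_q^n$. Since the refinement leaves the subspaces of dimensions $t_1, \ldots, t_r$ unchanged, the vector $\bd(\widetilde{\cF}, \widetilde{\cF}') \in \cD(n)$ projects under $\pi_{\type}$ exactly onto $\bd(\cF, \cF')$, as required.

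I do not foresee any substantial obstacle: the proof is a pure bookkeeping exercise, and the only ingredient beyond definitions is the elementary fact that any chain of subspaces of $\bbF_q^n$ can be refined to a complete flag. A side observation worth recording is that, although $\pi_{\type}$ restricted to $\cD(n)$ is surjective onto $\cD(\type, n)$, it is typically far from injective, since many distinct full-flag completions of a given type-$\type$ pair can produce the same projected distance vector.
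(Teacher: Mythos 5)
Your proof is correct and follows essentially the same two-step argument as the paper: truncating full flags to the positions $t_1,\dots,t_r$ for one inclusion, and completing flags of type $\type$ to full flags for the other. The paper's proof is only slightly more terse, leaving the existence of the refinement implicit.
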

\begin{proof}

Take a distance vector $\bd(\cF, \cF')\in\cD(n)$, for a pair of full flags $\cF=(\cF_1, \dots, \cF_{n-1})$ and $\cF'=(\cF'_1, \dots, \cF'_{n-1})$. It suffices to see that $\pi_{\type}(\bd(\cF, \cF'))$ is the distance vector associated to the pair of flags $\bar{\cF}=(\cF_{t_1}, \dots, \cF_{t_r})$ and $\bar{\cF}'=(\cF'_{t_1}, \dots, \cF'_{t_r})$ in $\cF_q(\type, n)$.
 
Conversely, given two flags $\bar{\cF}=(\bar{\cF}_1, \dots, \bar{\cF}_r)$ and $\bar{\cF}'=(\bar{\cF}'_1, \dots, \bar{\cF}'_r)$ in $\cF_q(\type, n)$, we can consider full flags $\cF=(\cF_1, \dots, \cF_{n-1})$ and $\cF'=(\cF'_1, \dots, \cF'_{n-1})$ such that $\cF_{t_i}=\bar{\cF}_i$ and $\cF'_{t_i}=\bar{\cF}'_i$, for all $1\leq i\leq r$. In this case, it holds $\pi_{\type}(\bd(\cF, \cF'))= \bd(\bar{\cF}, \bar{\cF}')$.
\end{proof}

\begin{remark}\label{rem: D type n and D n}
Notice that, for  every even integer $d$ such that $0 \leq d \leq D^{(\type, n})$, the set $\cD(d,\type, n)$ is nonempty.  Moreover, for some values of $d$, the set $\cD(d,\type, n)$ is reduced to just one element. For instance, if we take $d=0$, it holds $\cD(0, \type, n)=\{ \bo\}.$ If $d=D^{(\type, n)}$, there is also a unique distance vector, that we denote by $\bD^{(\type, n)}$. For every $1\leq i\leq r$, its $i$-th component $D^{(\type, n)}_i$ is exactly
\begin{equation}\label{eq: components vector D type n}
D^{(\type, n)}_i= \min\{2t_i, 2 (n-t_i)\},
\end{equation}
i.e., the maximum possible distance between $t_i$-dimensional subspaces of $\bbF_q^n$. Observe that, in particular, the distance vector $\bD^{(\type, n)}$ does not have any zero component. As before, when working with the full type vector, we simply write $\bD^n=(D^n_1, \dots, D^n_{n-1})$ to denote the unique distance vector associated to the maximum possible flag distance $D^n$, given in (\ref{eq: D^n}). Its components are $D^n_i=\min\{2i, 2(n-i)\}$, for $1\leq i\leq n-1$. In other cases, the set $\cD(d,\type, n)$ might contain more than one element, as we can see in Example \ref{ex: dv d=4}.
\end{remark}

Using the projection defined in (\ref{eq: map pi type}), and arguing as in Proposition \ref{prop: projection dist vect}, the next result follows straightforwardly.
\begin{corollary}
Consider a positive integer $n$ and fix a type vector $\type$ for $\bbF_q^n$. It holds
$$
\pi_{\type}(\bD^n)= \bD^{(\type, n)}.
$$
\end{corollary}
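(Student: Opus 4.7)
The plan is to verify the equality component-by-component by unwinding the definitions; no deep argument is needed since the content is essentially bookkeeping, and Proposition \ref{prop: projection dist vect} together with Remark \ref{rem: D type n and D n} has already done the conceptual work.

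First I would apply $\pi_{\type}$ directly to $\bD^n = (D^n_1, \ldots, D^n_{n-1})$. By the definition of $\pi_{\type}$ in (\ref{eq: map pi type}), the $i$-th coordinate of $\pi_{\type}(\bD^n)$ is $D^n_{t_i}$. Using the formula for $D^n_j$ given in Remark \ref{rem: D type n and D n}, namely $D^n_j = \min\{2j, 2(n-j)\}$, this yields
$$
\pi_{\type}(\bD^n) = (\min\{2t_1, 2(n-t_1)\}, \ldots, \min\{2t_r, 2(n-t_r)\}).
$$
On the other hand, by (\ref{eq: components vector D type n}), the $i$-th coordinate of $\bD^{(\type,n)}$ is exactly $\min\{2t_i, 2(n-t_i)\}$, so the two vectors coincide coordinate-wise.

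Alternatively, one could invoke Proposition \ref{prop: projection dist vect}: pick full flags $\cF, \cF' \in \cF_q((1, \ldots, n-1), n)$ achieving $d_f(\cF,\cF') = D^n$, so $\bd(\cF,\cF') = \bD^n$. Their truncations $\bar{\cF} = (\cF_{t_1}, \ldots, \cF_{t_r})$ and $\bar{\cF}' = (\cF'_{t_1}, \ldots, \cF'_{t_r})$ lie in $\cF_q(\type,n)$, and $\pi_{\type}(\bD^n) = \bd(\bar{\cF}, \bar{\cF}')$ is a distance vector in $\cD(\type,n)$ whose components already reach the maximum $\min\{2t_i, 2(n-t_i)\}$ at every coordinate, forcing $d_f(\bar{\cF}, \bar{\cF}') = D^{(\type,n)}$ and hence $\bd(\bar{\cF}, \bar{\cF}') = \bD^{(\type,n)}$ by the uniqueness noted in Remark \ref{rem: D type n and D n}.

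There is no genuine obstacle here; the only thing to be careful about is making sure the ``projection'' of the full-type extremal vector still lands on extremal subspace distances at each selected dimension $t_i$, which is immediate because the minimum $\min\{2j, 2(n-j)\}$ is computed dimension-by-dimension and does not interact across coordinates.
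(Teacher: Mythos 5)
Your proposal is correct and matches the paper's intent: the paper gives no explicit proof, stating only that the corollary follows straightforwardly by arguing as in Proposition \ref{prop: projection dist vect}, and your direct component-wise comparison of $D^n_{t_i}=\min\{2t_i,2(n-t_i)\}$ with $D^{(\type,n)}_i$ is exactly that straightforward verification. Your alternative via extremal flags is also valid but ultimately reduces to the same coordinate check, so no further comment is needed.
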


In the following definition we collect the subset of distance vectors of $\cD(\type,n)$ that are significant for a flag code in $\cF_q(\type, n)$.
\begin{definition}\label{def: distance vectors of a flag code}
Given a flag code $\cC\subseteq\cF_q(\type, n)$,  its \emph{set of distance vectors} is 
$$
\cD(\cC) = \{ \bd(\cF, \cF') \ | \ \cF, \cF'\in\cC, \ d_f(\cF, \cF')=d_f(\cC)\}.
$$
\end{definition}

\begin{remark}
In general, given a flag code $\cC\subseteq\cF_q(\type, n)$ and a pair of flags $\cF, \cF'\in \cC$ such that $d_f(\cC)=d_f(\cF, \cF'),$ it holds
$$
\bd(\cF, \cF') \in \cD(\cC) \subseteq \cD(d_f(\cC), \type, n)\subseteq \cD(\type, n).
$$
\end{remark}

\begin{example}\label{ex: dv d=4}
Let $\{\be_1, \be_2, \be_3, \be_4\}$ be the standard $\bbF_q$-basis of $\bbF_q^4$ and consider the following full flags on $\bbF_q^4$.
$$
\begin{array}{ccccc} 
\mathcal{F}^1 &=& (\left\langle \be_1 \right\rangle, & \left\langle \be_1, \be_2 \right\rangle , & \left\langle \be_1, \be_2, \be_4 \right\rangle),\\
\mathcal{F}^2 &=& ( \left\langle \be_2 \right\rangle, & \left\langle \be_1, \be_2 \right\rangle , & \left\langle \be_1, \be_2, \be_3 \right\rangle), \\
\mathcal{F}^3 &=& ( \left\langle \be_1 \right\rangle , &  \left\langle \be_1, \be_3 \right\rangle, & \left\langle \be_1, \be_2, \be_3 \right\rangle),\\
\mathcal{F}^4 &=& ( \left\langle \be_2 \right\rangle, & \left\langle \be_2, \be_3 \right\rangle , & \left\langle \be_1, \be_2, \be_3 \right\rangle).
\end{array}
$$
Notice that $D^4=16/2=8$. Thus, the possible values of the flag distance for full flags on $\bbF_q^4$ are all the even integers $d\in[0,8]$. In particular, for $d=4$, vectors $(2,0,2), (0,2,2), (2,2,0)$ are elements in $\cD(4, 4)$ since 
$$\bd(\cF^1, \cF^2)=(2,0,2),\ \bd(\cF^1, \cF^3)= (0,2,2)\  \text{and} \ \bd(\cF^2, \cF^3)=(2,2,0).$$
On the other hand, if we take the full flag code $\cC=\{\cF^1, \cF^2, \cF^4\}$, it holds
$$
\begin{array}{ccccccccccc}
 d_f(\cF^1, \cF^3) & = & 0 &+& 2 &+& 2 & = & 4, \\
 d_f(\cF^1, \cF^4) & = & 2 &+& 2 &+& 2 & = & 6,\\
 d_f(\cF^3, \cF^4) & = & 2 &+& 2 &+& 0 & = & 4.
\end{array}
$$
Hence, the distance of the code is $d_f(\cC)=4$ and $\cD(\cC)=\{ (0,2,2), (2,2,0)\}\subsetneq \cD(4, 4).$
\end{example}

Up to now, to show that a given vector $\bv\in 2\bbZ^r$ is a distance vector in $\cD(\type, n)$, we need to exhibit a pair of flags $\cF, \cF'\in\cF_q(\type ,n)$ such that $\bv=\bd(\cF, \cF')$. We finish the section with the next result that characterizes distance vectors in terms of some properties satisfied by their components.

\begin{theorem}\label{prop: properties distance vectors}
Let $d$ be an even integer such $0\leq d\leq D^{(\type, n)}$. A vector $\bv=(v_1, \dots, v_r)$ is a distance vector in $\cD(d,\type, n)$ if, and only if, the following statements hold:
\begin{enumerate}[(i)]
\item $\sum_{i=1}^r v_i = d,$  \label{prop: properties distance vectors-item1}
\item $v_i\in 2\bbZ,$ for all $1\leq i\leq r$,  \label{prop: properties distance vectors-item2}
\item $0\leq v_i \leq \min\{ 2t_i, 2(n-t_i)\},$ for every $1\leq i\leq r$, and \label{prop: properties distance vectors-item3}
\item $|v_{i+1} - v_i| \leq 2(t_{i+1}-t_i),$ for $1\leq i \leq r-1$. \label{prop: properties distance vectors-item4}
\end{enumerate}
\end{theorem}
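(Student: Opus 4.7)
For the necessity direction, the plan is to verify each condition directly. Condition (i) is the definition of flag distance. For (ii) and (iii), recall that $d_S(\cF_i,\cF'_i)=2(t_i-\dim(\cF_i\cap\cF'_i))$ is an even integer bounded above by $\min\{2t_i,2(n-t_i)\}$, by (\ref{eq: bound subspace distance}).

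The real content of necessity sits in (iv). Write $m_i:=\dim(\cF_i\cap\cF'_i)$, so that $v_i=2(t_i-m_i)$. The inclusion $\cF_i\cap\cF'_i\subseteq\cF_{i+1}\cap\cF'_{i+1}$ gives $m_i\leq m_{i+1}$, equivalently $v_{i+1}-v_i\leq 2(t_{i+1}-t_i)$. For the reverse inequality, the plan is to combine two rank--nullity arguments: the linear map $\cF_{i+1}\cap\cF'_{i+1}\to\cF_{i+1}/\cF_i$ has kernel $\cF_i\cap\cF'_{i+1}$, and $\cF_i\cap\cF'_{i+1}\to\cF'_{i+1}/\cF'_i$ has kernel $\cF_i\cap\cF'_i$. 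Adding the two resulting rank bounds yields $m_{i+1}-m_i\leq 2(t_{i+1}-t_i)$, which rewrites as $v_i-v_{i+1}\leq 2(t_{i+1}-t_i)$.

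For the converse, given $\bv$ satisfying (i)--(iv), the idea is to realise $\bv$ by a pair of coordinate flags. Fix the canonical basis $\{\be_1,\dots,\be_n\}$ of $\bbF_q^n$ and construct recursively nested index sets $A_1\subseteq\cdots\subseteq A_r$ and $B_1\subseteq\cdots\subseteq B_r$ in $\{1,\dots,n\}$ with $|A_i|=|B_i|=t_i$ and $|A_i\cap B_i|=m_i:=t_i-v_i/2$. Then $\cF_i:=\langle\be_j\mid j\in A_i\rangle$ and $\cF'_i:=\langle\be_j\mid j\in B_i\rangle$ are flags of type $\type$ satisfying $d_S(\cF_i,\cF'_i)=v_i$ for every $i$, so that $\bd(\cF,\cF')=\bv$ and $d_f(\cF,\cF')=d$ by (i).

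The main obstacle is the recursive step. Writing $\delta_i:=t_{i+1}-t_i$ and $k_i:=m_{i+1}-m_i$, one must enlarge $A_i$ and $B_i$ each by $\delta_i$ indices so that the intersection grows by exactly $k_i$. Sorting the additions into five types (common new indices from outside $A_i\cup B_i$, indices of $B_i\setminus A_i$ absorbed into $A$, indices of $A_i\setminus B_i$ absorbed into $B$, and private fresh indices added to $A$ or to $B$) turns the problem into a small linear system. A non-negative integer solution compatible with the sizes $|B_i\setminus A_i|=|A_i\setminus B_i|=t_i-m_i$ and $|\{1,\dots,n\}\setminus(A_i\cup B_i)|=n-2t_i+m_i$ exists precisely because (iv) yields $0\leq k_i\leq 2\delta_i$, while (iii) applied to $v_{i+1}$ supplies both $k_i\leq(t_i-m_i)+\delta_i$ (enough ``swap room'') and $2t_{i+1}-m_{i+1}\leq n$ (enough ``ambient room'' in $\bbF_q^n$).
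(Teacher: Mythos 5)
Your proof is correct, and while the necessity half and the overall inductive skeleton match the paper, your realisation of the sufficiency direction is genuinely different. For necessity, the paper gets both halves of (iv) from the single inequality $\dim(\cF_i+\cF'_i)\leq\dim(\cF_{i+1}+\cF'_{i+1})$ combined with the containment of intersections; your two rank--nullity maps through $\cF_i\cap\cF'_{i+1}$ prove exactly the same bound $m_{i+1}-m_i\leq 2(t_{i+1}-t_i)$, so this is a cosmetic variation. For sufficiency, the paper also inducts on $i$, but it works with abstract subspaces: it picks an ambient $\cU$ of dimension $t_{i+1}+w_{i+1}$ containing $\cF_i+\cF'_i$ and builds $\cF_{i+1},\cF'_{i+1}$ by explicit direct-sum decompositions, with a case split on the sign of $w_{i+1}-w_i$. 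You instead restrict to coordinate subspaces and reduce the extension step to a counting problem on nested index sets, solved by a small non-negative integer system; this is more elementary, avoids the sign case split, and makes visible exactly which hypothesis funds which resource (condition (iv) gives $0\leq k_i\leq 2\delta_i$, while $v_{i+1}\geq 0$ gives the swap room $k_i-\delta_i\leq t_i-m_i$ and $v_{i+1}\leq 2(n-t_{i+1})$ gives the ambient room). The trade-off is that your write-up leaves the explicit solution of the linear system implicit (e.g.\ $x=\min\{k_i,2\delta_i-k_i\}$-type choices with $y=z=\max\{0,k_i-\delta_i\}$), which should be spelled out in a full proof, whereas the paper's construction exhibits the subspaces directly.
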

\begin{proof}
We start assuming that $\bv\in\cD(d,\type, n)$. Statements  (\ref{prop: properties distance vectors-item1}), (\ref{prop: properties distance vectors-item2}) and (\ref{prop: properties distance vectors-item3}) follow from the definition of $\cD(d,\type, n)$. Let us prove (\ref{prop: properties distance vectors-item4}). Since $\bv\in\cD(d,\type, n)$, there must exist flags $\cF, \cF'\in \cF_q(\type, n)$ such that $d=d_f(\cF, \cF')$ and $\bv=\bd(\cF, \cF')$, i.e., $v_i=d_S(\cF_i, \cF'_i)= 2(t_i-\dim(\cF_i\cap\cF'_i)),$ for every $1\leq i\leq r$. Notice that, for every $1\leq i\leq r-1$, it holds

$$
2t_i-\dim(\cF_i\cap\cF'_i) = \dim(\cF_i+\cF'_i) \leq  \dim(\cF_{i+1}+\cF'_{i+1}) = 2t_{i+1} - \dim(\cF_{i+1}\cap\cF'_{i+1})
$$
and, as a consequence, 
\begin{equation}\label{eq: relation nested intersection}
\dim(\cF_i\cap\cF'_i) \leq  \dim(\cF_{i+1}\cap\cF'_{i+1}) \leq  \dim(\cF_i\cap\cF'_i) + 2(t_{i+1}-t_i).
\end{equation}
Moreover, we have that
\begin{equation}\label{eq: v_(i+1)-v_i}
v_{i+1}-v_i = 2(t_{i+1}-t_{i}) - 2 (\dim(\cF_{i+1}\cap\cF'_{i+1})- \dim(\cF_i\cap\cF'_i)).
\end{equation}
Hence, by using the first inequality of (\ref{eq: relation nested intersection}), we clearly obtain $v_{i+1}-v_i\leq 2(t_{i+1}-t_{i})$. On the other hand, combining the second inequality of (\ref{eq: relation nested intersection}) and (\ref{eq: v_(i+1)-v_i}), we get
$$
v_{i+1}-v_i\geq  2(t_{i+1}-t_{i}) - 4(t_{i+1}-t_{i}) = -2(t_{i+1}-t_{i})
$$
and (\ref{prop: properties distance vectors-item4}) holds.
\begin{comment}
$$
\begin{array}{cll}
v_{i+1} &\leq & 2(t_{i+1} - \dim(\cF_i\cap\cF'_i))\\
        & = & 2(t_{i+1} + t_i - t_i - \dim(\cF_i\cap\cF'_i)) \\
        & = & 2(t_i - \dim(\cF_i\cap\cF'_i)) + 2(t_{i+1} -t_i)\\
        & = & v_i + 2(t_{i+1} -t_i). 
\end{array}
$$
On the other hand, the second inequality in (\ref{eq: relation nested intersection}) leads to
$$
\begin{array}{cll}
v_{i+1} &\geq & 2(t_{i+1} - \dim(\cF_i\cap\cF'_i) - 2(t_{i+1} -t_i) )\\
        & = & 2(t_i - \dim(\cF_i\cap\cF'_i) + t_i - t_{i+1}) \\
        & = & 2(t_i - \dim(\cF_i\cap\cF'_i)) + 2(t_{i+1} -t_i)\\
        & = & v_i - 2(t_{i+1} -t_i),
\end{array}
$$
and (\ref{prop: properties distance vectors-item4}) holds. 
\end{comment}

Let us prove the converse. To do so, assume that $\bv=(v_1, \dots, v_r)$ is a vector satisfying conditions (\ref{prop: properties distance vectors-item1})-(\ref{prop: properties distance vectors-item4}). We want to show that $\bv\in\cD(d,\type, n)$ or, equivalently, to find a pair of flags in $\cF_q(\type,n)$ such that $\bv=\bd(\cF, \cF')$ and $d_f(\cF, \cF')=d$.

First, by means of (\ref{prop: properties distance vectors-item2}) and (\ref{prop: properties distance vectors-item3}), every $v_i$ is an even integer such that $0\leq v_i \leq \min\{2t_i, 2(n-t_i)\}$. Hence, each $v_i$ is an admissible distance value between $t_i$-dimensional subspaces of $\bbF_q^n$. Moreover, we can write every $v_i=2w_i$ for some integer $w_i$.

Notice that finding subspaces $\cF_i, \cF'_i\in\cG_q(t_i, n)$ with distance $d_S(\cF_i, \cF'_i)=v_i$ is equivalent to choose them  satisfying $\dim(\cF_i\cap\cF'_i)=t_i-w_i$. This can be clearly done for every $1\leq i\leq r$. However, we need that the chosen subspaces form flags $\cF=(\cF_1, \dots, \cF_r)$ and $\cF'=(\cF'_1, \dots, \cF'_r)$. We use an inductive process in order to construct such flags. We start taking subspaces  $\cF_1, \cF'_1\in\cG_q(t_1, n)$ such that $\dim(\cF_1 \cap \cF'_1)=t_1-w_1$. Assume now that, for some $1\leq i< r$, we have found subspaces $\cF_j, \cF'_j\in\cG_q(t_j, n)$, for all $1\leq j\leq i$, such that
$$
\begin{array}{ccccccc}
\cF_1  & \subsetneq & \dots & \subsetneq & \cF_{i-1}  & \subsetneq & \cF_i, \\
\cF'_1 & \subsetneq & \dots & \subsetneq & \cF'_{i-1} & \subsetneq & \cF'_i, 
\end{array}
$$
and $d_S(\cF_j, \cF'_j)= v_j =2w_j.$ Let us see that we can find suitable subspaces $\cF_{i+1}$ and $\cF_{i+1}'$. To do this, notice that, by using  (\ref{prop: properties distance vectors-item4}) and (\ref{prop: properties distance vectors-item3}), in this order, we obtain 
$$
\dim(\cF_i+\cF'_i)=t_i+w_i \leq t_{i+1}+w_{i+1} \leq n. 
$$
Thus, we can consider a subspace $\cU\in\cG_q(t_{i+1}+w_{i+1}, n)$ such that $\cF_i+\cF'_i\subseteq \cU$. It holds
$$
\dim(\cU)-\dim(\cF_i+\cF'_i)= t_{i+1}+w_{i+1}-(t_i+w_i) > w_{i+1}-w_i.
$$

We distinguish two possible situations in terms of the value $l_i:=w_{i+1}-w_i$. 

\noindent $\bullet$ If $l_i\geq 0$, then we put $m_i:=(t_{i+1} - t_i)-l_i$. Observe that, by means of (\ref{prop: properties distance vectors-item4}), we have that $m_i\geq 0$. Moreover, we have
$$
\begin{array}{ccl}
\dim(\cF_i+\cF'_i) + 2l_i + m_i  & = & (t_i + w_i) + 2l_i + m_i                        \\
 		                         & = & (t_i + w_i) + (w_{i+1} -w_i) + (t_{i+1} - t_i)  \\ 
		                         & = &  t_{i+1} + w_{i+1}                              \\
		                         & = & \dim(\cU).
\end{array}
$$
Hence, we can find linearly independent vectors $\ba_1, \dots, \ba_{l_i}, \bb_1, \dots, \bb_{l_i}, \bc_1, \dots, \bc_{m_i}$ in $\cU$ such that this subspace can be expressed as the direct sum
$$
\cU = (\cF_i+\cF'_i) \oplus \langle \ba_1, \dots, \ba_{l_i} \rangle \oplus \langle \bb_1, \dots, \bb_{l_i} \rangle \oplus \langle \bc_1, \dots, \bc_{m_i} \rangle.
$$ 
Now, consider the subspaces
$$
\begin{array}{ccccccc}
\cF_{i+1}  &:=& \cF_i  & \oplus & \langle \ba_1, \dots \ba_{l_i} \rangle  & \oplus & \langle \bc_1, \dots, \bc_{m_i}\rangle, \\
\cF'_{i+1} &:=& \cF'_i  & \oplus & \langle \bb_1, \dots, \bb_{l_i} \rangle & \oplus & \langle \bc_1, \dots, \bc_{m_i}\rangle,
\end{array}
$$
which have dimension
$$
\dim(\cF_{i+1})=\dim(\cF'_{i+1})= t_i + l_i + m_i = t_{i+1}.
$$
It is clear that $\cF_i\subsetneq \cF_{i+1}$ and $\cF'_i\subsetneq \cF'_{i+1}$. Moreover, observe that $\cF_{i+1}+\cF'_{i+1}=\cU$. Hence, $\dim(\cF_{i+1}+\cF'_{i+1})= \dim(\cU)= t_{i+1}+w_{i+1}$ and, consequently, it holds $\dim(\cF_{i+1}\cap\cF'_{i+1})= t_{i+1}-w_{i+1}$. As a result, we obtain $d_S(\cF_{i+1}, \cF'_{i+1})= v_{i+1}$, as desired.

\noindent $\bullet$ If $l_i < 0$, then it holds $t_i < t_i - l_i \leq t_i + w_i = \dim(\cF_i+\cF'_i)$. Thus, we can consider $(t_i - l_i)$-dimensional subspaces $\cV$ and $\cV'$ such that 
$$
\begin{array}{ccccccc}
\cF_i  &\subsetneq &\cV  &\subseteq &\cF_i+\cF'_i &\subseteq &\cU,\\
\cF'_i &\subsetneq &\cV' &\subseteq &\cF_i+\cF'_i &\subseteq &\cU.
\end{array}
$$
Notice that $\cV+\cV'=\cF_i+\cF'_i$. Besides, recall that
$$
\dim(\cU)-\dim(\cF_i+\cF'_i)= (t_{i+1}+w_{i+1})-(t_i+w_i)=(t_{i+1}-t_{i})+l_i\geq 0 
$$
since $\bv$ satisfies condition (\ref{prop: properties distance vectors-item4}). Hence, there exists a subspace $\cW\subseteq \cU$ of dimension $(t_{i+1}-t_{i})+l_i$ such that
$$
\cU= (\cF_i+\cF'_i) \oplus \cW = (\cV +\cV') \oplus \cW.
$$
Let us consider the subspaces
$$
\cF_{i+1} = \cV \oplus \cW \ \text{and} \ \cF'_{i+1} = \cV' \oplus \cW, 
$$
which have dimension 
$$
\dim(\cV)+\dim(\cW)= \dim(\cV')+\dim(\cW)=  (t_i - l_i)+(t_{i+1}-t_{i}+l_i)= t_{i+1}
$$
and clearly contain $\cF_i$ and $\cF'_i$, respectively. Moreover, since $\cF_{i+1}+\cF'_{i+1}=\cU$, we conclude that $\dim(\cF_{i+1}\cap \cF'_{i+1})= 2t_{i+1}-\dim(\cU)= 2t_{i+1}-(t_{i+1}+w_{i+1})= t_{i+1}-w_{i+1}$. This is equivalent to say that $d_S(\cF_{i+1}, \cF'_{i+1})= 2w_i=v_i$, as we wanted to prove.

In both cases, we conclude the existence of $\cF, \cF'\in\cF_q(\type, n)$ such that $\bv=\bd(\cF, \cF')$, which finishes the proof.
\end{proof}

\begin{example}\label{ex: not a dist vector} 
Consider the full flag variety on $\bbF_q^7$. In this case, $D^7=24$ and we can consider the possible value of the distance $d=20$. According to Theorem \ref{prop: properties distance vectors}, the set of distance vectors associated to $d=20$ is given by
$$
\cD(20, 7)=\{ (2,4,4,4,4,2), (2,4,6,4,2,2), (2,2,4,6,4,2) \}.
$$
Observe that, even though all the components of the vector $(2, \mathbf{2}, \mathbf{6}, 4, 4, 2)$ are allowed distances between subspaces of the corresponding dimensions and they sum $d=20$, such a vector is not a distance vector in $\cD(20, 7)$. This is due to the fact that the sequence $(\mathbf{2}, \mathbf{6})$ violates condition (\ref{prop: properties distance vectors-item4}) in Theorem \ref{prop: properties distance vectors}, since $6-2 = 4 > 2 = 2(3-2)$.

For $n=7$ and $\type=(1,3,5,6)$, we have $D^{(\type, 7)}=14$. Observe that the distance $d=12$ can only be attained by distance vectors
$$
\cD(12, \type, 7)= \{ (2,4,4,2), (2,6,2,2) \}.
$$
In this case consecutive components $(\mathbf{2}, \mathbf{6})$ in the vector$(\mathbf{2}, \mathbf{6},2,2)$ are allowed, since they represent distance between nested subspaces of dimensions $t_1=1$ and $t_2=3$. Hence, the difference $6-2=4= 2(t_2-t_1)$ respects the condition (\ref{prop: properties distance vectors-item4}) in Theorem \ref{prop: properties distance vectors}.
\end{example}

\section{Distance between flags sharing subspaces}\label{sec: remarkable values of the distance}

This section is devoted to the study of the flag distance between flags in $\cF_q(\type,n)$ that share subspaces. To do this, we start by analyzing the distance associated to distance vectors with a prescribed component, in particular, the ones having a component equal to zero. Then, we extend our study to distance vectors having several zeros among their components. This study will be used in Sections \ref{sec: disjointness} and \ref{sec: bounds} to obtain some information about the structure of flag codes as well as bounds for their cardinality depending on their minimum distance.

\subsection{Distance vectors with a fixed component}\label{subsec: fixed component}

We start by describing the interval of attainable distances by distance vectors in $\cD(\type,n)$ with their $i$-th component fixed, for some $1\leq i\leq r$. Throughout the rest of the section, we will write $v$ to denote an even integer $0\leq v\leq \min\{2t_i, 2(n-t_i)\}$. In other words, the integer $v$ represents a possible value for the distance between $t_i$-dimensional subspaces of $\bbF_q^n$. We focus on the set of distance vectors $\bv=(v_1,\dots,v_r)\in\cD(\type, n)$ with $v$ as their $i$-th component, paying special attention to those associated to the maximum and minimum distances.

Notice that, if we require a distance vector $\bv$ to satisfy $v_i=v$, then, by using condition (\ref{prop: properties distance vectors-item4}) in Theorem \ref{prop: properties distance vectors}, we obtain $|v_j-v|\leq 2|t_i-t_j|$ for all $1\leq j\leq r$. Moreover, by means of (\ref{prop: properties distance vectors-item2})-(\ref{prop: properties distance vectors-item4}) in Theorem \ref{prop: properties distance vectors}, for every $1\leq j\leq r$, the component $v_j$ must hold
\begin{equation}\label{eq: min and max dist provided a_i}
\max\{0, v - 2|t_i-t_j| \} \leq v_j \leq \min\{ 2t_j, 2(n-t_j), v+2|t_i-t_j|\}.
\end{equation}

\begin{definition}
Given $v$ as above, we write $d(i;v)^{(\type, n)}$ (resp. $D(i;v)^{(\type, n)}$) to denote the minimum (resp. maximum) distance that can be attained by distance vectors in $\cD(\type, n)$ having its $i$-th component equal to $v$.  According to (\ref{eq: min and max dist provided a_i}),  there exists a unique distance vector, that we denote by $\bd(i; v)^{(\type, n)}$ (resp. $\bD(i; v)^{(\type, n)}$), giving the distance $d(i; v)^{(\type, n)}$ (resp. $D(i;v)^{(\type, n)}$) and having $v$ as its $i$-th component. For every $1\leq j\leq r$, the $j$-th components of these vectors are given by
\begin{equation}\label{eq: components d(i, a) and D(i,a)}
\begin{array}{ccl}
d(i; v)_j^{(\type, n)} & = & \max \{ 0  , v - 2 |t_i - t_j | \},\\
D(i; v)_j^{(\type, n)} & = & \min \{ 2t_j, 2(n-t_j), v + 2 |t_i - t_j | \}.
\end{array}
\end{equation}
Consequently, the value $d(i;v)^{(\type, n)}$ (resp. $D(i; v)^{(\type, n)}$) is obtained as the sum of the components of $\bd(i; v)^{(\type, n)}$ (resp. $\bD(i; v)^{(\type, n)}$), given in (\ref{eq: components d(i, a) and D(i,a)}). Notice that, by construction, these values satisfy $0\leq d(i;v)^{(\type, n)} \leq D(i;v)^{(\type, n)} \leq D^{(\type; n)}$. When working with the full type variety, we simply write $d(i; v)^n$, $D(i; v)^n$, $\bd(i; v)^n$ and $\bD(i; v)^n$.
\end{definition}

\begin{example}
For the full flag variety on $\bbF_q^7$, take $i=3$ and $v=4$, we have 
$$
\bd(3; 4)^7= (0,2,\mathbf{4},2,0,0) \ \text{and} \ \ \bD(3; 4)^7= (2,4,\mathbf{4},6,4,2)
$$
and their associated distances are the values $d(3;4)^7=8$ and $D(3; 4)^7=22$.

Consider now the type vector $\type=(1,3,5,6)$ on $\bbF_q^7$. For the same choice of $i=3$ and $v=4$, we have
$$
\bd(3; 4)^{(\type, 7)}= (0,0,\mathbf{4},2) \ \text{and} \ \ \bD(3; 4)^{(\type, 7)}= (2,6,\mathbf{4},2).
$$
Hence, in this case, we have $d(3;4)^{(\type, 7)}=6$ and  $D(3; 4)^{(\type, 7)}=14 = D^{(\type, 7)}$.
\end{example}

According to the definition of $d(i;v)^{(\type, n)}$ and $D(i; v)^{(\type, n)}$, it is clear that if $\bv\in\cD(\type, n)$ such that $v_i=v$, then its associated distance $d$ is an even integer such that $d(i;v)^{(\type, n)}\leq d\leq D(i;v)^{(\type, n)}$. The next result allows us to ensure that the converse is also true.

\begin{proposition}\label{prop: every distance between dmin dmax}
Consider the type vector $\type$ on $\bbF_q^n,$ take an index $1\leq i\leq r$ and an even integer $0\leq v\leq \min\{2t_i, 2(n-t_i)\}$. If  $d$ is an even integer such that $d(i;v)^{(\type, n)}\leq d\leq D(i;v)^{(\type, n)}$, then there exist distance vectors in $\cD(d,\type, n)$ with $v$ as its $i$-th component.
\end{proposition}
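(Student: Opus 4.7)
The plan is to construct, for each prescribed even $d$ in the interval $[d(i;v)^{(\type,n)}, D(i;v)^{(\type,n)}]$, a vector $\bv=(v_1,\dots,v_r)$ with $v_i=v$ and $\sum_{k=1}^r v_k = d$ satisfying conditions~(i)--(iv) of Theorem~\ref{prop: properties distance vectors}; the conclusion will then follow directly from that characterization. The two extremal vectors $\bd(i;v)^{(\type,n)}$ and $\bD(i;v)^{(\type,n)}$ are themselves valid distance vectors with $i$-th component $v$, realizing respectively the minimum sum $d(i;v)^{(\type,n)}$ and the maximum sum $D(i;v)^{(\type,n)}$. Thus the task reduces to covering every intermediate even sum, which I would do by interpolating from $\bd(i;v)^{(\type,n)}$ to $\bD(i;v)^{(\type,n)}$ through a sequence of valid vectors, each obtained from the previous by increasing a single coordinate (different from the $i$-th) by $2$.

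The heart of the argument will be the following lemma: \emph{if $\bv$ satisfies conditions~(i)--(iv) with $v_i=v$ and $\sum_k v_k < D(i;v)^{(\type,n)}$, then there exists an index $j\neq i$ such that adding $2$ to $v_j$ yields a vector still satisfying these conditions}. To prove it, I would consider the set $J=\{j\neq i : v_j < D(i;v)^{(\type,n)}_j\}$, which is nonempty because the sums differ and $v_i=D(i;v)^{(\type,n)}_i=v$, and I would pick $j^*\in J$ minimizing $v_{j^*}$. Assuming for contradiction that the incremented vector violates~(iv) at the pair $(j^*-1,j^*)$, the evenness of the entries together with the fact that $\bv$ itself satisfies~(iv) would force the identity $v_{j^*-1}=v_{j^*}-2(t_{j^*}-t_{j^*-1})$, so in particular $v_{j^*-1}<v_{j^*}$. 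By minimality of $v_{j^*}$ in $J$, either $j^*-1=i$---in which case direct substitution in~(\ref{eq: components d(i, a) and D(i,a)}) yields $v_{j^*}=v+2(t_{j^*}-t_i)\geq D(i;v)^{(\type,n)}_{j^*}$, contradicting $j^*\in J$---or $v_{j^*-1}=D(i;v)^{(\type,n)}_{j^*-1}$. In the second subcase, combining with $v_{j^*}\leq D(i;v)^{(\type,n)}_{j^*}-2$ gives the bound $D(i;v)^{(\type,n)}_{j^*-1}\leq D(i;v)^{(\type,n)}_{j^*}-2-2(t_{j^*}-t_{j^*-1})$, which together with a \emph{Lipschitz-type} inequality $D(i;v)^{(\type,n)}_{j^*}\leq D(i;v)^{(\type,n)}_{j^*-1}+2(t_{j^*}-t_{j^*-1})$ yields $D(i;v)^{(\type,n)}_{j^*}\leq D(i;v)^{(\type,n)}_{j^*}-2$, the desired contradiction. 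The symmetric situations---violation at the pair $(j^*,j^*+1)$, the case $j^*<i$, and the boundary positions $j^*\in\{1,r\}$---are handled by identical arguments after obvious relabelling.

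The main technical point is the Lipschitz-type estimate $|D(i;v)^{(\type,n)}_{j+1}-D(i;v)^{(\type,n)}_j|\leq 2(t_{j+1}-t_j)$ that drives the contradiction above. I would verify it by comparing term by term the three quantities $2t_{\cdot}$, $2(n-t_{\cdot})$ and $v+2|t_{\cdot}-t_i|$ whose minimum defines $D(i;v)^{(\type,n)}_{\cdot}$ in~(\ref{eq: components d(i, a) and D(i,a)}); each of the resulting pairwise inequalities reduces to the monotonicity $t_j<t_{j+1}$. With the lemma in hand, iterating it exactly $\tfrac{1}{2}(d-d(i;v)^{(\type,n)})$ times starting from $\bd(i;v)^{(\type,n)}$ produces a valid vector with $i$-th coordinate $v$ and total sum $d$; Theorem~\ref{prop: properties distance vectors} then places this vector in $\cD(d,\type,n)$, concluding the proof.
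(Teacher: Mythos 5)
Your proof is correct and follows essentially the same route as the paper's: an induction/iteration starting from $\bd(i;v)^{(\type,n)}$ that repeatedly adds $2$ to a coordinate chosen by minimizing $v_j$ among the indices where $v_j$ is strictly below the cap $D(i;v)^{(\type,n)}_j$, with condition (iv) preserved via the minimality of that value and the adjacent-difference bound satisfied by $\bD(i;v)^{(\type,n)}$. The only cosmetic difference is that you explicitly verify the Lipschitz-type estimate $|D(i;v)^{(\type,n)}_{j+1}-D(i;v)^{(\type,n)}_j|\leq 2(t_{j+1}-t_j)$, whereas the paper simply invokes the fact that $\bD(i;v)^{(\type,n)}$ is itself a distance vector.
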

\begin{proof}
We prove the result by induction on $d$. For $d=d(i;v)^{(\type, n)}$, the result holds since the vector $\bd(i;v)^{(\type, n)}$ satisfies the required condition. Now, assume that, for some even integer $d$ such that $d(i;v)^{(\type, n)}\leq d < D(i;v)^{(\type, n)}$, we have found a distance vector $\bv=(v_1, \dots, v_r)\in\cD(d,\type, n)$ such that $v_i=v$. Let us use $\bv$ to construct a suitable distance vector in $\cD(d+2, \type, n)$. Observe that, since $d< D(i;v)^{(\type, n)}$, clearly the set
$$
\{ v_j \ | \ v_j<  D(i;v)^{(\type, n)}_j \}
$$
is nonempty. Hence, we can consider the minimum $1\leq k\leq r$ such that $v_k =  \min\{ v_j \ | \ v_j<  D(i;v)^{(\type, n)}_j \}$. According to this, we have that $v_k<  D(i;v)^{(\type, n)}_k$ and then $v_k+2\leq  D(i;v)^{(\type, n)}_k$ is a possible distance between $t_k$-dimensional subspaces of $\bbF_q^n$. Consider now the $k$-th canonical vector $\be_k\in\bbZ^r$, i.e., the vector with $k$-th component equal to $1$ and zeros elsewhere. Since $v_i=v= D(i;v)^{(\type, n)}_i$, it clearly holds $k\neq i$ and thus, the  vector $\bv+2\be_k$ still has $v$ as its $i$-th component. Hence, we just need to prove that $\bv+2\be_k$ is a distance vector in $\cD(d+2, \type, n)$. To so so, it only remains to check condition (\ref{prop: properties distance vectors-item4}) of Theorem \ref{prop: properties distance vectors} for the $k$-th component and the adjacent ones. In other words, we must show that the relations
\begin{align}
|(v_k + 2) - v_{k-1}| & \leq  2(t_k-t_{k-1}) \ \ \text{if} \ \ 1 < k \leq r  \ \ \text{and}  \label{eq: rel 1} \\
|v_{k+1}   - (v_k+2)| & \leq  2(t_{k+1}-t_k) \ \ \text{if} \ \ 1 \leq k < r.   \label{eq: rel 2}
\end{align}
hold. We start by proving (\ref{eq: rel 1}) in case $1< k\leq r$. To do so, we distinguish two cases. First, if $v_k < v_{k-1}$, then we have 
$$
|(v_k + 2) - v_{k-1}| = v_{k-1} - v_k - 2 = |v_k - v_{k-1}| -2 \leq  2(t_k-t_{k-1}) -2 < 2(t_k-t_{k-1}).
$$
On the other hand, if $v_k \geq v_{k-1}$, by the minimality in the choice of $k$, we have that $v_{k-1}=D(i; v)^{(\type, n)}_{k-1}.$  Hence, since $\bD(i; v)^{(\type, n)}$ is a distance vector, it holds
$$
|(v_k+2)-v_{k-1}|= (v_k + 2) - v_{k-1} \leq  D(i; v)^{(\type, n)}_k - D(i; v)^{(\type, n)}_{k-1} \leq 2(t_k-t_{k-1}).
$$
The proof of (\ref{eq: rel 2}), in case that $1\leq k <r$, is completely analogous and we omit it. 
\end{proof}

To our purposes, it will be important, in turn,  to look at the behaviour of the components of distance vectors associated to a given value for the flag distance. Hence, given an even integer $0\leq d \leq D^{(\type, n)}$, we consider the values

\begin{equation}\label{def: bar d_i and bar D_i}
\bar{d}_i = \min\{ v_i \ | \ \bv\in\cD(d,\type, n) \} \ \text{and} \ \bar{D}_i = \max\{ v_i \ | \ \bv\in\cD(d,\type, n) \}.
\end{equation}
The value $\bar{d}_i$ (resp. $\bar{D}_i$) represents the minimum (resp. maximum) value that can be placed in the $i$-th component of a distance vector in $\cD(d,\type, n)$.

\begin{remark}
Notice that, the values $\bar{d}_i$ and $\bar{D}_i$ defined in (\ref{def: bar d_i and bar D_i}) satisfy the chain of inequalities 
\begin{equation}\label{eq: chain inequalities}
d(i; \bar{d}_i)^{(\type, n)} \leq d(i; \bar{D}_i)^{(\type, n)} \leq d \leq D(i; \bar{d}_i)^{(\type, n)} \leq D(i; \bar{D}_i)^{(\type, n)}.
\end{equation}
Example \ref{ex: monotonicity of dist vect} illustrates this fact.
\end{remark}

With this notation, the next result holds.

\begin{proposition}\label{prop: relation with bar d}
Let $d$ be an even integer such that $0\leq d\leq D^{(\type, n)}$. Consider an index $1\leq i\leq r$ and take an even integer $v$ with $0\leq v \leq \min\{2t_i, 2(n-t_i)\}.$ The following statements hold:
\begin{enumerate}
\item If $v< \bar{d}_i$, then we have $D(i; v)^{(\type, n)}< d$.
\item If $v> \bar{D}_i$, then $d(i; v)^{(\type, n)}> d.$
\end{enumerate}
\end{proposition}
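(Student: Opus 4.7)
The plan is to prove both statements by contradiction, using Proposition \ref{prop: every distance between dmin dmax} together with the monotonicity of $d(i;\cdot)^{(\type,n)}$ and $D(i;\cdot)^{(\type,n)}$ viewed as functions of the parameter $v$.

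First I would establish this monotonicity from the explicit formulas in (\ref{eq: components d(i, a) and D(i,a)}): for each $1\leq j\leq r$, the $j$-th component $\max\{0,v-2|t_i-t_j|\}$ of $\bd(i;v)^{(\type,n)}$ is clearly non-decreasing in $v$, and similarly $\min\{2t_j,2(n-t_j),v+2|t_i-t_j|\}$ is non-decreasing in $v$. Summing over $j$, both $v\mapsto d(i;v)^{(\type,n)}$ and $v\mapsto D(i;v)^{(\type,n)}$ are non-decreasing functions on the set of admissible values. Combined with the chain of inequalities (\ref{eq: chain inequalities}), which already encodes that $d(i;\bar{d}_i)^{(\type,n)}\leq d\leq D(i;\bar{D}_i)^{(\type,n)}$, this will be the engine of the argument.

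For statement (1), I would assume $v<\bar{d}_i$ and, towards a contradiction, suppose $D(i;v)^{(\type,n)}\geq d$. Monotonicity of $d(i;\cdot)^{(\type,n)}$ gives $d(i;v)^{(\type,n)}\leq d(i;\bar{d}_i)^{(\type,n)}\leq d$, so $d$ lies in the interval $[d(i;v)^{(\type,n)},D(i;v)^{(\type,n)}]$. By Proposition \ref{prop: every distance between dmin dmax}, there exists a distance vector $\bv\in\cD(d,\type,n)$ with $v_i=v$, and the defining minimality of $\bar{d}_i$ in (\ref{def: bar d_i and bar D_i}) forces $\bar{d}_i\leq v$, contradicting the hypothesis $v<\bar{d}_i$.

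Statement (2) follows by a symmetric argument: assume $v>\bar{D}_i$ and, for contradiction, $d(i;v)^{(\type,n)}\leq d$. Monotonicity of $D(i;\cdot)^{(\type,n)}$ yields $D(i;v)^{(\type,n)}\geq D(i;\bar{D}_i)^{(\type,n)}\geq d$, so again Proposition \ref{prop: every distance between dmin dmax} produces a distance vector in $\cD(d,\type,n)$ with $v_i=v$, contradicting the maximality of $\bar{D}_i$. There is no real obstacle here; the only thing to be careful about is to recognise and record the monotonicity of the bounding functions, after which Proposition \ref{prop: every distance between dmin dmax} closes the argument immediately.
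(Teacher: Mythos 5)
Your proof is correct and follows essentially the same route as the paper: assume the conclusion fails, observe that $d$ then lies in the interval $[d(i;v)^{(\type,n)}, D(i;v)^{(\type,n)}]$, invoke Proposition \ref{prop: every distance between dmin dmax} to produce a distance vector in $\cD(d,\type,n)$ with $i$-th component $v$, and contradict the extremality of $\bar{d}_i$ (resp.\ $\bar{D}_i$). Your explicit verification of the monotonicity of $v\mapsto d(i;v)^{(\type,n)}$ and $v\mapsto D(i;v)^{(\type,n)}$ from (\ref{eq: components d(i, a) and D(i,a)}) is a welcome clarification of a step the paper leaves implicit behind its citation of (\ref{eq: chain inequalities}).
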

\begin{proof}
Suppose that $v<\bar{d}_i$, then, by means of (\ref{eq: chain inequalities}), it holds 
$$
d(i; v)^{(\type, n)} \leq d(i; \bar{d}_i)^{(\type, n)} \leq d.
$$
Suppose now that $d\leq D(i; v)^{(\type, n)}$. In this case, by means of Proposition \ref{prop: every distance between dmin dmax}, there must exist a distance vector in $\cD(d,\type, n)$ with $v$ as its $i$-th component. This leads to $\bar{d}_i\leq v$, which is a contradiction. Hence, it holds $d>D(i; v)^{(\type, n)}$.

On the other hand, if $v> \bar{D}_i$, by using (\ref{eq: chain inequalities}), we clearly have that 
$$
d \leq D(i; \bar{D}_i)^{(\type, n)}\leq D(i; v)^{(\type, n)}.
$$
If we assume that $d\geq d(i;v)^{(\type, n)}$, by using Proposition \ref{prop: every distance between dmin dmax},  we can find a distance vector in $\cD(d,\type, n)$ with $v$ as its $i$-th component. This contradicts the fact that $v > \bar{D}_i$. As a result, we conclude that $d < d(i; v)^{(\type, n)}$.
\end{proof}

The previous result points out the impossibility of attaining the flag distance value $d$ when we consider subspaces distances $v$ out of the interval $[\bar{d}_i, \bar{D}_i]$ at the $i$-th summand. This fact will be useful in Section \ref{sec: bounds} and it is reflected in the next example.
\begin{example}\label{ex: monotonicity of dist vect}
As said in Example \ref{ex: not a dist vector}, the set of distance vectors associated to $d=20$ for the full flag variety on $\bbF_q^7$ is
$$
\cD(20,7)=\{(2,4,\mathbf{4},4,4,2), (2,4,\mathbf{6},4,2,2), (2,2,\mathbf{4},6,4,2) \} .
$$
Hence, for $d=20$, it is clear that $\bar{d}_3= 4$ and $\bar{D}_3=6$. Moreover, in this case expression (\ref{eq: chain inequalities}) becomes
$$
d(3;4)^7= 8 < d(3;6)^7=18  < 20 < D(3;4)^7=22 < D(3; 6)^7=24.
$$
Besides, by means of Proposition \ref{prop: relation with bar d}, the maximum distance that can be obtained by distance vectors with third component $v_3 < 4$ is lower than $20$. Indeed, that maximum distance is attained with the vector
$$
\bD(3;2)^7= (2,4,\mathbf{2},4,4,2),
$$ 
whose associated distance is $D(3;2)^7=18 < 20$.
\end{example}

\subsection{Distance vectors with prescribed zero components}

In this subsection we study the set of attainable values of the flag distance by distance vectors having prescribed zero components, i.e., distance vectors associated to pairs of flags that share certain subspaces. For the sake of simplicity, we will present partial results for the full flag variety, followed by the natural general version for $\cF_q(\type, n)$, deduced by using the projection map defined in (\ref{eq: map pi type}).

\begin{remark}
Recall that, as pointed out in Remark \ref{rem: D type n and D n}, in case of working with distance vectors with no zero components, the maximum possible distance is the value $D^{(\type,n)}$, which is attained by the vector $\bD^{(\type,n)}.$
\end{remark}

Let us start our study with distance vectors with just one zero among their components by taking advantage of the results provided in Subsection \ref{subsec: fixed component}. Later on, we generalize this and analyze the properties of distance vectors with several null components. Given the type vector $\type$ and a position $1\leq i\leq r$, by means of (\ref{eq: components d(i, a) and D(i,a)}),  it clearly holds $d(i;0)^{(\type, n)}=0$. On the other hand, now we study the value $D(i; 0)^{(\type, n)}$ and its associated distance vector $\bD(i; 0)^{(\type, n)}$. Observe that, in this case, we do not need to specify the fixed component since is always zero. Hence, we will  just write $D(i)^{(\type, n)}$ and $\bD(i)^{(\type, n)}.$ Moreover, by means of (\ref{eq: components d(i, a) and D(i,a)}), the $j$-th component $\bD(i)^{(\type, n)}$ is given by
\begin{equation}\label{eq: summands D(i) type}
D(i)^{(\type, n)}_j = \min \{ 2t_j, 2(n-t_j), 2|t_i - t_j | \}
\end{equation}
for every $1\leq j\leq r$. When working with full flags, we also drop the type vector and simply write $D(i)^n$ and $\bD(i)^n$. In this case, expression (\ref{eq: summands D(i) type}) becomes
\begin{equation}\label{eq: summands D(i)}
D(i)^n_j=\min \{ 2j, 2(n-j), 2 |i - j | \},
\end{equation}
for $1\leq j\leq n-1$. Using both (\ref{eq: summands D(i) type}) and (\ref{eq: summands D(i)}) and the map defined in (\ref{eq: map pi type}), the next result follows. 

\begin{proposition}\label{prop: proj 1 zero}
Given a type vector $\type$ on $\bbF_q^n$ and an index $1\leq i\leq r,$ it holds
$$
\bD(i)^{(\type, n)}= \pi_{\type}(\bD(t_i)^n).
$$
\end{proposition}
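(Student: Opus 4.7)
The plan is essentially a direct componentwise comparison, using the explicit formulas just stated for the components of $\bD(i)^{(\type,n)}$ and $\bD(t_i)^n$. There is no real obstacle here; the statement is meant to record compatibility of the ``maximum distance vector with an imposed zero'' construction with the projection map $\pi_{\type}$ from the full flag variety to an arbitrary one.

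First I would fix $1\leq i\leq r$ and, for each $1\leq j\leq r$, read off the $j$-th component of the left-hand side directly from formula (\ref{eq: summands D(i) type}), namely
\[
D(i)^{(\type,n)}_j=\min\{2t_j,\ 2(n-t_j),\ 2|t_i-t_j|\}.
\]
Then I would turn to the right-hand side. By definition of $\pi_{\type}$ in (\ref{eq: map pi type}), the $j$-th component of $\pi_{\type}(\bD(t_i)^n)$ is the $t_j$-th component of $\bD(t_i)^n$, which, by formula (\ref{eq: summands D(i)}) applied with the distinguished index $t_i$ and the evaluation index $t_j$, equals
\[
D(t_i)^n_{t_j}=\min\{2t_j,\ 2(n-t_j),\ 2|t_i-t_j|\}.
\]
Since both expressions coincide for every $1\leq j\leq r$, the two vectors are equal.

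Finally I would note briefly why this must happen at the conceptual level, so the identity is not a mysterious coincidence: the vector $\bD(i)^{(\type,n)}$ is characterized as the maximizer of $\sum_j v_j$ subject to $v_i=0$, $v_j\in 2\bbZ$, $0\leq v_j\leq\min\{2t_j,2(n-t_j)\}$, and the Lipschitz-type condition $|v_{j+1}-v_j|\leq 2(t_{j+1}-t_j)$ (Theorem \ref{prop: properties distance vectors}). The same conditions, for a full-flag distance vector $\bw\in\cD(n)$ with $w_{t_i}=0$, translate under $\pi_{\type}$ exactly into the $\type$-flag conditions on $(w_{t_1},\dots,w_{t_r})$, and the maximum is attained by saturating them component by component. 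Thus the coordinatewise computation above is exactly what is expected, confirming the projection identity.
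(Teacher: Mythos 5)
Your proof is correct and matches the paper's approach exactly: the paper gives no separate argument, simply asserting that the identity follows from the component formulas (\ref{eq: summands D(i) type}) and (\ref{eq: summands D(i)}) together with the definition of $\pi_{\type}$, which is precisely the componentwise comparison you carry out. The concluding conceptual remark is a nice addition but not needed.
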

This fact allows us to restrict our study to the full type case. At the end of the section we will come back to the general flag variety $\cF_q(\type, n)$. 

\subsubsection*{The full type case}

We start giving some properties of the value $D(i)^n$.
\begin{proposition}\label{prop: D(i)=D(n-i)}
For every $1\leq i\leq n-1$, we have that
$$
D(i)^n_j = D(n-i)^n_{n-j}, \  \forall j=1, \dots, n-1.
$$
In other words, to obtain $\bD(n-i)^n$, it suffices to read backwards the vector $\bD(i)^n$. As a consequence, it holds $D(i)^n=D(n-i)^n$.
\end{proposition}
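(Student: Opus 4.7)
The plan is to verify the componentwise identity by direct substitution into formula~(\ref{eq: summands D(i)}), and then deduce the equality $D(i)^n=D(n-i)^n$ by summing both sides and reindexing.

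First I would write out the right-hand side using~(\ref{eq: summands D(i)}) applied to the index $n-i$ at position $n-j$:
\[
D(n-i)^n_{n-j}=\min\bigl\{2(n-j),\, 2\bigl(n-(n-j)\bigr),\, 2\bigl|(n-i)-(n-j)\bigr|\bigr\}=\min\bigl\{2(n-j),\, 2j,\, 2|j-i|\bigr\}.
\]
Comparing this with $D(i)^n_j=\min\{2j,\,2(n-j),\,2|i-j|\}$, the three arguments of the two minima form the same multiset, since $|(n-i)-(n-j)|=|j-i|=|i-j|$. Hence the componentwise equality is immediate.

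Having the identity of components, the geometric interpretation follows at once: the $j$-th entry of $\bD(i)^n$ equals the $(n-j)$-th entry of $\bD(n-i)^n$, which is exactly the statement that $\bD(n-i)^n$ is obtained by reading $\bD(i)^n$ from right to left. Finally, for the numerical equality of the total distances, I would simply sum the componentwise identity over $j=1,\dots,n-1$ and change the index of summation via $k=n-j$:
\[
D(i)^n=\sum_{j=1}^{n-1}D(i)^n_j=\sum_{j=1}^{n-1}D(n-i)^n_{n-j}=\sum_{k=1}^{n-1}D(n-i)^n_k=D(n-i)^n.
\]

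There is no real obstacle here; the statement is essentially a symmetry observation built into the defining formula~(\ref{eq: summands D(i)}), where the three quantities $2j$, $2(n-j)$, and $2|i-j|$ are visibly invariant under the joint substitution $(i,j)\mapsto(n-i,n-j)$. The only mild care needed is to note that $|i-j|=|(n-i)-(n-j)|$ so that the third term matches.
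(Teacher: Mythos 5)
Your proof is correct and follows exactly the paper's own argument: substitute $(n-i, n-j)$ into formula~(\ref{eq: summands D(i)}), observe that the three terms of the minimum coincide with those of $D(i)^n_j$, and sum with the reindexing $k=n-j$. The paper compresses the final summation step into ``which gives the result straightforwardly,'' but the content is identical.
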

\begin{proof}
Take an integer $1\leq i\leq n-1$. According to (\ref{eq: summands D(i)}),  for every $1\leq j\leq n-1,$ it clearly holds
$$
\begin{array}{rcl}
D(n-i)^n_{n-j} &=& \min \{ 2(n-j), 2(n-(n-j)), 2 |(n-i) -(n-j)| \}\\
               &=& \min \{2(n-j), 2j, 2 |j-i| \} = D(i)^n_j,
\end{array}
$$
which gives the result straightforwardly. 
\end{proof}

In light of this result, we just need to study the values $D(i)^n$ for $1\leq i\leq \lfloor \frac{n}{2}\rfloor$. We can also give the following nice description of $D(i)^n$ in terms of the values $D^j$ defined in (\ref{eq: D^n}).

\begin{proposition}\label{prop: D(i)^n = D^i + D^{n-i}}
For every $1\leq i\leq n-1$, it holds
$$
D(i)^n = D^i + D^{n-i}.
$$
\end{proposition}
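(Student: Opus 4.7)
The plan is to unfold the right-hand side into sums of componentwise minima and then match them term by term against the components of $\bD(i)^n$. Since $\bD(i)^n$ has a forced zero at position $i$, one naturally expects the remaining components to split as two blocks: positions $1, \dots, i-1$ reproducing the components of $\bD^i$ and positions $i+1, \dots, n-1$ reproducing (by reindexing) the components of $\bD^{n-i}$.

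First, I would record that $D(i)^n_i = \min\{2i, 2(n-i), 0\} = 0$ by formula \eqref{eq: summands D(i)}, so the $j=i$ term contributes nothing to $D(i)^n = \sum_{j=1}^{n-1}D(i)^n_j$. Next, for $1 \leq j \leq i-1$, since $j < i \leq n-1$ we have $i-j < n - j$, hence $2|i-j| = 2(i-j) \leq 2(n-j)$. This rules out the $2(n-j)$ entry in the minimum, giving
$$D(i)^n_j \;=\; \min\{2j,\, 2(i-j)\} \;=\; D^i_j.$$
Symmetrically, for $i+1 \leq j \leq n-1$, put $k = j - i \in \{1, \dots, n-i-1\}$; then $2|i-j| = 2k \leq 2(k+i) = 2j$ rules out the $2j$ entry, giving
$$D(i)^n_{i+k} \;=\; \min\{2k,\, 2(n-i-k)\} \;=\; D^{n-i}_k.$$

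Summing the three contributions I obtain
$$D(i)^n \;=\; \sum_{j=1}^{i-1} D^i_j \;+\; 0 \;+\; \sum_{k=1}^{n-i-1} D^{n-i}_k \;=\; D^i \,+\, D^{n-i},$$
which is the desired identity. The boundary cases $i=1$ and $i=n-1$ are handled automatically since one of the two sums above becomes empty and the convention $D^1 = 0$ adopted after \eqref{eq: D^n} makes the formula consistent; moreover, the symmetry $D(i)^n = D(n-i)^n$ from Proposition~\ref{prop: D(i)=D(n-i)} is visibly respected, providing a useful sanity check.

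There is no real obstacle here: the argument is a direct verification, and the only subtle point is recognizing why the Grassmannian caps $2(n-j)$ (for $j<i$) and $2j$ (for $j>i$) never bind, which is just the elementary inequality $|i-j| \leq \min\{j, n-j\}$ in the relevant range. The conceptual content is that a distance vector with a zero in the $i$-th slot corresponds (at maximum) to pairs of flags sharing their $i$-th subspace, so the distance splits into a contribution from the ``lower" full flag on a space of dimension $i$ and the ``upper" one on a quotient of dimension $n-i$.
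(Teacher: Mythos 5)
Your proof is correct and follows essentially the same route as the paper's: split the sum $\sum_j \min\{2j,2(n-j),2|i-j|\}$ at the zero component $j=i$, observe that the cap $2(n-j)$ (resp. $2j$) never binds for $j<i$ (resp. $j>i$), and reindex the upper block to recognize $D^i+D^{n-i}$. The only cosmetic difference is that you absorb the boundary cases $i=1$ and $i=n-1$ into the general argument via empty sums and the convention $D^1=0$, whereas the paper treats them separately first.
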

\begin{proof}
Regarding equation (\ref{eq: summands D(i)}), we can compute the value $D(i)^n$ as
$$
D(i)^n = \sum_{j=1}^{n-1} \min\{ 2j, 2(n-j), 2|i-j|\}.
$$

Observe that, in case $i=1$, we have
$$
\begin{array}{cll}
D(1)^n &=& \sum_{j=2}^{n-1} \min\{ 2j, 2(n-j), 2(j-1)\}  \\
       &=& \sum_{j=2}^{n-1} \min\{2(j-1), 2(n-j)\} \\
       &=& \sum_{k=1}^{n-2} \min\{2k, 2((n-1)-k)\}\\
       &=& D^{n-1} = D^1+D^{n-1}.
\end{array}
$$
Besides, by means of Proposition \ref{prop: D(i)=D(n-i)}, the result also holds if $i=n-1$. Let us now consider the case $1 < i< n-1$. In this case, the $i$-th component is zero and we have
\begin{equation}\label{eq: separate summands D(i)}
D(i)^n = \sum_{j=1}^{i-1} \min\{ 2j, 2(n-j), 2|i-j|\} + 0 + \sum_{j=i+1}^{n-1} \min\{ 2j, 2(n-j), 2|i-j|\}.
\end{equation}
Moreover, for values of $j<i$, one have that $2|i-j|=2(i-j)<2(n-j)$. On the other hand, if $i<j$, it is clear that $2|i-j|= 2(j-i) < 2j.$ Hence, (\ref{eq: separate summands D(i)}) becomes
$$
\begin{array}{ccl}
D(i)^n &=& \sum_{j=1}^{i-1} \min\{ 2j, 2(i-j)\} + \sum_{j= i+1}^{n-1} \min\{2(n-j), 2(j-i)\} \\
      &=& \sum_{j=1}^{i-1} \min\{ 2j, 2(i-j)\} + \sum_{k=1}^{n-i-1} \min\{2(n-i-k), 2k\}\\
      &=& D^i + D^{n-i},
\end{array}
$$where the second equality comes from writing $k=j-i$.
\end{proof}

This result confirms again the fact that $D(i)^n=D(n-i)^n$. Next, we use the previous proposition together with expression (\ref{eq: D^n}) to provide an explicit formula for every $D(i)^n$.

\begin{corollary}\label{cor: explicit formula D(i)}
For every $1\leq i\leq n-1$, it holds
$$
D(i)^n =
\left\lbrace
\begin{array}{cl}
\dfrac{i^2+(n-i)^2}{2} & \text{if both} \ n \ \text{and } \ i \ \text{are even,}\\
\dfrac{i^2+(n-i)^2-2}{2} & \text{if} \ n \ \text{is even and} \ i \ \text{is odd,}\\
\dfrac{i^2+(n-i)^2-1}{2} & \text{if} \ n \ \text{is odd}.
\end{array}
\right.
$$
\end{corollary}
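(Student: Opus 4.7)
The plan is to deduce this corollary as a direct consequence of Proposition~\ref{prop: D(i)^n = D^i + D^{n-i}}, which reduces the computation of $D(i)^n$ to the evaluation of $D^i+D^{n-i}$, together with the explicit formula for $D^n$ given in equation (\ref{eq: D^n}) and the convention $D^1=0$. Since the formula for $D^j$ depends on the parity of $j$, I would split the argument according to the parities of both $n$ and $i$, verifying that each of the three cases in the statement arises exactly once.

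Concretely, I would first observe that if $n$ is even and $i$ is even, then $n-i$ is also even, so both $D^i=i^2/2$ and $D^{n-i}=(n-i)^2/2$ and their sum produces $(i^2+(n-i)^2)/2$. Second, if $n$ is even and $i$ is odd, then $n-i$ is odd as well, so $D^i=(i^2-1)/2$ and $D^{n-i}=((n-i)^2-1)/2$, and adding them yields $(i^2+(n-i)^2-2)/2$. Third, if $n$ is odd, then exactly one of $i$ and $n-i$ is even while the other is odd; in either subcase the sum $D^i+D^{n-i}$ equals $(i^2+(n-i)^2-1)/2$, so no further distinction is needed.

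A small technical point to check is the boundary behaviour at $i=1$ and $i=n-1$, where the convention $D^1=0$ is used. A quick verification confirms that the three formulas remain correct: for instance, when $n$ is odd and $i=1$, the claimed expression $(1+(n-1)^2-1)/2=(n-1)^2/2$ coincides with $D^1+D^{n-1}=0+(n-1)^2/2$, since $n-1$ is even in this situation. I do not anticipate any real obstacle here; the argument is essentially a case-by-case arithmetic verification, and the main thing to be careful about is making sure the parity bookkeeping of $i$ and $n-i$ is done consistently so that the three cases in the statement cover all possibilities without overlap.
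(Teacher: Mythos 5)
Your proof is correct and is exactly the argument the paper intends (the corollary is stated without proof, immediately after the remark that it follows from Proposition \ref{prop: D(i)^n = D^i + D^{n-i}} together with (\ref{eq: D^n})): apply $D(i)^n = D^i + D^{n-i}$ and substitute the parity-dependent formula for $D^j$, noting that the convention $D^1=0$ agrees with the odd-case formula $(1^2-1)/2$ so the boundary cases need no separate treatment.
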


This expression allows us to establish an order on the set $\{D(i)^n  \ | \ 1\leq i\leq \lfloor \frac{n}{2}\rfloor\}$.

\begin{proposition}\label{prop: Order on the D(i)^n}
For every $n$ it holds
$$
D^n > D(1)^n > D(2)^n > \dots > D(\lfloor\frac{n}{2}\rfloor-1)^n \geq D(\lfloor\frac{n}{2}\rfloor)^n,
$$
and the last equality holds if, and only if, $4$ divides $n$.
\end{proposition}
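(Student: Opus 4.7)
The plan is to read the values $D^n$ (from (\ref{eq: D^n})) and $D(i)^n$ (from Corollary \ref{cor: explicit formula D(i)}) as, up to a correction belonging to $\{0,-1,-2\}$ that depends only on the parities of $n$ and $i$, half the value of the quadratic $f(i) := i^2 + (n-i)^2$. The whole chain is then controlled by the simple identity
$$
f(i+1) - f(i) \;=\; 2(2i - n + 1),
$$
which shows that on $\{1, \ldots, \lfloor n/2 \rfloor\}$ the function $f$ decreases strictly, with $f(i+1) - f(i) \leq -2$ in general, and with the sharper bound $f(i+1) - f(i) \leq -6$ as soon as $i \leq \lfloor n/2 \rfloor - 2$.

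I would then split according to the parity of $n$. When $n$ is odd, Corollary \ref{cor: explicit formula D(i)} gives $D(i)^n = (f(i) - 1)/2$ uniformly, hence $D(i+1)^n - D(i)^n = (f(i+1) - f(i))/2 \leq -1$, a strict decrease at every step; since no odd integer is divisible by $4$, this matches the statement. The opening inequality then follows from the direct computation $D^n - D(1)^n = n - 1 > 0$.

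When $n$ is even, consecutive indices $i,i+1$ have opposite parities, and substituting into Corollary \ref{cor: explicit formula D(i)} gives
$$
D(i+1)^n - D(i)^n \;=\; \tfrac{1}{2}\bigl(f(i+1) - f(i) + \varepsilon_i\bigr),
$$
with $\varepsilon_i = -2$ when $i$ is even and $\varepsilon_i = +2$ when $i$ is odd. For $i$ even, the uniform bound $f(i+1) - f(i) \leq -2$ forces a strict decrease; for $i$ odd with $i \leq \lfloor n/2 \rfloor - 2$, the sharper bound $f(i+1) - f(i) \leq -6$ does the same. The only transition not covered by these uniform estimates is $i = \lfloor n/2 \rfloor - 1$ in the subcase that $\lfloor n/2 \rfloor - 1$ is odd, where $f(i+1) - f(i) = -2$ exactly and $\varepsilon_i = +2$, yielding $D(\lfloor n/2 \rfloor)^n - D(\lfloor n/2 \rfloor - 1)^n = 0$. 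For $n$ even, the parity of $\lfloor n/2 \rfloor - 1 = n/2 - 1$ is odd if and only if $n/2$ is even, i.e., if and only if $4 \mid n$, which closes the characterization of the equality case. The opening inequality is again the direct check $D^n - D(1)^n = n > 0$.

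The only real nuisance in this argument is the parity bookkeeping in the even case: correctly pairing each of the three branches of Corollary \ref{cor: explicit formula D(i)} with the right correction $\varepsilon_i$, isolating the unique transition at which equality can occur, and verifying that this transition exists precisely when $4 \mid n$. Everything else reduces to elementary arithmetic on the quadratic $f(i) = i^2 + (n-i)^2$.
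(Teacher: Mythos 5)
Your proof is correct and follows essentially the same route as the paper's: both compare consecutive values $D(i)^n$ and $D(i+1)^n$ via the explicit formula of Corollary \ref{cor: explicit formula D(i)}, split into the same three parity cases ($n$ odd; $n$ even with $i$ even; $n$ even with $i$ odd), and isolate the equality case at $i=\frac{n}{2}-1$ odd, i.e.\ $4\mid n$; your packaging through $f(i)=i^2+(n-i)^2$ and the corrections $\varepsilon_i$ is only a cosmetic reorganization of the paper's inline expansion. A minor point in your favor is that you explicitly verify the opening inequality $D^n>D(1)^n$, which the paper's proof leaves implicit.
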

 
\begin{proof}
Consider any $1\leq i< \lfloor\frac{n}{2}\rfloor$. We will use the expression in Corollary \ref{cor: explicit formula D(i)} in order to compare $D(i)^n$ and $D(i+1)^n$. We do so by dividing the proof into two parts, depending on the parity of $n$. First of all, assume that $n$ is odd. In this case, it follows $i+1\leq \lfloor \frac{n}{2}\rfloor = \frac{n-1}{2}$. Then $2i+1 < 2i+3 \leq n$ and
$$
\begin{array}{lll}
D(i+1)^n & = & \frac{(i+1)^2+(n-(i+1))^2-1}{2} \\
             & = & \frac{i^2+2i+1+(n-i)^2-2(n-i)+1-1}{2}\\
             & = & \frac{i^2+(n-i)^2-1}{2}+\frac{2(2i+1-n)}{2}\\
             & = & D(i)^n -(n-(2i+1)) < D(i)^n.
\end{array}
$$

Now, suppose that $n$ is an even integer and $1\leq i < \lfloor\frac{n}{2}\rfloor = \frac{n}{2}$. Equivalently, $2(i+1)\leq n$. We distinguish two cases:
\begin{itemize}
\item if $i$ is even, then $i+1$ is odd and we have
$$
\begin{array}{lll}
D(i+1)^n & = & \frac{(i+1)^2+(n-(i+1))^2-2}{2} \\
             & = & \frac{i^2+2i+1+(n-i)^2-2(n-i)+1-2}{2}\\
             & = & \frac{i^2+(n-i)^2}{2}+\frac{2(2i-n)}{2}\\
             & = & D(i)^n - (n-2i) < D(i)^n.
\end{array}
$$

\item On the other hand, if $i$ is odd, then $i+1$ is even and:
$$
\begin{array}{lll}
D(i+1)^n & = & \frac{(i+1)^2+(n-(i+1))^2}{2} \\
             & = & \frac{i^2+2i+1+(n-i)^2-2(n-i)+1}{2}\\
             & = & \frac{i^2+(n-i)^2-2}{2}+\frac{2(2(i+1)-n)}{2}\\
             & = & D(i)^n -(n-2(i+1)) \leq  D(i)^n
\end{array}
$$and the last equality holds if, and only if, $n=2(i+1)$ and then $4$ divides $n$.
\end{itemize}
\end{proof}

The next example reflects the information given in the previous results for a specific value of $n$.
\begin{example}
For $n=7$, we have $D^7=24$. In this example, we compute all the values $D(i)^7$ and respective vectors $\bD(i)^7$.
\begin{table}[H]
\begin{center}
\begin{tabular}{ccccccc}
\cline{1-3} \cline{5-7}
$i$ & $\bD(i)^7$      & $D(i)^7$ &  & $i$ & $\bD(i)^7$      & $D(i)^7$ \\ \cline{1-3} \cline{5-7} 
$1$ & $(0,2,4,6,4,2)$ &    18    &  & $6$ & $(2,4,6,4,2,0)$ &   18     \\ \cline{1-3} \cline{5-7} 
$2$ & $(2,0,2,4,4,2)$ &    14    &  & $5$ & $(2,4,4,2,0,2)$ &   14     \\ \cline{1-3} \cline{5-7} 
$3$ & $(2,2,0,2,4,2)$ &    12    &  & $4$ & $(2,4,2,0,2,2)$ &   12     \\ \cline{1-3} \cline{5-7} 
\end{tabular}
\end{center}
\end{table}
Notice that, as stated in Proposition \ref{prop: D(i)=D(n-i)}, for every $1\leq i\leq 6$, it holds $D(i)^7=D(7-i)^7$. Moreover, the reader can see that every vector $\bD(i)^7$ has the same components than $\bD(n-i)^7$ but written backwards. Moreover, it is also shown that 
$$
D^7 > D(1)^7 > D(2)^7 > D(3)^7.
$$
We will come back to this example in Section \ref{sec: example}.
\end{example}

Recall that the flag distance between full flags on $\bbF_q^n$ is an even integer in the interval $[0, D^n]$. Moreover, in light of Proposition \ref{prop: Order on the D(i)^n}, we can partition this interval into intervals of the form $]D(i+1)^n, D(i)^n]$ that will be used in Sections \ref{sec: disjointness} and \ref{sec: bounds} to obtain useful information about full flag codes.

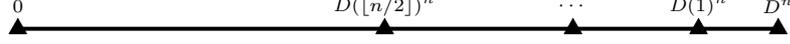
\begin{figure}[H]
\begin{center}
\begin{tikzpicture}[line cap=round,line join=round,>=triangle 45,x=1cm,y=1cm]
%\clip(-0.6199246494022177,-1) rectangle (12,4);
\draw [line width=1.5pt] (0,0)-- (10,0);
\begin{scriptsize}
\draw [fill=black,shift={(0,0)}] (0,0) ++(0 pt,3.75pt) -- ++(3.2475952641916446pt,-5.625pt)--++(-6.495190528383289pt,0 pt) -- ++(3.2475952641916446pt,5.625pt);
\draw[color=black] (0,0.3) node {$0$};
\draw [fill=black,shift={(10,0)}] (0,0) ++(0 pt,3.75pt) -- ++(3.2475952641916446pt,-5.625pt)--++(-6.495190528383289pt,0 pt) -- ++(3.2475952641916446pt,5.625pt);
\draw[color=black] (10,0.3) node {$D^n$};
\draw [fill=black,shift={(4.8245463484680045,0)}] (0,0) ++(0 pt,3.75pt) -- ++(3.2475952641916446pt,-5.625pt)--++(-6.495190528383289pt,0 pt) -- ++(3.2475952641916446pt,5.625pt);
\draw[color=black] (4.8245463484680045,0.3) node {$D(\lfloor n/2\rfloor)^n$};
\draw [fill=black,shift={(7.301092033683474,0)}] (0,0) ++(0 pt,3.75pt) -- ++(3.2475952641916446pt,-5.625pt)--++(-6.495190528383289pt,0 pt) -- ++(3.2475952641916446pt,5.625pt);
\draw[color=black] (7.301092033683474,0.3) node {$\cdots$};
\draw [fill=black,shift={(8.952122490493787,0)}] (0,0) ++(0 pt,3.75pt) -- ++(3.2475952641916446pt,-5.625pt)--++(-6.495190528383289pt,0 pt) -- ++(3.2475952641916446pt,5.625pt);
\draw[color=black] (8.952122490493787,0.3) node {$D(1)^n$};
\end{scriptsize}
\end{tikzpicture}
\caption{Distribution of the values $D(i)^n$.}
\end{center}
\end{figure}

In order to give a partition of the left interval $[0, D(\lfloor n/2\rfloor)^n]$, let us introduce another set of relevant distances that correspond to the maximum possible flag distances associated to distance vectors with more that one zero among their components.

\begin{definition}\label{def: D(i1,...iM) full}
Consider an integer value $0\leq M\leq n-1$ and fix dimensions $1\leq i_1 < i_2 < \dots < i_M \leq n-1$. We write $D(i_1, \dots, i_M)^n$ to denote the maximum possible distance attainable by distance vectors in $\cD(n)$ with $M$ zeros in the positions $i_1, \dots, i_M$. This situation corresponds uniquely to the distance vector $\bD(i_1, \dots, i_M)^n$, whose $j$-th component is given by
\begin{equation}\label{eq: components bD(i_1,...,i_M)n}
D(i_1, \dots, i_M)^n_j=\min \{ 2j, 2(n-j), 2|j-i_1|, \dots, 2|j-i_M| \},
\end{equation}
for all $1\leq j\leq n-1$. 
\end{definition}

Notice that, in case $M=0$, we have the distance vector $\bD^n$ given in Remark \ref{rem: D type n and D n}. The case $M=1$ corresponds to the distance vector $\bD(i)^n$ defined in (\ref{eq: summands D(i)}). On the other hand, if $M=n-1$, then $D(i_1, \dots, i_M)^n=0$ and $\bD(i_1, \dots, i_M)^n$ is the null vector.

\begin{proposition}\label{prop: separate summands D(i_1,...,i_M)n}
Given indices $1\leq i_1 < \dots < i_M \leq n-1$, we have
$$
D(i_1, \dots, i_M)^n = D^{i_1} + D^{i_2-i_1} + \dots + D^{i_M-i_{M-1}}+ D^{n-i_M}.
$$
\end{proposition}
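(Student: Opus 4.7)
The plan is to prove the formula by splitting the sum $D(i_1,\dots,i_M)^n = \sum_{j=1}^{n-1} D(i_1,\dots,i_M)^n_j$ according to the intervals determined by the indices $i_1 < \dots < i_M$, and then recognizing each partial sum as one of the $D^s$ values on the right-hand side. For notational uniformity I will adopt the convention $i_0 = 0$ and $i_{M+1} = n$, so that the terms $2j$ and $2(n-j)$ appearing in (\ref{eq: components bD(i_1,...,i_M)n}) read as $2|j - i_0|$ and $2|j - i_{M+1}|$ respectively. Under this convention the $i$-th component becomes
$$D(i_1,\dots,i_M)^n_j = \min_{0\leq \ell \leq M+1} 2|j - i_\ell|.$$

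The key step is to identify, for each $j\in\{1,\dots,n-1\}$, which $\ell$ achieves this minimum. Given such a $j$, there is a unique $k\in\{0,\dots,M\}$ with $i_k \leq j \leq i_{k+1}$. I will show that the minimum is attained at one of the two neighbours $i_k$ or $i_{k+1}$, i.e.
$$D(i_1,\dots,i_M)^n_j = \min\{2(j-i_k),\, 2(i_{k+1}-j)\}.$$
This is purely a monotonicity argument: for $\ell < k$ one has $|j-i_\ell| = j - i_\ell \geq j - i_k$, and for $\ell > k+1$ one has $|j-i_\ell| = i_\ell - j \geq i_{k+1} - j$, so the indices outside $\{k,k+1\}$ cannot improve upon the two neighbours. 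The endpoints $j = i_k$ (when $k\geq 1$) and $j = i_{k+1}$ (when $k\leq M-1$) contribute zero and may be absorbed into either of the two adjacent intervals.

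With this identification, I will group the sum according to $k$: for each $k\in\{0,\dots,M\}$, the values $j = i_k + 1, \dots, i_{k+1}-1$ contribute
$$\sum_{j=i_k+1}^{i_{k+1}-1} \min\{2(j-i_k),\, 2(i_{k+1}-j)\}.$$
Substituting $m = j - i_k$, which runs from $1$ to $i_{k+1} - i_k - 1$, the inner sum becomes
$$\sum_{m=1}^{i_{k+1}-i_k-1} \min\{2m,\, 2((i_{k+1}-i_k) - m)\},$$
which is exactly $D^{i_{k+1}-i_k}$ by the defining expression of $D^s$ inherited from (\ref{eq: max dist general}) and made explicit in (\ref{eq: D^n}) (note the case $i_{k+1}-i_k = 1$, where the sum is empty and equals $D^1 = 0$ by the convention fixed after (\ref{eq: D^n})). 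Summing over $k = 0,1,\dots,M$ then yields
$$D(i_1,\dots,i_M)^n = D^{i_1} + D^{i_2 - i_1} + \dots + D^{i_M - i_{M-1}} + D^{n - i_M},$$
as claimed.

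The main obstacle is really just the monotonicity verification in the first step — confirming that the nearest $i_\ell$ (including the boundaries $0$ and $n$) realizes the minimum. This is straightforward because the $i_\ell$ are linearly ordered on the real line and $\ell \mapsto |j - i_\ell|$ is unimodal with minimum among the two neighbours of $j$. Once this reduction is made, the rest of the proof is a clean bookkeeping exercise that reassembles a partition of the index set $\{1,\dots,n-1\}$ into telescoping blocks matching exactly the summands on the right-hand side. This also provides an alternative proof of Proposition~\ref{prop: D(i)^n = D^i + D^{n-i}} as the special case $M=1$.
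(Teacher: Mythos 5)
Your proof is correct, but it takes a genuinely different route from the paper's. The paper proceeds by induction on $M$: the base case $M=1$ is the separately proved Proposition~\ref{prop: D(i)^n = D^i + D^{n-i}}, and the inductive step peels off only the first block, showing via the reindexing $k=j-i_1$ that $D(i_1,\dots,i_M)^n = D^{i_1} + D(i_2-i_1,\dots,i_M-i_1)^{n-i_1}$ (after a separate discussion of the cases $i_1=1$ and $i_1>1$), then invokes the induction hypothesis. You instead give a one-shot direct argument: with the sentinels $i_0=0$, $i_{M+1}=n$ the $j$-th component of (\ref{eq: components bD(i_1,...,i_M)n}) becomes $\min_{0\leq \ell\leq M+1}2|j-i_\ell|$, the unimodality of $\ell\mapsto|j-i_\ell|$ reduces this minimum to the two indices adjacent to $j$, and the sum over $j$ then splits cleanly into the blocks $\{i_k+1,\dots,i_{k+1}-1\}$, each of which is recognized as $D^{i_{k+1}-i_k}$ via the shift $m=j-i_k$ (I checked the identity $\sum_{m=1}^{s-1}\min\{2m,2(s-m)\}=D^s$, including the empty case $s=1$ where the convention $D^1=0$ is needed; all of this is sound). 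What your approach buys is economy and transparency: it dispenses with the induction, handles all blocks symmetrically and simultaneously, and, as you note, recovers Proposition~\ref{prop: D(i)^n = D^i + D^{n-i}} as the special case $M=1$ rather than relying on it. What the paper's approach buys is that the key recursion $D(i_1,\dots,i_M)^n = D^{i_1}+D(i_2-i_1,\dots,i_M-i_1)^{n-i_1}$ is isolated explicitly, which mirrors the splitting of type vectors used later in Proposition~\ref{prop: separate summands general type}. Either proof is acceptable; yours is arguably the cleaner of the two.
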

\begin{proof}

We prove the result by induction on the number of zeros $1\leq M\leq n-1$. Observe that, by means of Proposition \ref{prop: D(i)^n = D^i + D^{n-i}}, the result holds for every $n$ and $M=1$. Now, assume that $M>1$ and, by induction hypothesis, that the result is true for all $n$ and distance vectors having up to $M-1$ zeros. Let us study the case of having $M$ zeros in the positions $i_1, \dots, i_M$. We start by considering the case where $i_1=1$. In this situation, according to (\ref{eq: components bD(i_1,...,i_M)n}), 
\begin{small}
$$
\begin{array}{ll}     
D(1, i_2, \dots, i_M)^n  & = \sum\limits_{j=1}^{n-1} \min \{ 2j, 2(n-j), 2|j-1|, 2|j-i_2|, \dots, 2|j-i_M| \} \\
            			 & =  0 + \sum\limits_{j=2}^{n-1} \min \{ 2j, 2(n-j), 2|j-1|, 2|j-i_2|, \dots, 2|j-i_M| \} \\
			             & =  0 + \sum\limits_{k=1}^{n-2} \min \{2((n-1)-k), 2k, 2|k-(i_2-1)|, \dots, 2|k-(i_M-1)| \}\\
			             & \\ [-1em]
		                 & =  D^1 + D(i_2-1, \dots, i_M-1)^{n-1},
\end{array}
$$
\end{small}

\noindent where the third equality comes from taking $k=j-1$. Hence, the induction hypothesis leads to 
$$
D(1, i_2, \dots, i_M)^n = D^1 +D^{i_2-1} + D^{i_3-i_2} + \dots + D^{i_M-i_{M-1}} + D^{n-i_M},
$$
as stated. Assume now that $i_1 > 1$, then we obtain 
$$
\begin{array}{ccl}
D(i_1, \dots, i_M)^n & = & \sum_{j=1}^{n-1} \min \{ 2j, 2(n-j), 2|j-i_1|, \dots, 2|j-i_M| \}\\
                     & = & \sum_{j=1}^{i_1-1} \min \{ 2j, 2(n-j), 2|j-i_1|, \dots, 2|j-i_M| \} + 0 \\
                     &   & + \sum_{j=i_1+1}^{n-1} \min \{ 2j, 2(n-j), 2|j-i_1|, \dots, 2|j-i_M| \}.
\end{array}
$$
Observe that, when $1\leq j\leq i_1-1$, it holds $j< i_1 <i_2< \dots < i_M< n$ and then $n-j > |j-i_l|=i_l-j \geq j-i_1$ for all $1\leq l\leq M$. Hence, the first part of the last sum can be substituted by 
$$
\sum_{j=1}^{i_1-1} \min \{ 2j, 2(i_1-j)\} = D^{i_1}.
$$
On the other hand, if $i_1+1\leq j\leq n-1$, clearly $|j-i_1|= j-i_1\leq j$ and then
\begin{small}
$$
\begin{array}{ll}
D(i_1, \dots, i_M)^n \hspace{-0.4cm}  & = D^{i_1} + \sum\limits_{j=i_1+1}^{n-1} \min \{2(n-j), 2(j-i_1) , 2|j-i_2|, \dots, 2|j-i_M| \} \\
					  & = D^{i_1} + \sum\limits_{k=1}^{n-i_1-1} \min \{ 2(n-i_1-k), 2k, 2|k-(i_2-i_1)|, \dots, 2|k-(i_M-i_1)| \} \\
					  & \\ [-1em]
					  & = D^{i_1} + D(i_2-i_1, \dots, i_M-i_1)^{n-i_1}.
\end{array}
$$
\end{small}

Hence, by applying the induction hypothesis to $D(i_2-i_1, \dots, i_M-i_1)^{n-i_1}$, we obtain
$$
\begin{array}{ccl}
D(i_1, \dots, i_M)^n & = & D^{i_1} + D^{i_2-i_1} + \dots  + D^{(i_M-i_1)-(i_{M-1}-i_1)} + D^{(n-i_1)-(i_{M}-i_1)}\\
                     & = & D^{i_1} + D^{i_2-i_1} + \dots  + D^{i_M-i_{M-1}} + D^{n-i_{M}},
\end{array}
$$
as we wanted to prove.
\end{proof}

Proposition \ref{prop: separate summands D(i_1,...,i_M)n} along with expression (\ref{eq: D^n}) allows us to compute every value $D(i_1, \dots, i_M)^n$ depending on the parity of the positive integers $i_1, i_2-i_1, \dots, n-i_M$, as we see in the following example.

\begin{example}
For $n=7$, $M=3$ and indices $i_1=1$, $i_2=3$ and $i_3=4$, by means of Proposition \ref{prop: separate summands D(i_1,...,i_M)n}, we have
$$
\begin{array}{ccl}
D(1,3,4)^7 &=& D^1 + D^{3-1} + D^{4-3} + D^{7-4} \\
           &=& D^1 + D^2 + D^1 + D^3             \\
           &=& 0 + \frac{2^2}{2} + 0 + \frac{3^2-1}{2}  = 6.
\end{array}
$$
\end{example}

\begin{remark}\label{rem: ordered differences}
A \emph{multiset} is a collection whose elements can appear more than once. The number of times that each element appears in the multiset is its \emph{multiplicity}. We represent multisets by using double braces $\{\{ \ldots \}\}$.  Notice that, for any two families of $M$ indices $1\leq i_1 < \dots < i_M\leq n-1$ and $1\leq j_1 < \dots < j_M\leq n-1$ satisfying the equality of multisets
$$ 
\{\{  i_1, i_2-i_1, \dots, n-i_M \}\} = \{\{ j_1, j_2-j_1, \dots, n-j_M \}\},
$$
by means of Proposition \ref{prop: separate summands D(i_1,...,i_M)n}, we have the equality $D(i_1, \dots, i_M)^n=D(j_1, \dots, j_M)^n.$ Hence, in order to compute all the values $D(i_1, \dots, i_M)^n$, we can restrict ourselves to choices of $M$ ordered indices $1\leq i_1 < \dots < i_M\leq n-1$ such that the differences
$$
1\leq i_1 \leq i_2-i_1 \leq \dots \leq i_M-i_{M-1} \leq n-M
$$
are also ordered.
In general, the converse is not true: there are different families of multisets as above providing the same values of the distance. It suffices to see that, with $M=1$ and $n=8$, it holds $D(3)^8=16=D(4)^8$. However, the multisets of differences associated to indices $i=3$ and $i=4$ are $\{\{3, 5\}\}$ and $\{\{4, 4\}\}$, respectively.
\end{remark}

The next result establishes that the maximum distance attainable by distance vectors in $\cD(n)$ with $M$ zero components is always obtained when these zeros are placed in the first $M$ positions. 

\begin{proposition}\label{prop: D(1,...M) is the largest}
Given $1\leq M\leq n-1$ and any election of indices $1\leq i_1 < \dots < i_M \leq n-1$, it holds
$$
D(i_1, \dots, i_M)^n \leq D(1, \dots, M)^n = D^{n-M}.
$$
\end{proposition}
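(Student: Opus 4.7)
\medskip

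\noindent\textbf{Proof proposal.} The plan is to invoke Proposition \ref{prop: separate summands D(i_1,...,i_M)n} to break $D(i_1,\dots,i_M)^n$ into a sum of $M+1$ values of the form $D^{a_j}$, and then reduce the whole inequality to a single elementary lemma that lets us collapse two such summands into one.

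First, I would set $a_1=i_1$, $a_k=i_k-i_{k-1}$ for $2\leq k\leq M$, and $a_{M+1}=n-i_M$, so that each $a_k\geq 1$ and $a_1+\cdots+a_{M+1}=n$. By Proposition \ref{prop: separate summands D(i_1,...,i_M)n} we have
\[
D(i_1,\dots,i_M)^n \;=\; D^{a_1}+D^{a_2}+\cdots+D^{a_{M+1}}.
\]
Specializing to $(i_1,\dots,i_M)=(1,\dots,M)$ gives $a_1=\cdots=a_M=1$ and $a_{M+1}=n-M$, so $D(1,\dots,M)^n = M\cdot D^1 + D^{n-M} = D^{n-M}$ because $D^1=0$ by convention. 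This establishes the claimed equality in the statement.

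The key step is then to prove the auxiliary inequality
\[
D^a + D^b \;\leq\; D^{a+b-1} \qquad \text{for all integers } a,b\geq 1.
\]
Using the explicit formula $D^x=\lfloor x^2/2\rfloor$ from (\ref{eq: D^n}), write $2D^x = x^2-\epsilon_x$ with $\epsilon_x\in\{0,1\}$ according to the parity of $x$. Since
\[
(a+b-1)^2-a^2-b^2 \;=\; 2(a-1)(b-1)-1,
\]
a short case analysis on the parities of $a$ and $b$ turns the inequality into checking $2(a-1)(b-1)\geq \epsilon_{a+b-1}-\epsilon_a-\epsilon_b+1$; the only mildly delicate case is when $a$ and $b$ are both even, where $(a-1)(b-1)\geq 1$ is automatic. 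This is the most technical point, but it is a short verification rather than a real obstacle.

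Finally, I would apply the auxiliary inequality iteratively (say, by induction on $M$, fusing the first two summands at each step):
\[
\sum_{k=1}^{M+1} D^{a_k} \;\leq\; D^{a_1+a_2-1}+\sum_{k=3}^{M+1} D^{a_k} \;\leq\; \cdots \;\leq\; D^{a_1+\cdots+a_{M+1}-M} \;=\; D^{n-M},
\]
which together with the initial formula gives $D(i_1,\dots,i_M)^n\leq D^{n-M}=D(1,\dots,M)^n$, as required. The main technical point is the pairwise inequality $D^a+D^b\leq D^{a+b-1}$; once this is in hand, everything else is bookkeeping via Proposition \ref{prop: separate summands D(i_1,...,i_M)n}.
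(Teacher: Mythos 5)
Your proposal is correct and follows essentially the same route as the paper: both start from the decomposition of Proposition \ref{prop: separate summands D(i_1,...,i_M)n} and then collapse the sum of $M+1$ terms by repeatedly applying the pairwise inequality $D^a+D^b\leq D^{a+b-1}$. The only difference is that you verify this inequality directly from $D^x=\lfloor x^2/2\rfloor$ by a parity case analysis (which checks out), whereas the paper obtains it as $D^a+D^b=D(a)^{a+b}\leq D(1)^{a+b}=D^{a+b-1}$ by citing Propositions \ref{prop: D(i)^n = D^i + D^{n-i}} and \ref{prop: Order on the D(i)^n}, and organizes the induction by peeling off the last summand rather than fusing from the left.
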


\begin{proof}
Notice that $D(1, \dots, M)^n = D^{n-M}$ holds by application of Proposition \ref{prop: separate summands D(i_1,...,i_M)n}. Hence, we just  need to prove the first inequality. To do so, we proceed by induction on $M$. We start with the case $M=1$, in which, by means of Proposition \ref{prop: Order on the D(i)^n}, it is clear that $D(n-i)^n=D(i)^n\leq D(1)^n$, for every value of $n$ and $1\leq i\leq n-1$.

Assume now that $M>1$ and that the result holds for any value of $n$ and distance vectors in $\cD(n)$ having up to $M-1$ ceros. Let us prove that it is also true for $M$ zeros. To do so, consider $M$ indices $1\leq i_1< \dots < i_M\leq n-1$. Notice that $M\leq i_M$. Moreover, if $M=i_M$, then it  holds $i_j=j$, for every $j\in\{1,\dots,M\}$ and therefore $D(i_1,\dots,i_M)^n=D(1,\dots,M)^n.$ Assume now that $M< i_M$. By means of Proposition \ref{prop: separate summands D(i_1,...,i_M)n} , we have
$$
\begin{array}{ccl}
D(i_1, \dots, i_M)^n & = & D^{i_1}+ D^{i_2-i_1} +  \dots + D^{i_M-i_{M-1}} + D^{n-i_M}\\
                     & = & D(i_1, \dots, i_{M-1})^{i_M} + D^{n-i_M}.
\end{array}
$$
Moreover, since $M-1 < i_M-1$, we can apply the induction hypothesis to the case of having $M-1$ zeros in the positions $i_1 < \dots < i_{M-1}$ on $\bbF_q^{i_M}.$ We obtain $D(i_1, \dots, i_{M-1})^{i_M}\leq D^{i_M-(M-1)}$ and Proposition \ref{prop: D(i)^n = D^i + D^{n-i}} along with Proposition \ref{prop: Order on the D(i)^n} gives that 
$$
\begin{array}{ccl}
D(i_1, \dots, i_M)^n & \leq &  D^{i_M-(M-1)} + D^{n-i_M}\\
                     & =    &  D(i_M-M+1)^{i_M-M+1+n-i_M}\\
                     & =    &  D(i_M-M+1)^{n-M+1}\\
                     & \leq &  D(1)^{n-M+1} = D^1 + D^{n-M+1-1} = D^{n-M},
\end{array}
$$ 
as we wanted to prove.

\end{proof}

\subsubsection*{The general case}
We finish this section by generalizing the previous concepts to the general flag variety $\cF_q(\type, n)$ as follows.  As done for the full type case in Definition \ref{def: D(i1,...iM) full}, we can consider distance vectors $\cD(\type, n)$ with a prescribed number of zeros, say $0\leq M\leq r$, in the positions $1\leq i_1 < \dots < i_M \leq r$. We denote the corresponding maximum distance by
$$
D(i_1, \dots, i_M)^{(\type, n)}.
$$
This number represents the maximum possible distance between flags in $\cF_q(\type, n)$ that share simultaneously their subspaces of dimensions $t_{i_1}, \dots, t_{i_M}$. The only distance vector giving this distance and having zeros in its components $i_1, \dots, i_M$ is denoted by $\bD(i_1, \dots, i_M)^{(\type, n)}$ and its $j$-th component is given by
\begin{equation}\label{eq: components bD(i1,..., iM)(type, n)}
D(i_1, \dots, i_M)^{(\type, n)}_j = \min \{ 2t_j, 2(n-t_j), 2|t_j-t_{i_1}|, \dots,  2|t_j-t_{i_M}| \},
\end{equation}
for $1\leq j\leq r$.

Using the projection map $\pi_{\type}$ defined in (\ref{eq: map pi type}), we can give the following description of the distance vector with components as in (\ref{eq: components bD(i1,..., iM)(type, n)}), in terms of  the vector $\bD(t_{i_1}, \dots, t_{i_M})^n$ introduced in Definition \ref{def: D(i1,...iM) full}. The next result generalizes Proposition \ref{prop: proj 1 zero}.

\begin{proposition}\label{prop: projection dist vectors many zeros}
Given a type vector $\type$ and $1\leq M\leq r$ indices $1\leq i_1 < \dots < i_M \leq r$, it holds
$$
\bD(i_1, \dots, i_M)^{(\type, n)} = \pi_{\type}(\bD(t_{i_1}, \dots, t_{i_M})^n).
$$
\end{proposition}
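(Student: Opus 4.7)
The plan is a straightforward component-by-component comparison, exploiting that both objects in the equality are defined by explicit closed-form expressions. There is no real combinatorial or geometric obstacle here; the work is entirely bookkeeping with the definitions.

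First I would fix an arbitrary index $1\leq j\leq r$ and compute the $j$-th component of the right-hand side. By the definition of $\pi_{\type}$ in (\ref{eq: map pi type}), applying $\pi_{\type}$ to a vector of $\bbZ^{n-1}$ extracts the components indexed by $t_1, \dots, t_r$. Hence the $j$-th component of $\pi_{\type}(\bD(t_{i_1}, \dots, t_{i_M})^n)$ is precisely the $t_j$-th component of $\bD(t_{i_1}, \dots, t_{i_M})^n$. Applying formula (\ref{eq: components bD(i_1,...,i_M)n}) from Definition \ref{def: D(i1,...iM) full} with the position $t_j$ and zero-positions $t_{i_1}, \dots, t_{i_M}$ (all of which are valid since $1\leq t_{i_1} < \dots < t_{i_M}\leq n-1$), this component equals
$$
\min\{2t_j,\, 2(n-t_j),\, 2|t_j - t_{i_1}|,\, \dots,\, 2|t_j - t_{i_M}|\}.
$$

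Next I would simply read off the $j$-th component of the left-hand side from formula (\ref{eq: components bD(i1,..., iM)(type, n)}): it is exactly the same expression above. Since both vectors agree in every coordinate $j=1,\dots,r$, the equality of vectors follows. A brief remark should be included to note that the case $M=1$ recovers Proposition \ref{prop: proj 1 zero}, and that the edge cases $M=0$ (giving $\bD^{(\type,n)} = \pi_{\type}(\bD^n)$) and $M=r$ (giving the zero vector) are handled by the same formulas without modification.

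The only step where one should be slightly careful is making sure the indices $t_{i_1}, \dots, t_{i_M}$ satisfy $1\leq t_{i_1} < \dots < t_{i_M}\leq n-1$, which is required to invoke Definition \ref{def: D(i1,...iM) full}; this is immediate from $1\leq i_1 < \dots < i_M\leq r$ and the strict monotonicity $1\leq t_1 < \dots < t_r < n$ of the type vector. No genuine obstacle appears, so the proof reduces to two lines of matching formulas.
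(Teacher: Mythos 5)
Your proof is correct and follows essentially the same component-by-component argument as the paper: apply $\pi_{\type}$ to extract the $t_j$-th coordinate, evaluate it via formula (\ref{eq: components bD(i_1,...,i_M)n}), and match against formula (\ref{eq: components bD(i1,..., iM)(type, n)}). The extra remarks on edge cases and on the validity of the indices $t_{i_1}<\dots<t_{i_M}$ are harmless but not needed beyond what the paper already records.
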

\begin{proof}
Consider the type vector $\type$ and take $1\leq M\leq r$ indices $1\leq i_1 < \dots < i_M \leq r$. Notice that, for every $1\leq j\leq r$, the $j$-th component of $\pi_{\type}(\bD(t_{i_1}, \dots, t_{i_M})^n)$ is exactly the $t_j$-th one of  $\bD(t_{i_1}, \dots, t_{i_M})^n$ that, by (\ref{eq: components bD(i_1,...,i_M)n}), is:
$$
\min \{ 2t_j, 2(n-t_j), 2|t_j-t_{i_1}|, \dots, 2|t_j-t_{i_M}| \}.
$$
This value corresponds to the $j$-th component of $\bD(i_1, \dots, i_M)^{(\type, n)}$, as we wanted to prove.
\end{proof}

Next, we give a generalization of Proposition \ref{prop: separate summands D(i_1,...,i_M)n} for any arbitrary type vector $\type=(t_1, \dots, t_r)$. To do so, consider $1\leq M\leq r$ zeros in the positions $1\leq i_1 < \dots < i_M \leq r$. These positions allow us to split $\type$ into $M+1$ new type vectors, that we denote by $\type^1, \dots, \type^{M+1}$, given by
\begin{equation}\label{def: subtypes}
\left\lbrace
\begin{array}{ccl}
\type^1     &=& (t_1, \dots, t_{i_1-1}),  \\ 
\type^{j+1} &=& (t_{i_{j}+1} - t_{i_{j}}, \dots, t_{i_{j+1}-1}-t_{i_{j}}), \ \text{for} \ 1\leq j\leq M-1, \\
\type^{M+1} &=& (t_{i_{M}+1}-t_{i_{M}}, \dots, t_{r}-t_{i_{M}}). 
\end{array}
\right.
\end{equation}

Using this notation, the next result holds.
\begin{proposition}\label{prop: separate summands general type}
Given a type vector $\type$ and a choice of $1\leq M\leq r$ ordered indices $1\leq i_1< \dots <i_M\leq r$, then the value $D(i_1, \dots, i_M)^{(\type, n)}$ satisfies:
$$
D(i_1, \dots, i_M)^{(\type, n)}  =  D^{(\type^1, t_{i_1})}+ D^{(\type^2, t_{i_2}-t_{i_1})} +\dots + D^{(\type^{M+1}, n-t_{i_M})}.
$$
\end{proposition}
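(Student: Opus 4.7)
The plan is to compute $D(i_1,\ldots,i_M)^{(\type,n)}$ directly as the sum of the components of $\bD(i_1,\ldots,i_M)^{(\type,n)}$ given by (\ref{eq: components bD(i1,..., iM)(type, n)}), and then partition this sum into $M+1$ blocks of indices determined by the positions $i_1,\ldots,i_M$. The argument mirrors that of Proposition \ref{prop: separate summands D(i_1,...,i_M)n}, but we work at the level of the type vector rather than invoking induction.

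Concretely, I would set the conventions $t_{i_0}:=0$ and $t_{i_{M+1}}:=n$, and split the sum $\sum_{j=1}^{r}D(i_1,\ldots,i_M)^{(\type,n)}_j$ into the contributions of three kinds of indices: the initial block $1\le j\le i_1-1$, the middle blocks $i_k+1\le j\le i_{k+1}-1$ for $1\le k\le M-1$, and the final block $i_M+1\le j\le r$ (the positions $j=i_k$ contribute $0$ by construction). For an index $j$ in one of the middle blocks, one has $t_{i_k}<t_j<t_{i_{k+1}}$ together with $t_{i_k}\ge 0$ and $t_{i_{k+1}}\le n$, so that $2(t_j-t_{i_k})\le 2t_j$ and $2(t_{i_{k+1}}-t_j)\le 2(n-t_j)$; moreover $t_j-t_{i_k}$ (resp.\ $t_{i_{k+1}}-t_j$) is the smallest value of $|t_j-t_{i_\ell}|$ with $\ell\le k$ (resp.\ $\ell\ge k+1$). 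Hence the minimum in (\ref{eq: components bD(i1,..., iM)(type, n)}) collapses to
\[
D(i_1,\ldots,i_M)^{(\type,n)}_j=\min\{2(t_j-t_{i_k}),\,2(t_{i_{k+1}}-t_j)\}.
\]
A completely analogous reduction is performed on the initial and final blocks, using $t_{i_0}=0$ and $t_{i_{M+1}}=n$.

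Now I would recognize the sum over the $k$-th block as $D^{(\type^{k+1},\,t_{i_{k+1}}-t_{i_k})}$. Indeed, by the definition of $\type^{k+1}$ in (\ref{def: subtypes}) and the formula (\ref{eq: D^(type,n)}) for $D^{(\type,n)}$, the reindexing $j=i_k+\ell$ with $1\le\ell\le i_{k+1}-1-i_k$ gives
\[
D^{(\type^{k+1},\,t_{i_{k+1}}-t_{i_k})}
=\sum_{\ell=1}^{i_{k+1}-1-i_k}\min\bigl\{2(t_{i_k+\ell}-t_{i_k}),\,2(t_{i_{k+1}}-t_{i_k+\ell})\bigr\},
\]
which coincides with $\sum_{j=i_k+1}^{i_{k+1}-1}D(i_1,\ldots,i_M)^{(\type,n)}_j$ after the reductions of the previous paragraph. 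The same identification works verbatim at the endpoints, yielding the initial and final terms $D^{(\type^1,\,t_{i_1})}$ and $D^{(\type^{M+1},\,n-t_{i_M})}$. Summing the contributions of all blocks gives the claimed formula.

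The only real obstacle is the careful bookkeeping in the middle blocks: one must verify that the candidates $2t_j$ and $2(n-t_j)$ in (\ref{eq: components bD(i1,..., iM)(type, n)}) are always dominated by the local distances to the two nearest prescribed zeros, and that among all $|t_j-t_{i_\ell}|$ only the two nearest $\ell$'s matter. Both facts, however, are immediate from the monotonicity $t_{i_1}<\cdots<t_{i_M}$ together with $0\le t_{i_k}$ and $t_{i_{k+1}}\le n$, so the proof reduces to an index-partitioning argument.
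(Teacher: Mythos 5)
Your proof is correct. Worth noting: the paper states this proposition without proof, presenting it as a generalization of Proposition \ref{prop: separate summands D(i_1,...,i_M)n}, whose proof proceeds by induction on $M$ (peeling off the first zero position and reindexing the remaining sum as a distance computation in a smaller ambient space). You instead give a direct, non-inductive argument: you show that in each block between consecutive prescribed zeros the minimum in (\ref{eq: components bD(i1,..., iM)(type, n)}) collapses to the two nearest terms, and then identify each block sum with $D^{(\type^{k+1},\,t_{i_{k+1}}-t_{i_k})}$ via (\ref{eq: D^(type,n)}) and (\ref{def: subtypes}). The collapsing step and the block identification both check out (each inequality follows from the monotonicity of the $t_{i_\ell}$ and $0\leq t_{i_k}$, $t_{i_{k+1}}\leq n$), and your method has the advantage of specializing immediately to the full-type case, so it would also replace the induction in Proposition \ref{prop: separate summands D(i_1,...,i_M)n}. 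One small housekeeping point: when $i_{k+1}=i_k+1$ the type vector $\type^{k+1}$ of (\ref{def: subtypes}) is empty and the corresponding block sum is empty, so one should adopt the convention that $D^{(\type^{k+1},\,\cdot)}=0$ in that case (analogous to the paper's convention $D^1=0$); your empty-sum reading handles this, but it deserves a sentence. Note also that the alternative route via Proposition \ref{prop: projection dist vectors many zeros} plus the full-type result does not work directly, since projecting by $\pi_{\type}$ discards components of $\bD(t_{i_1},\dots,t_{i_M})^n$, so a component-wise argument like yours is genuinely needed here.
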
 

The next example reflects this fact.

\begin{example}
Take $n=12$ and consider the type vector $\type= (1,3,5,6,8,10,11)$ of length $r=7$. Assume that we place $M=2$ zeros in the positions $i_1=3$ and $i_2=5$, i.e., the ones corresponding to the dimensions $t_3=5$ and $t_5=8$. In this case, by means of (\ref{def: subtypes}), we have
$$
\type^1=(1, 3), \ \ \type^2= (6-5)= (1) \ \ \text{and} \ \ \type^3= (10-8, 11-8)= (2,3).
$$
Moreover, by (\ref{eq: components bD(i1,..., iM)(type, n)}), it holds
$$
\bD(3,5)^{(\type, 12)} = ( 2, 4, \mathbf{0}, 2, \mathbf{0}, 4, 2).
$$
Observe that the zero components of $\bD(3,5)^{(\type, 12)}$ allow us to split this vector into three new ones, which are precisely
$$
\bD^{(\type^1, 5)} = (2,4), \ \ \bD^{(\type^2, 8-5)}=(2) \ \ \text{and} \ \  \bD^{(\type^3, 12-8)} =(4,2).
$$
Hence, we have
$$
D(3,5)^{(\type, 12)}= 2 + 4 + 0 + 2 + 0+ 4  + 2  = D^{(\type^1, 5)} + D^{(\type^2, 8-5)} + D^{(\type^3, 12-8)},
$$
as stated in Proposition \ref{prop: separate summands general type}.
\end{example}

\begin{remark}\label{rem: values Ds only depend on the variety}
Notice that the computation of the distance $D(i_1, \dots, i_M)^{(\type, n)}$ only depends on the flag variety $\cF_q(\type, n)$ and on the choice of the indices $i_1, \dots, i_M$. As a result, these values can be computed in advance, before considering any particular flag code, as we will see in Section \ref{sec: example}.
\end{remark}

In the following sections we will take advantage of this study of the values $D(i_1, \dots, i_M)^{(\type, n)}$ in order to derive some properties related to the structure and cardinality of flag codes.

\section{Disjointness in flag codes}\label{sec: disjointness} 

Recall that, given a flag code $\cC\subseteq \cF_q(\type, n)$, for every $1\leq i\leq r$, its \emph{$i$-th projected code} is the constant dimension code 
$$
\cC_i = p_i(\cC) \subseteq \cG_q(t_i, n),
$$
where $p_i$ is the projection map defined in (\ref{def: projection}). As a consequence, for every $1\leq i \leq r$, we have
$$
|\cC_i|= |p_i(\cC)| \leq |\cC|
$$
and the equality holds if, and only if, the projection $p_i$ is injective when restricted to $\cC$. If we have the equality for all $1\leq i\leq r$, i.e., if $|\cC|=|\cC_1|=\dots =|\cC_r|$, the flag code $\cC$ is said to be \emph{disjoint} (see  \cite{CasoPlanar}). Under the \emph{disjointness} property, the code cardinality is completely determined by its projected codes and different flags never share a subspace. Moreover, observe that every flag code $\cC\subseteq\cF_q(\type, n)$ with $|\cC|=1$ is trivially disjoint and it holds $d_f(\cC)= \sum_{i=1}^r  d_S(\cC_i)=d_S(\cC_i)=0$, for every $1\leq i\leq r$. On the other hand, if $\cC$ is a disjoint flag code with $|\cC|\geq 2$, then 
\begin{equation}\label{eq: bound distance disjoint}
d_f(\cC) \geq \sum_{i=1}^r  d_S(\cC_i)
\end{equation}
and $d_S(\cC_i)>0$, for every $1\leq i\leq r$. In particular, we obtain $d_f(\cC) \geq 2r$.

\begin{remark}
Disjoint flag codes in $\cF_q(\type, n)$ in which expression (\ref{eq: bound distance disjoint}) holds with equality are called \emph{consistent} (see  \cite{Consistent}). It is quite easy to see that this family of disjoint flag codes is also characterized by the property of having as a unique distance vector $(d_S(\cC_1), \dots, d_S(\cC_r))$. Optimum distance flag codes in $\cF_q(\type, n)$ are a particular class of consistent flag codes whose associated distance vector is $\bD^{(\type, n)}$ defined in (\ref{eq: components vector D type n}).
\end{remark}

The simple structure of disjoint flag codes leads us to seek a generalization of this concept. We do so by using the next family of projections. Consider the flag variety $\cF_q(\type, n)$ and take $1\leq M\leq r$ indices $1\leq i_1 < i_1 < \dots < i_M\leq r$. The \emph{$(i_1, \dots, i_M)$-projection map} is given as
\begin{equation}\label{def: general projection}
\begin{array}{ccccc}
p_{(i_1, \dots, i_M)}  & : & \cF_q(\type, n)        & \longrightarrow & \cF_q((t_{i_1}, \dots, t_{i_M}), n) \\
                       &   &  (\cF_1, \dots, \cF_r) & \longmapsto     & (\cF_{i_1}, \dots, \cF_{i_M})
\end{array}
\end{equation}
and the value $M$ will be called the \emph{length of the projection}. Now, given a flag code $\cC$ in $\cF_q(\type,n)$, we can define a set of flag codes of length $M$, naturally associated to $\cC$, by using these projection maps. 
\begin{definition}
Let $\cC\subset \cF_q(\type, n)$ be a flag code and fix $1\leq M\leq r$ indices $1\leq i_1 < \dots < i_M\leq r$.  The set $p_{(i_1, \dots, i_M)}(\cC)$ is called \emph{the $(i_1, \dots, i_M)$-projected code of $\cC$}. The images of $\cC$ by all the projections of length $M$ constitute the set of the so-called \emph{projected codes of length $M$ of $\cC$}.
\end{definition}

Observe that in case $M=1$, both projections $p_{i_1}$ and $p_{(i_1)}$, defined in  (\ref{def: projection}) and (\ref{def: general projection}) respectively, coincide. Hence, the $(i)$-projected code is just the $i$-projected (subspace) code defined in Section \ref{sec: prelim}, seen now as a flag code of length one.  

Next, we use these new projected codes and we introduce two wider notions of disjointness.

\begin{definition}
Let $\cC\subseteq\cF_q(\type, n)$ be a flag code and take $1\leq M\leq r$ specific indices $1\leq i_1 < \dots < i_M\leq r$. The code $\cC$ is said to be \emph{$(i_1, \dots, i_M)$-disjoint} if the projection $p_{(i_1, \dots, i_M)}$ is injective when restricted to $\cC$. If this condition holds for every choice of $M$ indices $1\leq i_1 < \dots < i_M\leq r$, we say that $\cC$ is \emph{$M$-disjoint}.
\end{definition}
 
According to this definition, we provide the next geometric interpretation of $(i_1, \dots, i_M)$-disjoint flag codes.
\begin{remark}
Consider the type vector $\type$ and $1\leq M\leq r$ indices $1\leq i_1 < \dots < i_M\leq r$. A code $\cC\subseteq\cF_q(\type, n)$ is $(i_1, \dots, i_M)$-disjoint if, and only if, different flags in $\cC$ never share simultaneously their subspaces of dimensions $t_{i_1}, \dots, t_{i_M}$. Similarly, $\cC$ is $M$-disjoint if different flags $\cC$ never have $M$ equal subspaces.
\end{remark}

\begin{example}
Let $\{\be_1, \be_2, \be_3, \be_4, \be_5 \}$ be the standard $\bbF_q$-basis of $\bbF_q^5$. We consider the full flag code $\cC$ on $\bbF_q^5$ given by the flags
$$
\begin{array}{cccccc} 
\mathcal{F}^1 &=& (\left\langle \be_1 \right\rangle, & \left\langle \be_1, \be_2 \right\rangle , & \left\langle \be_1, \be_2, \be_3 \right\rangle,  &\left\langle \be_1, \be_2, \be_3, \be_4 \right\rangle),\\
\mathcal{F}^2 &=& (\left\langle \be_1 \right\rangle, & \left\langle \be_1, \be_3 \right\rangle , & \left\langle \be_1, \be_2, \be_3 \right\rangle, &\left\langle \be_1, \be_2, \be_3, \be_4 \right\rangle),\\
\mathcal{F}^1 &=& (\left\langle \be_1 \right\rangle, & \left\langle \be_1, \be_3 \right\rangle , & \left\langle \be_1, \be_3, \be_5 \right\rangle, & \left\langle \be_1, \be_3, \be_4, \be_5 \right\rangle).
\end{array}
$$
On the one hand, observe that no pair of flags in $\cC$ share their second and third subspaces at the same time, i.e., $\cC$ is a $(2,3)$-disjoint flag code. On the other hand, it is not $(i_1, i_2)$-disjoint for any other choice of indices $1\leq i_1 < i_2\leq 4$. As a result, the code $\cC$ is not $2$-disjoint.
\end{example}

\begin{proposition}
Let $\cC\subseteq \cF_q(\type, n)$ be an $(i_1, \dots, i_M)$-disjoint flag code for some choice of $1\leq M\leq r$ indices $1\leq i_1 < \dots < i_M\leq r$. Then, for every choice of $M\leq N\leq r$ integers $1\leq j_1 < \dots < j_N\leq r$ such that $\{i_1, \dots, i_M\}\subseteq \{j_1, \dots, j_N\}$, the code $\cC$ is $(j_1, \dots, j_N)$-disjoint. In particular, if $\cC$ is $M$-disjoint, then it is $N$-disjoint as well.
\end{proposition}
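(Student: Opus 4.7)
The plan is to exploit the fact that any projection of length $M$ factors through every projection of length $N\geq M$ whose indices include $\{i_1,\dots,i_M\}$, so non-injectivity of the longer projection on $\cC$ would force non-injectivity of the shorter one, contradicting $(i_1,\dots,i_M)$-disjointness.

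More precisely, fix indices $1\leq j_1 < \dots < j_N\leq r$ with $\{i_1, \dots, i_M\}\subseteq \{j_1, \dots, j_N\}$. I would first define an auxiliary projection
$$
q : \cF_q((t_{j_1},\dots,t_{j_N}), n) \longrightarrow \cF_q((t_{i_1},\dots,t_{i_M}), n)
$$
sending $(\cG_1, \dots, \cG_N)$ to the subtuple $(\cG_{k_1}, \dots, \cG_{k_M})$, where $k_\ell$ is the (unique) index in $\{1,\dots,N\}$ such that $j_{k_\ell}=i_\ell$. By construction we then have the factorization
$$
p_{(i_1, \dots, i_M)} = q \circ p_{(j_1, \dots, j_N)},
$$
which is the main observation.

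Next, suppose by contradiction that $p_{(j_1,\dots,j_N)}$ is not injective on $\cC$, i.e., there exist $\cF\neq \cF'$ in $\cC$ with $p_{(j_1,\dots,j_N)}(\cF) = p_{(j_1,\dots,j_N)}(\cF')$. Applying $q$ to both sides and using the factorization above yields $p_{(i_1,\dots,i_M)}(\cF) = p_{(i_1,\dots,i_M)}(\cF')$, contradicting the $(i_1,\dots,i_M)$-disjointness of $\cC$. Hence $p_{(j_1,\dots,j_N)}$ is injective on $\cC$, proving that $\cC$ is $(j_1,\dots,j_N)$-disjoint.

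Finally, for the ``in particular'' clause, assume $\cC$ is $M$-disjoint and pick any $N$ indices $1\leq j_1 <\dots< j_N\leq r$. Choose any subset of $M$ indices from $\{j_1,\dots,j_N\}$, say $\{i_1,\dots,i_M\}$ with $i_1 < \dots < i_M$. Since $\cC$ is $M$-disjoint, it is in particular $(i_1,\dots,i_M)$-disjoint, and by the first part it is $(j_1,\dots,j_N)$-disjoint. As the choice of $j_1<\dots<j_N$ was arbitrary, $\cC$ is $N$-disjoint. The argument is essentially formal; no step should pose any real obstacle beyond setting up the factorization cleanly.
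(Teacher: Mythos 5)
Your proof is correct and follows essentially the same route as the paper: the paper also argues by contradiction that two flags agreeing on all the $j$-coordinates must agree on the sub-collection of $i$-coordinates, violating $(i_1,\dots,i_M)$-disjointness, and handles the $M$-disjoint case the same way. Your explicit factorization $p_{(i_1,\dots,i_M)}=q\circ p_{(j_1,\dots,j_N)}$ is just a slightly more formal packaging of the same observation.
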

\begin{proof}
Assume that $\cC$ is not a $(j_1, \dots, j_N)$-disjoint flag code. Hence, there exist different flags $\cF, \cF\in\cC$ such that 
$(\cF_{j_1}, \dots, \cF_{j_N})=(\cF'_{j_1}, \dots, \cF'_{j_N})$. Since $\{i_1, \dots, i_M\}\subseteq \{j_1, \dots, j_N\}$, then we have $(\cF_{i_1}, \dots, \cF_{i_M})=(\cF'_{i_1}, \dots, \cF'_{i_M})$, which is a contradiction with the fact that $\cC$ is an $(i_1, \dots, i_M)$-disjoint flag code. Similarly, assume now that $\cC$ is not $N$-disjoint. The previous argument leads to different flags sharing $N\geq M$ subspaces at the same time. In other words, the code $\cC$ cannot be $M$-disjoint. 
\end{proof}

At this point, we relate the $M$-disjointness property of a flag code with its minimum distance. These relationships will help us to establish bounds for flag codes in Section \ref{sec: bounds}. We start giving a lower bound for the distance of $M$-disjoint flag codes in terms of the distances of some of their projected codes of length $1$.
\begin{proposition}\label{prop: lower bound distance M-disjoint}
Let $\cC\subseteq\cF_q(\type, n)$  be a flag code and consider an integer $1\leq M\leq r.$ If  $\cC$ is $M$-disjoint, then there exist $r-(M-1)$ indices $1\leq i_1 < \dots < i_{r-M+1}\leq r$ such that  $d_S(\cC_{i_j})\neq 0$ and
$$
d_f(\cC)\geq \sum_{j=1}^{r-M+1} d_S(\cC_{i_j}).
$$  
\end{proposition}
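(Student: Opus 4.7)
The plan is to pick a pair of distinct flags in $\cC$ realizing the minimum distance $d_f(\cC)$ and then use $M$-disjointness to count the positions in which they differ. The key elementary observation is that $M$-disjointness is equivalent to saying that any two distinct flags $\cF, \cF'\in \cC$ share at most $M-1$ subspaces: if they shared $M$ common subspaces at positions $1\leq i_1<\dots<i_M\leq r$, then $p_{(i_1,\dots,i_M)}(\cF)=p_{(i_1,\dots,i_M)}(\cF')$, contradicting the $(i_1,\dots,i_M)$-disjointness that follows from $M$-disjointness.

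With this in hand, I would choose $\cF, \cF' \in \cC$ with $d_f(\cF, \cF') = d_f(\cC)$ and set
$$
S = \{\, i\in\{1,\dots,r\} \ : \ \cF_i \neq \cF'_i\,\}.
$$
By the observation above, $|S|\geq r-M+1$. I would then let $i_1<\dots<i_{r-M+1}$ be any $r-M+1$ indices chosen from $S$. For each such $i_j$, the subspaces $\cF_{i_j}, \cF'_{i_j}$ lie in $\cC_{i_j}=p_{i_j}(\cC)$ and are distinct, so $|\cC_{i_j}|\geq 2$; hence $d_S(\cC_{i_j})\geq 2\neq 0$ and, by the definition of minimum subspace distance,
$$
d_S(\cF_{i_j}, \cF'_{i_j}) \geq d_S(\cC_{i_j}).
$$

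The bound then follows at once by dropping nonnegative summands and using the inequalities just noted:
$$
d_f(\cC)=d_f(\cF, \cF')=\sum_{i=1}^r d_S(\cF_i, \cF'_i) \geq \sum_{j=1}^{r-M+1} d_S(\cF_{i_j}, \cF'_{i_j}) \geq \sum_{j=1}^{r-M+1} d_S(\cC_{i_j}).
$$
There is no real obstacle here: the whole content of the proof is the simple pigeonhole-type counting that translates $M$-disjointness into a lower bound on $|S|$. The rest is a mechanical combination of the definitions of the projected codes $\cC_i$ and of the flag distance $d_f$.
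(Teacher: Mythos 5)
Your proof is correct and follows essentially the same route as the paper's: pick a minimum-distance pair, use $M$-disjointness to bound the number of shared subspaces by $M-1$, select $r-M+1$ positions where the flags differ, and bound each corresponding subspace distance below by $d_S(\cC_{i_j})$. Your explicit remark that $|\cC_{i_j}|\geq 2$ forces $d_S(\cC_{i_j})\geq 2$ is a slightly more careful justification of the paper's assertion that $d_S(\cC_{i_j})>0$, but otherwise the arguments coincide.
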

\begin{proof}
Let $\cC\subset \cF_q(\type, n)$ be an $M$-disjoint flag code for some integer $1\leq M \leq r$ and consider a pair of different flags $\cF, \cF'\in\cC$ giving the minimum distance. The $M$-disjointness condition makes that $\cF$ and $\cF'$ cannot share more than $M-1$ subspaces. Hence, their  associated distance vector, i.e., the vector
$$
\bd(\cF, \cF')=(d_S(\cF_1, \cF'_1), \dots, d_S(\cF_r, \cF'_r))
$$ 
does not contain more than $M-1$ zeros. As a result, at least, $r-(M-1)$ of its $r$ components are nonzero. Thus, there exist different indices $1\leq i_1 < \dots < i_{r-M+1}\leq r$ such that $d_S(\cF_{i_j}, \cF'_{i_j})\neq 0$. Consequently, we have $d_S(\cF_{i_j}, \cF'_{i_j})\geq d_S(\cC_{i_j})> 0$ and then
$$
d_f(\cC)=d_f(\cF, \cF') \geq \sum_{j=1}^{r-(M-1)} d_S(\cF_{i_j}, \cF'_{i_j}) \geq \ \sum_{j=1}^{r-(M-1)} d_S(\cC_{i_j}).
$$
In other words, the distance of $\cC$ is lower bounded by the sum of nonzero distances of $r-(M-1)$ specific projected codes of length $1$. 
\end{proof}

Observe that, in the previous proof, the choice of the $r-(M-1)$ indices $1\leq i_1 < \dots < i_{r-M+1}\leq r$ strongly depends on the election of the pair of flags $\cF, \cF'\in\cC$ giving the minimum distance of the code. On the other hand, if $d_f(\cC)=d_f(\bar{\cF}, \bar{\cF'})$, for another pair of flags $\bar{\cF}, \bar{\cF'}\in\cC$, following the proof of Proposition \ref{prop: lower bound distance M-disjoint}, one might obtain another lower bound for $d_f(\cC)$ as the sum of the (positive) distances of $r-(M-1)$ different projected codes of $\cC$. 

\begin{corollary}
Let $\cC\subseteq \cF_q(\type, n)$ be an $M$-disjoint flag code for some $1\leq M\leq r$. Then it holds
$$
d_f(\cC)\geq \min \left\lbrace \sum_{j=1}^{r-(M-1)} d_S(\cC_{i_j}) \ \big| \ 1\leq i_1 < \dots < i_{r-(M-1)}\leq r  \ \text{with} \  d_S(\cC_{i_j})\neq 0 \right\rbrace.
$$
In particular, we have that $d_f(\cC)\geq 2(r-(M-1))$.
\end{corollary}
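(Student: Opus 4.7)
The plan is to deduce both inequalities directly from Proposition \ref{prop: lower bound distance M-disjoint}. That proposition asserts the existence of at least one collection of indices $1\leq i_1<\dots<i_{r-M+1}\leq r$ with $d_S(\cC_{i_j})\neq 0$ such that $d_f(\cC)\geq \sum_{j=1}^{r-M+1}d_S(\cC_{i_j})$. Since this particular sum belongs to the set over which we are taking the minimum in the statement, the value of $d_f(\cC)$ is automatically bounded below by the minimum of that set. This immediately yields the first displayed inequality, and is essentially a rewriting of the previous proposition in a form that no longer depends on the specific pair of flags realizing $d_f(\cC)$.

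For the second assertion, I would recall that every projected code $\cC_{i_j}$ lives in some Grassmannian and its minimum subspace distance $d_S(\cC_{i_j})$ is always a nonnegative even integer. The hypothesis $d_S(\cC_{i_j})\neq 0$ then forces $d_S(\cC_{i_j})\geq 2$ for each of the $r-(M-1)$ chosen indices. Summing, every admissible sum in the minimum is at least $2(r-(M-1))$, and therefore so is $d_f(\cC)$.

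The only subtlety worth noting is that the minimum in the statement is well-defined: Proposition \ref{prop: lower bound distance M-disjoint} guarantees at least one choice of indices with all $d_S(\cC_{i_j})\neq 0$, so the set indexing the minimum is nonempty. I do not foresee any real obstacle; the corollary is a straightforward reformulation of the previous proposition combined with the elementary fact that nonzero even integers are at least $2$.
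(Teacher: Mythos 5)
Your proof is correct and follows exactly the route the paper intends: the corollary is stated without a separate proof precisely because it is the immediate consequence of Proposition \ref{prop: lower bound distance M-disjoint} that you describe (one admissible sum lower-bounds $d_f(\cC)$, hence so does the minimum over all admissible sums, which is a nonempty set), together with the observation that nonzero subspace distances are even and hence at least $2$. Nothing is missing.
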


Observe that, if $\cC\subset \cF_q(\type, n)$ is a disjoint flag code, i.e, $1$-disjoint in our new notation, the previous bound coincides with the one given in (\ref{eq: bound distance disjoint}). On the other hand, by using the notation introduced in Section \ref{sec: remarkable values of the distance}, we provide the following sufficient condition on the distance of a flag code to ensure some type of disjointness. More precisely, we can conclude that a given flag code is $(i_1, \dots, i_M)$-disjoint just by checking if its minimum distance is greater than the value $D(i_1, \dots, i_M)^{(\type, n)}$. Recall that, as said in Remark \ref{rem: values Ds only depend on the variety}, fixed the flag variety $\cF_q(\type, n)$, these values  only depend on the choice of the indices $1\leq i_1 < \dots < i_M\leq r$. Hence they are independent from any specific flag code and can be computed and stored as parameters associated to $\cF_q(\type, n)$. We use these remarkable distances as follows.

\begin{theorem}\label{theo: d implies disjoint}
Let $\cC\subseteq\cF_q(\type, n)$ be a flag code such that  $d_f(\cC) > D(i_1, \dots, i_M)^{(\type, n)},$ for some choice of $1\leq M\leq r$ indices $1\leq i_1 < \dots < i_M\leq r$. Then $\cC$ is $(i_1, \dots, i_M)$-disjoint.
\end{theorem}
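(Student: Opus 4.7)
The plan is to argue by contrapositive. Suppose $\cC$ is \emph{not} $(i_1, \dots, i_M)$-disjoint; I want to show $d_f(\cC) \leq D(i_1, \dots, i_M)^{(\type, n)}$.

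By the definition of $(i_1, \dots, i_M)$-disjointness, the failure of this property gives two distinct flags $\cF, \cF' \in \cC$ with $p_{(i_1, \dots, i_M)}(\cF) = p_{(i_1, \dots, i_M)}(\cF')$, i.e., $\cF_{i_j} = \cF'_{i_j}$ for every $1 \leq j \leq M$. Consequently, $d_S(\cF_{i_j}, \cF'_{i_j}) = 0$ for all such $j$, so the distance vector $\bd(\cF, \cF')$ has zeros in (at least) the positions $i_1, \dots, i_M$.

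Now I invoke the very definition of $D(i_1, \dots, i_M)^{(\type, n)}$: it is the maximum flag distance attained by any distance vector in $\cD(\type, n)$ having zeros in the positions $i_1, \dots, i_M$. Hence
\[
d_f(\cF, \cF') \leq D(i_1, \dots, i_M)^{(\type, n)}.
\]
Since $\cF \neq \cF'$ both lie in $\cC$, one has $d_f(\cC) \leq d_f(\cF, \cF')$, so $d_f(\cC) \leq D(i_1, \dots, i_M)^{(\type, n)}$, contradicting the hypothesis $d_f(\cC) > D(i_1, \dots, i_M)^{(\type, n)}$. Therefore $\cC$ must be $(i_1, \dots, i_M)$-disjoint.

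There is no real obstacle here: the result is essentially a direct translation of the extremal definition of $D(i_1, \dots, i_M)^{(\type, n)}$ into a structural statement about codes. The only point to be slightly careful about is that the distance vector of $\cF$ and $\cF'$ may have \emph{more} zeros than just those in positions $i_1, \dots, i_M$, but this does not affect the argument since $D(i_1, \dots, i_M)^{(\type, n)}$ bounds the distance of any vector having zeros \emph{at least} at those prescribed positions.
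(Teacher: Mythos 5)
Your proof is correct and follows essentially the same route as the paper's: assume failure of $(i_1,\dots,i_M)$-disjointness, extract two distinct flags sharing the subspaces at those positions, observe that their distance vector has zeros there, and invoke the extremal definition of $D(i_1,\dots,i_M)^{(\type,n)}$ to contradict the distance hypothesis. Your closing remark about possibly extra zeros is a harmless (and correct) clarification that the paper leaves implicit.
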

\begin{proof}
Assume that $\cC$ is not $(i_1, \dots, i_M)$-disjoint for this particular choice of indices $1\leq i_1 < \dots < i_M\leq r$. Then we can find different flags $\cF, \cF'\in\cC$ such that $\cF_{i_j}=\cF'_{i_j}$ for every $1\leq j\leq M$. As a result, the distance vector associated to the pair of flags $\cF$ and $\cF'$ has, at least, $M$ zeros in the positions $i_1, \dots, i_M$. As a result, and according to the definition of $D(i_1, \dots, i_M)^{(\type, n)}$, we have
$$
d_f(\cC) \leq d_f(\cF, \cF') \leq D(i_1, \dots, i_M)^{(\type, n)}, 
$$
which is a contradiction.
\end{proof}

The previous result leads to a sufficient condition for flag codes to be $M$-disjoint in terms of their minimum distance.

\begin{corollary}\label{cor: dist d implies M-disjoint}
Let $\cC\subseteq\cF_q(\type, n)$ be a flag code and consider an integer $1\leq M\leq r.$ If 
$$
d_f(\cC) > \max \left\lbrace D(i_1, \dots, i_M)^{(\type, n)} \ \big| \ 1\leq i_1 < \dots < i_M\leq r \right\rbrace,
$$
then $\cC$ is $M$-disjoint.
\end{corollary}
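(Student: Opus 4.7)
The plan is to derive this corollary directly from Theorem \ref{theo: d implies disjoint}, since the hypothesis simply packages the condition needed there uniformly over all index choices. Recall that, by definition, $\cC$ is $M$-disjoint precisely when it is $(i_1,\dots,i_M)$-disjoint for every choice of $M$ indices $1\leq i_1<\dots<i_M\leq r$. So the natural strategy is to fix an arbitrary such tuple and verify the $(i_1,\dots,i_M)$-disjointness condition.

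First, I would fix an arbitrary choice of indices $1\leq i_1<\dots<i_M\leq r$. By the hypothesis,
$$
d_f(\cC) > \max\left\{ D(j_1,\dots,j_M)^{(\type, n)} \ \big| \ 1\leq j_1<\dots<j_M\leq r \right\} \geq D(i_1,\dots,i_M)^{(\type, n)},
$$
so the individual inequality $d_f(\cC) > D(i_1,\dots,i_M)^{(\type, n)}$ holds for this tuple. Applying Theorem \ref{theo: d implies disjoint} to $\cC$ with these indices yields that $\cC$ is $(i_1,\dots,i_M)$-disjoint.

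Since the choice of $(i_1,\dots,i_M)$ was arbitrary, $\cC$ is $(i_1,\dots,i_M)$-disjoint for every admissible tuple of $M$ indices, which is exactly the definition of $M$-disjointness. There is essentially no obstacle here: the content is entirely carried by Theorem \ref{theo: d implies disjoint}, and the corollary is merely the statement one obtains by taking the maximum over all $\binom{r}{M}$ choices of indices so that a single numerical inequality on $d_f(\cC)$ suffices to force disjointness with respect to every one of them simultaneously.
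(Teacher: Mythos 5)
Your proof is correct and is exactly the argument the paper intends: the corollary is stated without proof as an immediate consequence of Theorem \ref{theo: d implies disjoint}, obtained by applying that theorem to each choice of indices and using the definition of $M$-disjointness. Nothing is missing.
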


\begin{remark}
Observe that comparing the distance of a code with the maximum of the values $D(i_1, \dots, i_M)^{(\type, n)}$ is not a big deal since, as said in Remark \ref{rem: values Ds only depend on the variety}, for each choice of indices and type vector, this maximum value can be computed in advance. Moreover, in case of working with full flags on $\bbF_q^n$, this maximum value is explicitly computed in Proposition \ref{prop: D(1,...M) is the largest}. Hence, we can give an easier condition to guarantee that a given full flag code is $M$-disjoint as follows.
\end{remark}

\begin{corollary}\label{cor: dist d implies M-disjoint full type}
Let $\cC$ be a full flag code on $\bbF_q^n$. If $d_f(\cC) > D^{(n-M)}$ for some $1\leq M\leq n-1$, then  $\cC$ is $M$-disjoint.
\end{corollary}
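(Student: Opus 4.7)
The plan is to derive this statement directly from Corollary \ref{cor: dist d implies M-disjoint} specialized to the full type vector $\type = (1, 2, \dots, n-1)$, using the explicit maximum computed in Proposition \ref{prop: D(1,...M) is the largest} to simplify the hypothesis.

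First I would observe that for the full type vector, $r = n-1$ and $t_i = i$ for every $1 \leq i \leq n-1$. Hence, for any choice of $1 \leq i_1 < \dots < i_M \leq n-1$, the value $D(i_1, \dots, i_M)^{(\type, n)}$ reduces to $D(i_1, \dots, i_M)^n$ as defined in Definition \ref{def: D(i1,...iM) full}. Corollary \ref{cor: dist d implies M-disjoint} then says that $\cC$ is $M$-disjoint whenever
\[
d_f(\cC) > \max \left\{ D(i_1, \dots, i_M)^n \,\big|\, 1 \leq i_1 < \dots < i_M \leq n-1 \right\}.
\]

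The key step is to evaluate this maximum explicitly. By Proposition \ref{prop: D(1,...M) is the largest}, for any admissible choice of indices we have $D(i_1, \dots, i_M)^n \leq D(1, \dots, M)^n = D^{n-M}$, with equality attained at $(i_1, \dots, i_M) = (1, \dots, M)$. Therefore the maximum appearing in Corollary \ref{cor: dist d implies M-disjoint} equals precisely $D^{n-M}$.

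Combining these two facts, the hypothesis $d_f(\cC) > D^{n-M}$ is exactly the hypothesis needed in Corollary \ref{cor: dist d implies M-disjoint} for the full type vector, and the conclusion that $\cC$ is $M$-disjoint follows immediately. There is no real obstacle here; the statement is a transparent specialization, and the work has already been done in establishing Proposition \ref{prop: D(1,...M) is the largest}, which identifies the worst-case placement of the $M$ prescribed zero components as consecutive starting positions.
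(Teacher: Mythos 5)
Your proof is correct and is exactly the argument the paper intends: the corollary is stated without an explicit proof precisely because, as the preceding remark indicates, it follows by combining Corollary \ref{cor: dist d implies M-disjoint} (specialized to the full type vector) with the identification $\max\{D(i_1,\dots,i_M)^n\} = D(1,\dots,M)^n = D^{n-M}$ from Proposition \ref{prop: D(1,...M) is the largest}. Nothing is missing.
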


Theorem \ref{theo: d implies disjoint} and Corollary \ref{cor: dist d implies M-disjoint} state sufficient conditions to deduce some degree of disjointness in terms of the minimum distance of a flag code. The concept of disjointness and, in particular, these two results will be crucial to establish bounds for the cardinality of flag codes in $\cF_q(\type, n)$ with a prescribed minimum distance.

\section{Bounds for the cardinality of flag codes}\label{sec: bounds}

This section is devoted to give upper bounds for the cardinality of flag codes from arguments introduced in both Sections \ref{sec: remarkable values of the distance} and \ref{sec: disjointness}. As said in Section \ref{sec: prelim}, the value $A_q^f(n, d, \type)$ denotes the maximum possible size for flag codes in $\cF_q(\type, n)$ with distance $d$. In the particular case of full flags on $\bbF_q^n$, we just write $A_q^f(n, d)$.  Up to the moment, bounds for $A_q^f(n, d)$ have only been studied in \cite{Kurz20}. In that paper, the author develops techniques to determine upper bounds for the size of full flag codes and gives an exhaustive list of them for small values of $n$. Out of the full type case, the author also exhibits some concrete examples. The bounds in the present paper are valid for any type vector and arise from  different techniques. More precisely, for each value of the distance, we apply Theorem \ref{theo: d implies disjoint} and Corollaries \ref{cor: dist d implies M-disjoint} and \ref{cor: dist d implies M-disjoint full type}, in order to ensure certain degree of disjointness and derive upper bounds for $A_q^f(n, d, \type)$, related to the size of a suitable flag variety.

From now on, we will write $d$ to denote a possible distance between flags in $\cF_q(\type, n)$, that is, an even integer $d\in [0, D^{(\type, n)}]$. Next we will use the values $D(i_1, \dots, i_M)^{(\type, n)}$ defined in Section \ref{sec: remarkable values of the distance}, along with the condition of $(i_1, \dots,i_M)$-disjointness introduced in Section \ref{sec: disjointness}, to derive upper bounds for $A_q^f(n, d, \type)$.

\begin{theorem}\label{theo: bound general type}
If $d > D(i_1, \dots, i_M)^{(\type, n)}$, for a particular choice of $1\leq M\leq r$ indices $1\leq i_1 < \dots < i_M\leq r$, then 
$$
A_q^f(n, d, \type)\leq |\cF_q((t_{i_1}, \dots, t_{i_M}), n)| = \begin{bmatrix} n \\ t_1 \end{bmatrix}_q \begin{bmatrix} n -t_1 \\ t_2-t_1 \end{bmatrix}_q \cdots \begin{bmatrix} n-t_{r-1} \\ n-t_r \end{bmatrix}_q.
$$
\end{theorem}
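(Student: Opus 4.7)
The plan is to combine Theorem~\ref{theo: d implies disjoint} with the cardinality formula (\ref{eq: card flags}) for the flag variety. The strategy is straightforward: force every optimal flag code of minimum distance $d$ to inject into the smaller flag variety $\cF_q((t_{i_1}, \dots, t_{i_M}), n)$ via the projection $p_{(i_1, \dots, i_M)}$ introduced in (\ref{def: general projection}).

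More precisely, let $\cC \subseteq \cF_q(\type, n)$ be any flag code of minimum distance $d_f(\cC) = d$. By hypothesis, $d_f(\cC) > D(i_1, \dots, i_M)^{(\type, n)}$, so Theorem~\ref{theo: d implies disjoint} applies and $\cC$ is $(i_1, \dots, i_M)$-disjoint. By definition of this property, the restriction of the projection
$$
p_{(i_1, \dots, i_M)} : \cF_q(\type, n) \longrightarrow \cF_q((t_{i_1}, \dots, t_{i_M}), n)
$$
to $\cC$ is injective, which immediately yields
$$
|\cC| \;=\; |p_{(i_1, \dots, i_M)}(\cC)| \;\leq\; |\cF_q((t_{i_1}, \dots, t_{i_M}), n)|.
$$
Since this bound holds for every such $\cC$, taking the supremum gives $A_q^f(n, d, \type) \leq |\cF_q((t_{i_1}, \dots, t_{i_M}), n)|$.

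To finish, the cardinality of the target flag variety is computed by applying formula (\ref{eq: card flags}) to the subtype $(t_{i_1}, \dots, t_{i_M})$ on $\bbF_q^n$, producing the explicit product of $q$-binomials displayed in the statement. There is essentially no obstacle here: all the substantial work was already done in Theorem~\ref{theo: d implies disjoint} (which packages the distance-vector machinery of Section~\ref{sec: remarkable values of the distance}) and in the known count (\ref{eq: card flags}) of flags of a given type. The proof is therefore a short, two-line application of these two ingredients.
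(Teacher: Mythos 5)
Your proof is correct and follows exactly the same route as the paper's: invoke Theorem~\ref{theo: d implies disjoint} to get $(i_1,\dots,i_M)$-disjointness, deduce injectivity of $p_{(i_1,\dots,i_M)}$ on $\cC$ so that $|\cC|\leq|\cF_q((t_{i_1},\dots,t_{i_M}),n)|$, and conclude with the cardinality formula (\ref{eq: card flags}). Nothing is missing.
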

\begin{proof}
By application of Theorem \ref{theo: d implies disjoint}, we know that every flag code $\cC\subseteq\cF_q(\type, n)$ with distance $d>D(i_1, \dots, i_M)^{(\type, n)}$ must be $(i_1, \dots, i_M)$-disjoint. Hence, it holds 
$$
|\cC| = |p_{(i_1, \dots, i_M)}(\cC)| \leq |\cF_q((t_{i_1}, \dots, t_{i_M}), n)|.
$$
 Consequently, every flag code in $\cF_q(\type, n)$ with minimum distance $d >D(i_1, \dots, i_M)^{(\type, n)}$ cannot contain more flags than the flag variety  $\cF_q((t_{i_1}, \dots, t_{i_M}), n).$ The last equality follows from (\ref{eq: card flags}).
\end{proof}

Comparing the distance $d$ with all the possible values of $D(i_1, \dots, i_M)^{(\type, n)}$ leads to the next result, which is a direct consequence of Theorem \ref{theo: bound general type}.
\begin{corollary}\label{cor: bound d > all D(i) type n}
If $d > D(i_1, \dots, i_M)^{(\type, n)}$ for every election of $1\leq M\leq r$ indices $1\leq i_1 < \dots < i_M\leq r$, then 
$$
A_q^f(n, d, \type)\leq \min\{|\cF_q((t_{i_1}, \dots, t_{i_M}), n)| \ | \ 1\leq i_1 < \dots < i_M\leq r \}.
$$
\end{corollary}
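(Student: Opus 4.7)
The plan is essentially to observe that this is a direct consequence of Theorem \ref{theo: bound general type}, since the hypothesis is exactly what is needed to invoke that theorem for every choice of indices simultaneously. There is no real obstacle: the work of establishing the individual bounds has already been done, and the corollary just packages the strongest of these bounds.

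More concretely, I would proceed as follows. Fix a flag code $\cC\subseteq\cF_q(\type, n)$ with minimum distance $d_f(\cC)=d$. For each $1\leq M\leq r$ and each choice of indices $1\leq i_1 < \dots < i_M\leq r$, the hypothesis guarantees that $d > D(i_1, \dots, i_M)^{(\type, n)}$, so Theorem \ref{theo: bound general type} applies and yields
$$
|\cC| \leq |\cF_q((t_{i_1}, \dots, t_{i_M}), n)|.
$$
Since this inequality holds for every admissible tuple $(i_1,\dots,i_M)$, one may take the minimum of the right-hand sides over all such tuples (and over all $1\leq M\leq r$) without losing validity. Because this bound is uniform in the code $\cC$, the same upper bound holds for $A_q^f(n, d, \type)$, which by definition is the supremum of $|\cC|$ over all flag codes of type $\type$ on $\bbF_q^n$ with minimum distance exactly $d$.

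The only subtlety worth noting is that the minimum in the statement is implicitly taken over all lengths $1\leq M\leq r$ as well as all index choices of that length; once this is made explicit, the argument is purely a matter of intersecting the family of constraints furnished by Theorem \ref{theo: bound general type}. No further use of the structural results from Sections \ref{sec: remarkable values of the distance} and \ref{sec: disjointness} is needed here, since those have already been absorbed into Theorem \ref{theo: bound general type}. Consequently, the proof is one or two sentences long.
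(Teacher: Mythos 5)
Your proposal is correct and matches the paper exactly: the corollary is stated there as a direct consequence of Theorem \ref{theo: bound general type}, obtained by applying that theorem to every admissible choice of indices and taking the minimum of the resulting bounds. Nothing further is needed.
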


Notice that, in the case that $M=1$, Theorem  \ref{theo: bound general type} entails a bound for $A_q^f(n, d, \type)$ in terms of the size of certain Grassmann varieties. 

\begin{corollary}\label{cor: bound d > D(i) type n}
Assume that $d > D(i)^{(\type,n)}$ for some $1\leq i\leq r$. It holds
$$
A_q^f(n, d, \type) \leq |\cG_q(t_i, n)|. 
$$
\end{corollary}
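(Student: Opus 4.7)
The plan is to recognize this as the special case $M=1$ of Theorem \ref{theo: bound general type} and to unwind the notation. When we restrict attention to a single index $i_1 = i$, a ``flag'' of length one is nothing but a single subspace of dimension $t_i$, so we have the canonical identification $\cF_q((t_i), n) = \cG_q(t_i, n)$, and consequently $|\cF_q((t_i), n)| = |\cG_q(t_i, n)| = \begin{bmatrix} n \\ t_i \end{bmatrix}_q$ by (\ref{eq: card grassmannian}). Plugging these into the conclusion of Theorem \ref{theo: bound general type} yields precisely the stated inequality.

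For a self-contained derivation not routing through the general statement, I would proceed as follows. Let $\cC \subseteq \cF_q(\type, n)$ be an arbitrary flag code with minimum distance $d_f(\cC) = d > D(i)^{(\type, n)}$. Applying Theorem \ref{theo: d implies disjoint} with $M=1$ and the single index $i_1 = i$, one concludes that $\cC$ is $(i)$-disjoint; by definition, this means that the projection $p_{(i)} = p_i$ of (\ref{def: projection}), restricted to $\cC$, is injective. Therefore
\[
|\cC| \;=\; |p_i(\cC)| \;=\; |\cC_i| \;\leq\; |\cG_q(t_i, n)|,
\]
since $\cC_i$ is by definition a subset of $\cG_q(t_i, n)$. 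Taking the supremum over all flag codes $\cC \subseteq \cF_q(\type, n)$ realising minimum distance $d$ gives $A_q^f(n, d, \type) \leq |\cG_q(t_i, n)|$, as required.

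There is no real obstacle here: the content of the statement is entirely packaged into Theorem \ref{theo: d implies disjoint}, and the role of this corollary is to highlight the particularly clean bound that occurs at $M=1$, where the ambient auxiliary ``flag variety'' collapses to an ordinary Grassmannian. The only care needed is to verify the trivial identification $\cF_q((t_i), n) = \cG_q(t_i, n)$ so that the general Theorem \ref{theo: bound general type} indeed specialises to the claimed Grassmannian bound.
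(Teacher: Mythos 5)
Your proposal is correct and matches the paper's approach: the corollary is obtained there exactly as the $M=1$ specialisation of Theorem \ref{theo: bound general type}, using the identification $\cF_q((t_i),n)=\cG_q(t_i,n)$. Your additional self-contained derivation via Theorem \ref{theo: d implies disjoint} is just an unwinding of the proof of that theorem and is likewise correct.
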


Here below, we provide a potentially tighter bound than the one in Corollary \ref{cor: bound d > D(i) type n} in terms of  the maximum possible size for constant dimension codes in $\cG_q(t_i,n)$ with a suitable value of the subspace distance. Notice that, if $d>D(i)^{(\type,n)}$, by the definition of $D(i)^{(\type,n)}$, no distance vector in $\cD(d,\type, n)$ can have a zero as its $i$-th component. Therefore, the value $\bar{d}_i$ defined in (\ref{def: bar d_i and bar D_i}) satisfies $\bar{d}_i\geq 2$. As a consequence, it makes sense to consider the next upper bound.

\begin{theorem}\label{theo: tighter bound d > D(i) type n}
If $d>D(i)^{(\type,n)}$ for some $1\leq i\leq r$, then
$$
A_q^f(n, d, \type) \leq  A_q(n, \bar{d}_i, i).
$$
\end{theorem}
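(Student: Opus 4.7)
The plan is to pass from the flag code to its $i$-th projected constant dimension code and then invoke the definition of $A_q(n,\cdot,\cdot)$. Let $\cC \subseteq \cF_q(\type,n)$ be a flag code with $d_f(\cC)=d > D(i)^{(\type,n)}$. The hypothesis $d > D(i)^{(\type,n)}=D(i;0)^{(\type,n)}$ means that no distance vector in $\cD(d,\type,n)$ can have its $i$-th component equal to zero; in particular, by Theorem \ref{theo: d implies disjoint} applied with $M=1$ and $i_1=i$, the code $\cC$ is $(i)$-disjoint, so the projection $p_i$ is injective when restricted to $\cC$. Hence $|\cC| = |\cC_i|$, where $\cC_i = p_i(\cC) \subseteq \cG_q(t_i,n)$.

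Next I would show that the subspace code $\cC_i$ has minimum distance at least $\bar{d}_i$. Pick two different subspaces $\cU,\cV \in \cC_i$; by injectivity of $p_i$ there exist unique flags $\cF,\cF'\in\cC$ such that $\cF_i=\cU$ and $\cF'_i=\cV$. Put $v:=d_S(\cU,\cV)=d_S(\cF_i,\cF'_i)$, i.e.\ the $i$-th component of the distance vector $\bd(\cF,\cF')$. Suppose, for the sake of contradiction, that $v < \bar{d}_i$. Then by Proposition \ref{prop: relation with bar d}(1) we would have $D(i;v)^{(\type,n)} < d$. On the other hand, since $\bd(\cF,\cF')$ is a distance vector whose $i$-th component equals $v$, its associated flag distance $d_f(\cF,\cF')$ cannot exceed $D(i;v)^{(\type,n)}$. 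Combining, we obtain
\[
d_f(\cF,\cF') \leq D(i;v)^{(\type,n)} < d = d_f(\cC),
\]
which contradicts the definition of the minimum distance of $\cC$. Therefore $d_S(\cU,\cV)\geq \bar{d}_i$ for every pair of distinct elements of $\cC_i$, i.e.\ $d_S(\cC_i)\geq \bar{d}_i$.

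To finish, observe that $\cC_i$ is a constant dimension code in $\cG_q(t_i,n)$ whose minimum distance is at least $\bar{d}_i$, so by the definition of $A_q(n,\bar{d}_i,i)$ (where the last index refers to the dimension $t_i$ of the ambient Grassmannian) we get $|\cC_i| \leq A_q(n,\bar{d}_i,i)$. Putting this together with $|\cC|=|\cC_i|$ yields the required bound $A_q^f(n,d,\type) \leq A_q(n,\bar{d}_i,i)$.

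The only delicate point is the lower estimate $d_S(\cC_i)\geq \bar{d}_i$: one must resist the temptation to compare $v$ directly with $d$, and instead invoke Proposition \ref{prop: relation with bar d} to convert the assumption $v<\bar{d}_i$ into the quantitative inequality $D(i;v)^{(\type,n)}<d$, which is exactly what produces the contradiction with $d_f(\cF,\cF')\geq d_f(\cC)$. Everything else is formal.
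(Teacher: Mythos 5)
Your proof is correct and follows essentially the same route as the paper's: both rely on Theorem \ref{theo: d implies disjoint} to get $(i)$-disjointness (hence $|\cC|=|\cC_i|$) and on Proposition \ref{prop: relation with bar d} to rule out an $i$-th subspace distance below $\bar{d}_i$. The only difference is organizational — you establish $d_S(\cC_i)\geq\bar{d}_i$ directly and then invoke the definition of $A_q(n,\bar{d}_i,i)$, whereas the paper assumes $|\cC|>A_q(n,\bar{d}_i,i)$ and derives the same contradiction.
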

\begin{proof}
Let $\cC$ be a flag code in $\cF_q(\type, n)$ such that $d=d_f(\cC) > D(i)^{(\type,n)}$ and assume that  $|\cC| >  A_q(n, \bar{d}_i, i)$. By means of Theorem \ref{theo: d implies disjoint}, we know that $\cC$ is $(i)$-disjoint, i.e., $|\cC|=|\cC_i|$. Hence $\cC_i$ is a code in $\cG_q(t_i, n)$ with more than $A_q(n, \bar{d}_i, i)$ subspaces. As a result, we have that $d_S(\cC_i) < \bar{d}_i$. Consequently, there must exist different flags $\cF, \cF'\in\cC$ such that $d_S(\cF_i, \cF'_i)=d_S(\cC_i) < \bar{d}_i$. Proposition \ref{prop: relation with bar d} leads to
$$
d= d_f(\cC) \leq d_f(\cF, \cF') \leq  D(i, d_S(\cF_i, \cF'_i)) < d,
$$
which is a contradiction.
\end{proof}

\begin{remark}\label{rem: if d>2 better bound}
Notice that, since $\bar{d}_i\geq 2$, we clearly have 
$$
A_q(n, \bar{d}_i, i) \leq  A_q(n, 2, i) = |\cG_q(i, n)|
$$
and the equality holds if, and only if, $\bar{d}_i=2$. Consequently, the upper bound for $A_q^f(n, d, \type)$ given in Theorem \ref{theo: tighter bound d > D(i) type n} is as good as the one provided in Corollary \ref{cor: bound d > D(i) type n} and it is even tighter in case $\bar{d}_i\geq 4.$ 
\end{remark}

Let us consider now the full flag variety. To do so, from now on, we will write $d$ to denote a feasible distance between full flags on $\bbF_q^n$, i.e., an even integer with $0\leq d\leq D^n$. In this case, all the results in this section still hold true. However, since we have a better description of the values $D(i_1,\dots, i_M)^n$ when we consider the full flag variety, we can give more information for this specific case. For instance, fixed $1\leq M\leq n-1$, instead of checking the condition $d>D(i_1, \dots, i_M)^n$ for every choice of indices as in Corollary \ref{cor: bound d > all D(i) type n}, by means of Proposition \ref{prop: D(1,...M) is the largest},  one just need to ascertain if $d> D^{n-M}$ holds. Moreover, when restricting to the case $M=1$, by means of Proposition \ref{prop: D(i)=D(n-i)}, we can restrict ourselves to indices $1\leq i\leq \lfloor\frac{n}{2}\rfloor$.

The next result follows straightforwardly from the definition of the value $D(i)^n$ (see (\ref{eq: summands D(i)})) along with Propositions \ref{prop: D(i)=D(n-i)} and \ref{prop: Order on the D(i)^n}. 
\begin{lemma}
If $d > D(i)^n$ for some $1\leq i\leq \lfloor\frac{n}{2}\rfloor$, then the values $\bar{d}_j$ defined in (\ref{def: bar d_i and bar D_i}) satisfy
$$
\bar{d}_j \geq 2, \ \text{for every} \ i\leq j\leq n-i.
$$
\end{lemma}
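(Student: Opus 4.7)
The plan is to argue by contradiction, translating the claim $\bar{d}_j \geq 2$ into the statement that no distance vector in $\cD(d,n)$ can vanish at coordinate $j$, and then exploiting the already-established monotonicity and symmetry of the numbers $D(i)^n$.

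First I would fix $j$ with $i \leq j \leq n-i$ and suppose, aiming at a contradiction, that $\bar{d}_j < 2$. Since every component of a distance vector is a non-negative even integer (condition (ii) in Theorem \ref{prop: properties distance vectors}), this forces $\bar{d}_j = 0$, so there exists some $\bv \in \cD(d,n)$ with $v_j = 0$. Hence the vector $\bv$ witnesses a distance value that is bounded above by the maximum possible distance attainable by a vector in $\cD(n)$ with a zero in position $j$, namely $D(j)^n$. This gives
\begin{equation*}
d \;=\; \sum_{k=1}^{n-1} v_k \;\leq\; D(j)^n.
\end{equation*}

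Next I would show $D(j)^n \leq D(i)^n$. If $j \leq \lfloor n/2 \rfloor$, then $i \leq j \leq \lfloor n/2 \rfloor$, and Proposition \ref{prop: Order on the D(i)^n} yields the desired inequality directly. If instead $j > \lfloor n/2 \rfloor$, then $j \geq \lfloor n/2 \rfloor + 1 \geq \lceil n/2 \rceil$, so $n-j \leq \lfloor n/2 \rfloor$; combined with $j \leq n-i$, this gives $i \leq n-j \leq \lfloor n/2 \rfloor$. Applying the symmetry $D(j)^n = D(n-j)^n$ from Proposition \ref{prop: D(i)=D(n-i)} and then the monotonicity of Proposition \ref{prop: Order on the D(i)^n} to $n-j$, we again obtain $D(j)^n = D(n-j)^n \leq D(i)^n$.

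Chaining these two inequalities yields $d \leq D(j)^n \leq D(i)^n$, which contradicts the hypothesis $d > D(i)^n$. Hence $\bar{d}_j \geq 2$ for every $j$ in the range $i \leq j \leq n-i$, completing the proof. The whole argument is essentially a bookkeeping exercise: there is no real obstacle, the only point requiring a small amount of care is handling the case $j > \lfloor n/2 \rfloor$, where one must first reflect via the symmetry $D(j)^n = D(n-j)^n$ before invoking the monotonicity of the $D(\cdot)^n$ sequence, which is only stated for indices up to $\lfloor n/2 \rfloor$.
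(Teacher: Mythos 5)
Your argument is correct and is exactly the one the paper intends: the paper states this lemma without proof, noting that it "follows straightforwardly" from the definition of $D(i)^n$ together with Propositions \ref{prop: D(i)=D(n-i)} and \ref{prop: Order on the D(i)^n}, which are precisely the symmetry and monotonicity facts you invoke. Your case split on whether $j\leq\lfloor n/2\rfloor$ correctly handles the only delicate point.
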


By means of the previous lemma, and arguing as in Theorem \ref{theo: tighter bound d > D(i) type n}, whenever $d> D(i)^n$ holds, we obtain the next upper bound for $A_q^f(n, d)$.

\begin{theorem}\label{theo: tighter bound full}
If $d>D(i)^n$ for a given $1\leq i\leq \lfloor\frac{n}{2}\rfloor$, then
$$
A_q^f(n, d) \leq \min\{ A_q(n, \bar{d}_j, j)  \ | \ i\leq j\leq n-i\}.
$$
\end{theorem}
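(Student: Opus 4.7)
The plan is to iterate the argument of Theorem \ref{theo: tighter bound d > D(i) type n} over every index $j$ in the range $i\leq j\leq n-i$, applied to the full type vector $\type=(1,\ldots,n-1)$.

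First I would verify that the hypothesis $d>D(i)^n$ forces $d>D(j)^n$ for every such $j$. This is a direct consequence of the symmetry $D(j)^n=D(n-j)^n$ from Proposition \ref{prop: D(i)=D(n-i)} together with the monotonicity chain $D(1)^n\geq D(2)^n\geq\cdots\geq D(\lfloor n/2\rfloor)^n$ from Proposition \ref{prop: Order on the D(i)^n}: for $i\leq j\leq \lfloor n/2\rfloor$ the monotonicity gives $D(j)^n\leq D(i)^n<d$, while for $\lfloor n/2\rfloor\leq j\leq n-i$ the symmetry reduces matters to the index $n-j\in [i,\lfloor n/2\rfloor]$. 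This is precisely the content of the preceding lemma, which encodes the same fact in the equivalent form $\bar{d}_j\geq 2$.

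Second, for each fixed $j\in\{i,i+1,\ldots,n-i\}$, I would invoke Theorem \ref{theo: tighter bound d > D(i) type n} in the full type setting, where $D(j)^{(\type,n)}$ specializes to $D(j)^n$ and $t_j=j$. Since $d>D(j)^n$, the theorem immediately yields
$$
A_q^f(n,d)\leq A_q(n,\bar{d}_j,j).
$$
Taking the minimum over all admissible indices $j$ gives the stated bound.

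I do not foresee a substantive obstacle: the statement is essentially a compilation of the previously established bound over the whole interval on which it is valid. The only genuine work will be to pin down precisely that $[i,n-i]$ is the full set of indices $j$ for which $d>D(j)^n$ can be inferred from $d>D(i)^n$ alone, which is handled uniformly by the monotonicity and symmetry of the values $D(\cdot)^n$ noted above.
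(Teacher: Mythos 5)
Your proposal is correct and follows essentially the same route as the paper: the paper's proof likewise first establishes (via the lemma immediately preceding the theorem, itself a consequence of Propositions \ref{prop: D(i)=D(n-i)} and \ref{prop: Order on the D(i)^n}) that $d>D(j)^n$, equivalently $\bar{d}_j\geq 2$, for every $i\leq j\leq n-i$, and then applies the argument of Theorem \ref{theo: tighter bound d > D(i) type n} to each such index $j$ before taking the minimum. The only cosmetic difference is that you re-derive that lemma inline rather than citing it.
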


Using this last result when working with full flags gives us a bound as good as the one given in Theorem \ref{theo: tighter bound d > D(i) type n}, formulated for the general type. Moreover, in some cases, it even improves it, as we can see in the following example.

\begin{example}
For $n=6$ and the full type vector, consider the flag distance $d=16$, which satisfies $d=16 > D(1)^6=12$. Moreover, taking into account that $\cD(16, 6)=\{(2,4,4,4,2)\}$, it is clear that $\bar{d}_i=2$ for $i=1,5$ and $\bar{d_j}=4$ for $j=2,3,4$. Hence, Theorem \ref{theo: tighter bound d > D(i) type n}, leads to
$$
A_q^f(6,16) \leq  A_q(6, 2, 1) = |\cG_q(6,1)| = q^5+q^4+q^3+q^2+q+1
$$
(see (\ref{eq: card grassmannian})). On the other hand, by using Theorem \ref{theo: tighter bound full}, we obtain
$$
A_q^f(6,16) \leq A_q(6,4,2) = q^4+q^2+1,
$$
which improves the previous bound. Notice that the last equality just gives the cardinality of any $2$-spread code in $\bbF_q^6$, i.e., optimal constant dimension codes (of dimension $2$) having the maximum distance. These codes were introduced in \cite{ManGorRos2008}.
\end{example}

\section{A complete example}\label{sec: example}

In this section we illustrate how to combine all the elements introduced in this paper in order to exhibit relevant information about a flag code with a prescribed minimum distance $d$. To do so, we compute all the values $D(i_1, \dots, i_M)^{(\type, n)}$, defined in Section \ref{sec: remarkable values of the distance} for a specific choice of $n$ and $\type$. 

Let us fix $n=7$ and consider both the full type vector $(1, 2, 3, 4, 5, 6)$ and the type vector $\type=(t_1, t_2, t_3, t_4)=(1, 3, 5, 6)$. We start working with full flags and computing all the values $D(i_1, \dots, i_M)^7$, for every possible choice $ 1 \leq M \leq 6$ and indices $1\leq i_1 < \dots < i_M \leq 6$. As pointed out in Section \ref{sec: remarkable values of the distance}, these values are not only useful for the full type case  but also serve to extract conclusions for any  other flag variety on $\bbF_q^7$ (see Proposition \ref{prop: projection dist vectors many zeros}).

The following table shows all these distances $D(i_1, \dots, i_M)^7$, separated according to the number of zeros $1\leq M\leq 6$. We also exhibit the associated distance vector $\bD(i_1, \dots, i_M)^7$, the choice of ordered indices $1\leq i_1 < \dots < i_M\leq 6$ and the multiset of differences $\{\{ i_1, i_2-i_1, \dots, 7-i_M \}\}$. Recall that, as stated in Remark \ref{rem: ordered differences}, we can restrict ourselves to families of indices such that the differences $1\leq i_1 \leq i_2-i_1 \leq \dots \leq 7-i_M$ are also ordered. Any other choice of indices $1\leq i_1 < \dots < i_M\leq 6$ has an associated multiset of differences $\{\{ i_1, i_2-i_1, \dots, 7-i_M\}\}$ that already appears in these tables. For instance, to compute the value $D(1, 3,6)^7$, we just need to consider the  multiset
$$
\{\{1, 3-1, 6-3, 7-6\}\} = \{\{ 1, 2, 3, 1\}\}
$$ 
and order its elements as an increasing sequence $\{\{1, 1, 2, 3\}\}$. This multiset already appears in Table \ref{table: values D(i1, ..., iM)7}, associated to the choice of indices $(1,2,4)$. Hence, 
$$
D(1,3,6)^7=D(1,2,4)^7= 6.
$$

\begin{table}[H]
\begin{center}
\begin{tabular}{cccc}
\hline
\multicolumn{4}{|c|}{\cellcolor{Gray}$\mathbf{M=1}$}                                                                                              \\ \hline
\multicolumn{1}{|c}{$i_1$}           & \multicolumn{1}{|c}{Differences }                     & \multicolumn{1}{|c}{$\bD(i_1)^7$         } & \multicolumn{1}{|c|}{$D(i_1)^7$} \\ \hline
\multicolumn{1}{|c}{$1$}               & \multicolumn{1}{|c}{$\{\{1, 6\}\}$  }                    & \multicolumn{1}{|c}{$(0,2,4,6,4,2)$}       & \multicolumn{1}{|c|}{18}         \\ \hline
\multicolumn{1}{|c}{$2$}               & \multicolumn{1}{|c}{$\{\{2,5\}\}$ }                       & \multicolumn{1}{|c}{$(2,0,2,4,4,2)$ }      & \multicolumn{1}{|c|}{14}         \\ \hline
\multicolumn{1}{|c}{$3$}               & \multicolumn{1}{|c}{$\{\{3,4\}\}$ }                       & \multicolumn{1}{|c}{$(2,2,0,2,4,2)$}       & \multicolumn{1}{|c|}{12}         \\ \hline
%\multicolumn{1}{l}{}                 & \multicolumn{1}{l}{}             & \multicolumn{1}{l}{}  & \multicolumn{1}{l}{}            \\ [-1em] \hline
%\multicolumn{1}{c}{\phantom{$(i_1, \dots, i_6)$}}          & \multicolumn{1}{c}{\phantom{Differences } }                    & %\multicolumn{1}{c}{\phantom{$\bD(i_1, \dots, i_6)^7$}}          & \multicolumn{1}{c}{\phantom{$D(i_1, \dots, i_6)^7$}} \\
%\multicolumn{1}{c}{\phantom{$(1,2,3,4,5, 6)$}}               & \multicolumn{1}{c}{\phantom{$\{\{1,1,1,1,1,1,1\}\}$}}            & \multicolumn{1}{c}{\phantom{$(0,0,0,0,0,0,0)$}} & \multicolumn{1}{c}{\phantom{0}}           \\
%\end{tabular}
%\end{center}
%\end{table}
%\begin{table}[H]
%\begin{center}
%\begin{tabular}{cccc}
% \hline
\multicolumn{4}{|c|}{\cellcolor{Gray}$\mathbf{M=2}$}                                                                                              \\ \hline
\multicolumn{1}{|c}{$(i_1, i_2)$}           & \multicolumn{1}{|c}{Differences}                      & \multicolumn{1}{|c}{$\bD(i_1, i_2)^7$}        & \multicolumn{1}{|c|}{$D(i_1, i_2)^7$} \\ \hline
\multicolumn{1}{|c}{$(1,2)$}               & \multicolumn{1}{|c}{$\{\{1,1,5\}\}$}                     & \multicolumn{1}{|c}{$(0,0,2,4,4,2)$}       & \multicolumn{1}{|c|}{12}         \\ \hline
\multicolumn{1}{|c}{$(1,3)$}               & \multicolumn{1}{|c}{$\{\{1,2,4\}\}$}                    & \multicolumn{1}{|c}{$(0,2,0,2,4,2)$}       & \multicolumn{1}{|c|}{10}         \\ \hline
\multicolumn{1}{|c}{$(1,4)$}               & \multicolumn{1}{|c}{$\{\{1, 3,3\}\}$}                        & \multicolumn{1}{|c}{$(0,2,2,0,2,2)$ }      & \multicolumn{1}{|c|}{8}         \\ \hline
\multicolumn{1}{|c}{$(2,4)$}     & \multicolumn{1}{|c}{$\{\{2,2,3\}\}$} & \multicolumn{1}{|c}{$(2,0,2,0,2,2)$} & \multicolumn{1}{|c|}{8}           \\ \hline
%\multicolumn{1}{l}{}                 & \multicolumn{1}{l}{}             & \multicolumn{1}{l}{}  & \multicolumn{1}{l}{}            \\ [-1em] \hline
\multicolumn{4}{|c|}{\cellcolor{Gray}$\mathbf{M=3}$}                                                                                              \\ \hline
\multicolumn{1}{|c}{$(i_1, i_2, i_3)$}           & \multicolumn{1}{|c}{Differences  }                    & \multicolumn{1}{|c}{$\bD(i_1, i_2, i_3)^7$ }         & \multicolumn{1}{|c|}{$D(i_1, i_2, i_3)^7$} \\ \hline
\multicolumn{1}{|c}{$(1,2,3)$}               & \multicolumn{1}{|c}{$\{\{ 1,1,1,4\}\}$}            & \multicolumn{1}{|c}{$(0,0,0,2,4,2)$} & \multicolumn{1}{|c|}{8}           \\ \hline
\multicolumn{1}{|c}{$(1,2,4)$}               & \multicolumn{1}{|c}{$\{\{1,1,2,3\}\}$}            & \multicolumn{1}{|c}{$(0,0,2,0,2,2)$} & \multicolumn{1}{|c|}{6}           \\ \hline
\multicolumn{1}{|c}{$(1,3,5)$}               & \multicolumn{1}{|c}{$\{\{1,2,2,2 \}\}$ }            & \multicolumn{1}{|c}{$(0,2,0,2,0,2)$} & \multicolumn{1}{|c|}{6}           \\ \hline
%\multicolumn{1}{l}{}                 & \multicolumn{1}{l}{}             & \multicolumn{1}{l}{}  & \multicolumn{1}{l}{}            \\ [-1em] \hline
\multicolumn{4}{|c|}{\cellcolor{Gray}$\mathbf{M=4}$}                                                                                              \\ \hline
\multicolumn{1}{|c}{$(i_1, \dots, i_4)$}           & \multicolumn{1}{|c}{Differences }                     & \multicolumn{1}{|c}{$\bD(i_1, \dots, i_4)^7$ }         & \multicolumn{1}{|c|}{$D(i_1, \dots, i_4)^7$} \\ \hline
\multicolumn{1}{|c}{$(1,2,3,4)$}               & \multicolumn{1}{|c}{$\{\{1,1,1,1,3\}\}$}            & \multicolumn{1}{|c}{$(0,0,0,0,2,2)$} & \multicolumn{1}{|c|}{4}           \\ \hline
\multicolumn{1}{|c}{$(1,2,3,5)$}               & \multicolumn{1}{|c}{$\{\{ 1,1,1,2,2\}\}$}            & \multicolumn{1}{|c}{$(0,0,0,2,0,2)$} & \multicolumn{1}{|c|}{4}           \\ \hline
%\multicolumn{1}{l}{}                 & \multicolumn{1}{l}{}             & \multicolumn{1}{l}{}  & \multicolumn{1}{l}{}            \\ [-1em] \hline
\multicolumn{4}{|c|}{\cellcolor{Gray}$\mathbf{M=5}$}                                                                                              \\ \hline
\multicolumn{1}{|c}{$(i_1, \dots, i_5)$}           & \multicolumn{1}{|c}{Differences }                     & \multicolumn{1}{|c}{$\bD(i_1, \dots, i_5)^7$  }        & \multicolumn{1}{|c|}{$D(i_1, \dots, i_5)^7$} \\ \hline
\multicolumn{1}{|c}{$(1,2,3,4,5)$}               & \multicolumn{1}{|c}{$\{\{1,1,1,1,1,2\}\}$}            & \multicolumn{1}{|c}{$(0,0,0,0,0,2)$} & \multicolumn{1}{|c|}{2}           \\ \hline
%\multicolumn{1}{l}{}                 & \multicolumn{1}{l}{}             & \multicolumn{1}{l}{}  & \multicolumn{1}{l}{}            \\ [-1em] \hline
\multicolumn{4}{|c|}{\cellcolor{Gray} $\mathbf{M=6}$}                                                                                              \\ \hline
\multicolumn{1}{|c}{$(i_1, \dots, i_6)$}           & \multicolumn{1}{|c}{Differences  }                    & \multicolumn{1}{|c}{$\bD(i_1, \dots, i_6)^7$}          & \multicolumn{1}{|c|}{$D(i_1, \dots, i_6)^7$} \\ \hline
\multicolumn{1}{|c}{$(1,\dots, 6)$}               & \multicolumn{1}{|c}{$\{\{1,1,1,1,1,1,1\}\}$}            & \multicolumn{1}{|c}{$(0,0,0,0,0,0,0)$} & \multicolumn{1}{|c|}{0}           \\ \hline
\end{tabular}
\caption{Possible values of $D(i_1, \dots, i_M)^7$ for every $1\leq M \leq 6$.}\label{table: values D(i1, ..., iM)7}
\end{center}
\end{table}

The next table contains upper bounds for $A_q^f (7, d)$, for every value of $2\leq d\leq D^7=24$ and every prime power $q$. To compute them, we compare $d$ with specific values $D(i_1, \dots, i_M)^7$ provided in Table \ref{table: values D(i1, ..., iM)7}, for some $1\leq M\leq 6$. Notice that applying Theorem \ref{theo: bound general type} to different elections either of the integer $M$ or of indices $i_1< \dots < i_M$ provides, in general, different bounds. We proceed as in Corollary \ref{cor: bound d > all D(i) type n} and give the tightest bound for each case. Moreover, observe that the restriction to the families of ordered indices in Table \ref{table: values D(i1, ..., iM)7} is not a problem since any choice of $M$ indices $\{i_1, \dots, i_M\}$ and $\{ j_1, \dots, j_M\}$ giving equal multisets 
$$\{\{ i_1, i_2-i_1, \dots, n-i_M\}\}=\{\{ j_1, j_2-j_1, \dots, n-j_M\}\}$$
also provide the same bound 
$$
A_q^f(n, d) \leq  |\cF_q((i_1, \dots, i_M), n)| = |\cF_q((j_1, \dots, j_M), n)|
$$
because the cardinality of the flag variety 

\begin{equation*}
 \begin{gathered}
|\cF_q((i_1, \dots, i_M), n)|  =     \begin{bmatrix} n \\ i_1 \end{bmatrix}_q \begin{bmatrix} n -i_1 \\ i_2-i_1 \end{bmatrix}_q \cdots \begin{bmatrix} n-i_{M-1} \\ n-i_M \end{bmatrix}_q \nonumber \\
             =   \frac{(q^n-1)\dots (q-1)}{\big( (q^{i_1}-1) \dots (q-1)\big)\Big( \prod_{l=1}^{M} \big( (q^{i_l-i_{l-1}}-1) \dots (q-1)\big)\Big)\big( (q^{n-i_M}-1) \dots (q-1)\big)} 
\end{gathered}
\end{equation*}
just depends on the values $i_1, i_2-i_1, \dots, n-i_M.$

\begin{table}[H]
\begin{center}
\begin{tabular}{|c|l|l|}
\hline
{\cellcolor{Gray}$d$} & \multicolumn{1}{|c|}{\cellcolor{Gray} $D(i_1, \dots, i_M)^7$} &  \multicolumn{1}{|c|}{\cellcolor{Gray} Upper bound for $A_q^f (7, d)$}        \\ \hline
%$0$          &          $-$              & $1$                                                     \\ \hline & & \\ [-1em]
$2$          & $D(1,2,3,4,5,6)^7=0$      & $|\cF_q((1,2,3,4,5,6), 7)|= \dfrac{(q^7-1)\cdots(q^2-1)}{(q-1)^6}$                     \\ [10pt] \hline 
& &  \\ [-1em]
$4$          & $D(1,2,3,4,5)^7=2$      & $|\cF_q((1,2,3,4,5), 7)| = \dfrac{(q^7-1)\cdots(q^3-1)}{(q-1)^5}$ \\ [10pt]\hline 
& &  \\ [-1em]
$6$          & $D(1,2,3,4)^7=4$      & $|\cF_q((1,2,3,4), 7)|= \dfrac{(q^7-1)\cdots(q^4-1)}{(q-1)^4} $      \\ [10pt]\hline 
& &  \\ [-1em] 
$8$          & $D(1, 2, 4)^7=6$          & $|\cF_q((1,2,4), 7)| = \dfrac{(q^7-1)(q^6-1)(q^5-1)(q^2+1)}{(q-1)^3} $   \\ [10pt] \hline  & &  \\ [-1em]
$10$         & $D(1, 4)^7=8$             & $|\cF_q((1,4), 7)| = \dfrac{(q^7-1)(q^5-1)(q^3+1)(q^2+1)}{(q-1)^2}$  \\ [10pt] \hline
& &  \\ [-1em] 
$12$         & $D(1, 3)^7=10$            & $|\cF_q((1,3), 7)|= \dfrac{(q^7-1)(q^5-1)(q^4+q^2+1)}{(q-1)^2} $  \\ [10pt] \hline
& &  \\ [-1em]
$14$         & $D(1,2)^7=12$             & $|\cF_q((1,2), 7)|= \dfrac{(q^7-1)(q^6-1)}{(q-1)^2} $        \\  [10pt]\hline
& &  \\ [-1em]
$16-18$      & $D(2)^7= 14 $             & $|\cG_q(2,7)| = \dfrac{(q^7-1)(q^4+q^2+1)}{(q-1)}$ \\ [10pt] \hline
& &  \\ [-1em]
$20-24$      & $D(1)^7=18$               & $|\cG_q(1,7)|=\dfrac{(q^7-1)}{(q-1)}$  \\[10pt]  \hline
\end{tabular}
\caption{Bounds for $A_q^f(7,d)$ obtained by using Theorem \ref{theo: bound general type}.}\label{table: bounds n=7 full}
\end{center}
\end{table}

Notice that the bounds for $A_q^f(7, d)$ in Table \ref{table: bounds n=7 full} do not change for distances $16\leq d\leq 18$ or $20\leq d \leq 24$. In Table \ref{table: bounds n=7 full 2}, for each flag distance value $16\leq d \leq 24$, we indicate the specific choice  of $1\leq i\leq 6$ and  the corresponding value $\bar{d}_i$ (see (\ref{def: bar d_i and bar D_i})) that provide the best upper bound for $A^f_q(7, d)$ that can be obtained by means of Theorem \ref{theo: tighter bound full}.

\begin{table}[H]
\begin{center}
\begin{tabular}{|c|c|c|l|}
\hline
{\cellcolor{Gray}$d$} & {\cellcolor{Gray} $i$} & {\cellcolor{Gray} $\bar{d}_i$} & \multicolumn{1}{|c|}{\cellcolor{Gray}  Upper bound for $A_q^f (7, d)$}          \\ \hline
& & &  \\ [-1em]
$16$          & 2  & 2           & $A_q(7,2,2)= |\cG_q(2,7)| = \dfrac{(q^7-1)(q^4+q^2+1)}{(q-1)} $  \\ [10pt] \hline
& & &  \\ [-1em]
$18$          &  2 & 2           & $A_q(7,2,2)= |\cG_q(2,7)| = \dfrac{(q^7-1)(q^4+q^2+1)}{(q-1)}$                   \\ [10pt] \hline
& & &  \\ [-1em]
$20$          &  1 & 2           & $A_q(7,2,1)= |\cG_q(1,7)| = \dfrac{(q^7-1)}{(q-1)}$ \\ [10pt] \hline
& & & \\ [-1em]
$22$          &  4 & 4           & $A_q(7,4,2)\leq q(q^4+q^2+1)$        \\ [5pt] \hline
$24$          &  3 & 6           & $A_q(7,6,3)= q^4 +1$   \\ \hline
\end{tabular}
\caption{Bounds for $A_q^f(7,d)$ obtained by using Theorem \ref{theo: tighter bound full}.}\label{table: bounds n=7 full 2}
\end{center}
\end{table}

Notice that, as said in Remark \ref{rem: if d>2 better bound}, for  those cases in which $\bar{d}_i=2$, bounds in Tables \ref{table: bounds n=7 full} and \ref{table: bounds n=7 full 2} coincide. On the other hand, whenever $\bar{d}_i>2$, Table \ref{table: bounds n=7 full 2} gives better bounds. The next example illustrates how bounds in this table have been computed.
\begin{example}
 For $d=20$ and the full flag variety on $\bbF_q^7$, we have
$$
\cD(20,7)=\{(2,4,4,4,4,2), \ (2,2,4,6,4,2), \ (2,4,6,4,2,2)\}.
$$
As a consequence, it holds $\bar{d}_i=2$ for $i=1,2,5,6$ and $\bar{d}_j=4$ for $j=3,4$. Hence, Theorem \ref{theo: tighter bound full} leads to three possible upper bounds for $A_q^f(7, 20)$:
$$
\begin{array}{ccl}
A_q^f(7,20) &\leq & A_q(7,2,1)  = A_q(7,2,6) = |\cG_q(1,7)| = \frac{q^7-1}{q-1},  \\
A_q^f(7,20) &\leq & A_q(7,2,2)  = A_q(7,2,5) = |\cG_q(2,7)| = \frac{(q^7-1)(q^4+q^2+1)}{(q-1)}, \\ 
A_q^f(7,20) &\leq & A_q(7,4,3)  = A_q(7,4,4).
\end{array}
$$
Clearly the first bound is tighter than the second one. Moreover, by means of \cite[Th. 3.20]{TablesSubspaceCodes}, we know that
$$
A_q(7,4,3) \geq q^8+q^5+q^4-q-1 > q^6+\dots +q+1 = \frac{q^7-1}{q-1} = |\cG_q(1,7)|.
$$
Thus, Theorem \ref{theo: tighter bound full} leads to $A_q^f(7,20)\leq |\cG_q(1,7)|,$ as we see in Table \ref{table: bounds n=7 full 2}.
\end{example}

Using similar arguments we arrive to give  some upper bounds for $A_q^f(n,d)$ that coincide with the already presented in \cite{Kurz20}. See, for instance, Propositions 2.5, 2.6, 2.7, 4.4, 4.5, 4.6, 6.1, 6.2, 6.3 and 6.4 in that paper. 

Now, also for $n=7$ but for type vector $\type=(1,3,5,6)$, we apply the results presented in this paper with the goal of exhibiting upper bounds for the cardinality of flag codes of this specific type vector. We start computing the values $D(i_1,\dots, i_M)^{(\type, 7)}$, for $1\leq M\leq 4$, by applying Proposition \ref{prop: projection dist vectors many zeros} to the already computed values  $D(t_{i_1}, \dots, t_{i_M})^7$ in Table \ref{table: values D(i1, ..., iM)7} and their associated vectors $\bD(t_{i_1}, \dots, t_{i_M})^7$.

%\begin{small}
\begin{table}[H]
\begin{center}
\begin{tabular}{cccc}
\hline
\multicolumn{4}{|c|}{\cellcolor{Gray}$\mathbf{M=1}$}                                                                                              \\ \hline
\multicolumn{1}{|c}{$i_1$}           & \multicolumn{1}{|c}{$\bD(t_{i_1})^7$ }                     & \multicolumn{1}{|c}{$\bD(t_{i_1})^{(\type,7)}$         } & \multicolumn{1}{|c|}{$D(t_{i_1})^{(\type,7)}$} \\ \hline
\multicolumn{1}{|c}{$1$}               & \multicolumn{1}{|c}{$(0,2,4,6,4,2)$  }                    & \multicolumn{1}{|c}{$(0,4,4,2)$}       & \multicolumn{1}{|c|}{10}         \\ \hline
\multicolumn{1}{|c}{$2$}               & \multicolumn{1}{|c}{$(2,2,0,2,4,2)$ }                       & \multicolumn{1}{|c}{$(2,0,4,2)$ }      & \multicolumn{1}{|c|}{8}         \\ \hline
\multicolumn{1}{|c}{$3$}               & \multicolumn{1}{|c}{$(2,4,4,2,0,2)$ }                       & \multicolumn{1}{|c}{$(2,4,0,2)$}       & \multicolumn{1}{|c|}{8}         \\ \hline
\multicolumn{1}{|c}{$4$}               & \multicolumn{1}{|c}{$(2,4,6,4,2,0)$ }                       & \multicolumn{1}{|c}{$(2,6,2,0)$}       & \multicolumn{1}{|c|}{10}         \\ \hline
%\multicolumn{1}{l}{}                 & \multicolumn{1}{l}{}             & \multicolumn{1}{l}{}  & \multicolumn{1}{l}{}            \\ \hline
\multicolumn{4}{|c|}{\cellcolor{Gray}$\mathbf{M=2}$}                                                                                              \\ \hline
\multicolumn{1}{|c}{$(i_1, i_2)$}           & \multicolumn{1}{|c}{$\bD(t_{i_1}, t_{i_2})^7$ }                     & \multicolumn{1}{|c}{$\bD(i_1, i_2)^{(\type,7)}$} & \multicolumn{1}{|c|}{$D(i_1, i_2)^{(\type,7)}$} \\ \hline
\multicolumn{1}{|c}{$(1,2)$}               & \multicolumn{1}{|c}{$(0,2,0,2,4,2)$}                     & \multicolumn{1}{|c}{$(0,0,4,2)$}       & \multicolumn{1}{|c|}{6}         \\ \hline
\multicolumn{1}{|c}{$(1,3)$}               & \multicolumn{1}{|c}{$(0,2,4,2,0,2)$}                    & \multicolumn{1}{|c}{$(0,4,0,2)$}       & \multicolumn{1}{|c|}{6}         \\ \hline
\multicolumn{1}{|c}{$(1,4)$}               & \multicolumn{1}{|c}{$(0,2,4,4,2,0)$}                        & \multicolumn{1}{|c}{$(0,4,2,0)$ }      & \multicolumn{1}{|c|}{6}         \\ \hline
\multicolumn{1}{|c}{$(2,3)$}     & \multicolumn{1}{|c}{$(2,2,0,2,0,2)$} & \multicolumn{1}{|c}{$(2,0,0,2)$} & \multicolumn{1}{|c|}{4}           \\ \hline
\multicolumn{1}{|c}{$(2,4)$}               & \multicolumn{1}{|c}{$(2,2,0,2,2,0)$}                     & \multicolumn{1}{|c}{$(2,0,2,0)$}       & \multicolumn{1}{|c|}{4}         \\ \hline
\multicolumn{1}{|c}{$(3,4)$}               & \multicolumn{1}{|c}{$(2,4,4,2,0,0)$}                     & \multicolumn{1}{|c}{$(2,4,0,0)$}       & \multicolumn{1}{|c|}{6}         \\ \hline
%\multicolumn{1}{l}{}                 & \multicolumn{1}{l}{}             & \multicolumn{1}{l}{}  & \multicolumn{1}{l}{}            \\ \hline
\multicolumn{4}{|c|}{\cellcolor{Gray}$\mathbf{M=3}$}                                                                                              \\ \hline
\multicolumn{1}{|c}{$(i_1, i_2, i_3)$}           & \multicolumn{1}{|c}{$\bD(t_{i_1}, t_{i_2}, t_{i_3})^7$ }                     & \multicolumn{1}{|c}{$\bD(i_1, i_2, i_3)^{(\type,7)}$} & \multicolumn{1}{|c|}{$D(i_1, i_2, i_3)^{(\type,7)}$} \\ \hline
\multicolumn{1}{|c}{$(1,2,3)$}               & \multicolumn{1}{|c}{$(0,2,0,2,0,2)$}            & \multicolumn{1}{|c}{$(0,0,0,2)$} & \multicolumn{1}{|c|}{2}           \\ \hline
\multicolumn{1}{|c}{$(1,2,4)$}               & \multicolumn{1}{|c}{$(0,2,0,2,2,0)$}            & \multicolumn{1}{|c}{$(0,0,2,0)$} & \multicolumn{1}{|c|}{2}           \\ \hline
\multicolumn{1}{|c}{$(1,3,4)$}               & \multicolumn{1}{|c}{$(0,2,4,2,0,0)$ }            & \multicolumn{1}{|c}{$(0,4,0,0)$} & \multicolumn{1}{|c|}{4}           \\ \hline
\multicolumn{1}{|c}{$(2,3,4)$}               & \multicolumn{1}{|c}{$(2,2,0,2,0,0)$ }            & \multicolumn{1}{|c}{$(2,0,0,0)$} & \multicolumn{1}{|c|}{2}           \\ \hline
%\multicolumn{1}{l}{}                 & \multicolumn{1}{l}{}             & \multicolumn{1}{l}{}  & \multicolumn{1}{l}{}            \\ \hline
\multicolumn{4}{|c|}{\cellcolor{Gray}$\mathbf{M=4}$}                                                                                              \\ \hline
\multicolumn{1}{|c}{$(i_1, i_2, i_3, i_4)$}           & \multicolumn{1}{|c}{$\bD(t_{i_1}, t_{i_2}, t_{i_3}, t_{i_4})^7$ }                     & \multicolumn{1}{|c}{$\bD(i_1, i_2, i_3, i_4)^{(\type,7)}$} & \multicolumn{1}{|c|}{$D(i_1, i_2, i_3, i_4)^{(\type,7)}$} \\ \hline
\multicolumn{1}{|c}{$(1,2,3, 4)$}               & \multicolumn{1}{|c}{$(0,2,0,2,0,0)$}            & \multicolumn{1}{|c}{$(0,0,0,0)$} & \multicolumn{1}{|c|}{0}           \\ \hline
\end{tabular}
\caption{Possible values of $D(i_1, \dots, i_M)^{(\type, 7)}$.}
\end{center}
\end{table}
%\end{small}

Using this table and applying Corollary \ref{cor: bound d > D(i) type n}, we obtain the next list of bounds for $A_q^f(7, d, \type)$. As before, we provide the tightest possible upper bound for each value $d$. We do so by making a suitable choice of $1\leq M\leq 4$ and indices $1\leq i_1<\dots <i_M\leq 4$. This information  is collected in the next table.

\begin{table}[H]
\begin{center}
\begin{tabular}{|c|l|l|}
\hline
{\cellcolor{Gray}$d$} & \multicolumn{1}{|c|}{\cellcolor{Gray} $D(i_1, \dots, i_M)^{(\type, 7)}$} & \multicolumn{1}{|c|}{\cellcolor{Gray} Upper bound for $A_q^f (7, d, \type)$}          \\ \hline
%$0$          &          $-$              & $1$                                                     \\ \hline
& &  \\ [-1em]
$2$          & $D(1,2,3,4)^{(\type, 7)}=0$ & $|\cF_q(\type, 7)|= \dfrac{(q^7-1)(q^6-1)(q^5-1)(q^3-1)(q^2+1)}{(q-1)^4}$   \\ [10pt] \hline
& &  \\ [-1em]
$4$          & $D(1,2,4)^{(\type, 7)}=2$  & $|\cF_q((1,3,6), 7)| = \dfrac{(q^7-1)(q^6-1)(q^5-1)(q^2+1)}{(q-1)^3}$    \\ [10pt] \hline
& &  \\ [-1em]
$6$          & $D(2,4)^{(\type, 7)}=4$ & $|\cF_q((3,6), 7)|=  \dfrac{(q^7-1)(q^5-1)(q^3+1)(q^2+1)}{(q-1)^2} $        \\ [10pt] \hline
& &  \\ [-1em]
$8$          & $D(3,4)^{(\type, 7)}=6$ & $|\cF_q((5,6), 7)| = \dfrac{(q^7-1)(q^6-1)}{(q-1)^2} $   \\ [10pt] \hline
& &  \\ [-1em]
$10$         & $D(3)^{(\type, 7)}=8$ & $|\cG_q(5, 7)| = \dfrac{(q^7-1)(q^4+q^2+1)}{(q-1)}$   \\ [10pt] \hline
& &  \\ [-1em]
$12-14$         & $D(1)^{(\type, 7)}=10$ & $|\cG_q(1, 7)|= \dfrac{q^7-1}{q-1}$  \\ [10pt] \hline
\end{tabular}
\caption{Bounds for $A_q^f(7,d, \type)$ obtained by using Theorem \ref{theo: bound general type}. }\label{table: bounds n=7}
\end{center}
\end{table}

Last, for distance $d=14=D^{(\type, 7)}$, we can improve the previous bound. Observe that
$$
\cD(14, \type, 7)= \{ \bD^{(\type, 7)} \} = \{(2,6,4,2)\}.
$$
Thus, taking into account that $\bar{d}_2=6$, by using Theorem \ref{theo: tighter bound d > D(i) type n}, we obtain
$$
A_q^f(7,14, \type) \leq A_q(7,6,t_2) = A_q(7,6,3)= q^4+1,
$$
(see \cite[Th. 3.43]{TablesSubspaceCodes} for the last equality) which is a better bound than the one given in Table \ref{table: bounds n=7}.

\section{Conclusions}
In this paper we have addressed an exhaustive study of the flag distance parameter. To do so, we have introduced the concept of distance vector as a tool to represent how a flag distance value can be obtained from different combinations of subspace distances. Besides, we have characterized distance vectors in terms of certain conditions satisfied by their components.

We have presented the class of $(i_1, \dots, i_M)$-disjoint flag codes, as a generalization of the notion of disjointness given in \cite{CasoPlanar} and also established a connection between the property of being $(i_1, \dots, i_M)$-disjoint and the impossibility of having distance vectors with $M$ zeros, placed in the positions $i_1, \dots, i_M.$ This allows us to read some structural properties of flag codes in terms of their minimum distance and their sets of distance vectors. As a consequence of our study, we deduce upper bounds for the value $A^f_q(n, d, \type)$ for every choice of the parameters. These bounds strongly depend on the number of subspaces that can be shared by different flags of a code in $\cF_q(\type, n)$ with minimum distance $d$. We finish our work by explicitly computing our bounds for $A^f_q(7,d, \type)$ and two particular type vectors when we sweep all the possible distance values in each case.

\end{document}